\newtheorem{theorem}{Theorem}
\newtheorem{lemma}[theorem]{Lemma}
\newtheorem{fact}[theorem]{Fact}
\newtheorem{definition}[theorem]{Definition}
\theoremstyle{definition}
\newtheorem{remark}[theorem]{Remark}
\newcommand{\beq}{\begin{eqnarray}}
\newcommand{\eeq}{\end{eqnarray}}
\newcommand{\ket}[1]{{|#1\rangle}}
\newcommand{\bra}[1]{{\langle#1|}}
\newcommand{\ip}[2]{\langle #1 | #2 \rangle}
\newcommand{\ketbra}[2]{|#1\rangle\! \langle #2|}
\newcommand{\Tr}{\mbox{\rm Tr}}
\newcommand{\Id}{\mathbb{I}}
\newcommand{\hilb}{\mathcal{H}}
\DeclareMathOperator*{\Ex}{\mathbb{E}}
\newcommand{\states}{\mathrm{S}}
\newcommand{\dens}{\mathrm{D}}
\newcommand{\unitary}{\mathrm{U}}
\newcommand{\linmap}{\mathrm{L}}
\newcommand{\metalinmap}{\mathrm{T}}
\newcommand{\isometry}{\mathrm{J}}
\newcommand{\wt}[1]{\widetilde{#1}}
\newcommand{\what}[1]{\widehat{#1}}
\newcommand{\cM}{\ensuremath{\mathcal{M}}}
\newcommand{\C}{\ensuremath{\mathbb{C}}}
\newcommand{\R}{\ensuremath{\mathbb{R}}}
\newcommand{\Z}{\ensuremath{\mathbb{Z}}}
\DeclareMathOperator{\poly}{poly}
\newcommand{\eps}{\varepsilon}
\newcommand{\conj}[1]{\overline{#1}}
\newcommand{\advclass}{{\mathscr{A}}}
\newcommand{\flagspace}{\mathcal{F}}
\newcommand{\keyspace}{\mathcal{K}}
\newcommand{\msgspace}{\mathcal{M}}
\newcommand{\authspace}{\mathcal{Y}}
\newcommand{\auxspace}{\mathcal{Z}}
\newcommand{\tagspace}{\mathcal{T}}
\newcommand{\advspace}{\mathcal{Z}}
\newcommand{\superop}{{\mathcal{O}}}
\newcommand{\safespace}{\mathcal{S}}
\newcommand{\id}{{\mathcal{I}}}
\newcommand{\basis}{{\mathcal{B}}}
\newcommand{\auth}{{\mathsf{Auth}}}
\newcommand{\ver}{{\mathsf{Ver}}}
\newcommand{\QFT}{{\mathrm{QFT}}}
\newcommand{\rej}{\mathrm{REJ}}
\newcommand{\acc}{\mathrm{ACC}}
\newcommand{\attack}{\mathcal{E}}
\newcommand{\measure}{{\mathsf{Meas}}}
\begin{document}

\title{New security notions and feasibility results for authentication of quantum data}
\author{Sumegha Garg\thanks{\texttt{sumeghag@cs.princeton.edu}} \\Princeton \and Henry Yuen\thanks{\texttt{hyuen@cs.berkeley.edu}}\\UC Berkeley \and Mark Zhandry\thanks{\texttt{mzhandry@princeton.edu}}\\Princeton}
\date{}
\maketitle

\begin{abstract} We give a new class of security definitions for authentication in the quantum setting. These definitions capture and strengthen existing definitions of security against quantum adversaries for both \emph{classical} message authentication codes (MACs) and well as full quantum state authentication schemes. 
The main feature of our definitions is that they precisely \emph{characterize} the effective behavior of any adversary when the authentication protocol accepts, including correlations with the key. Our definitions readily yield a host of desirable properties and interesting consequences; for example, our security definition for full quantum state authentication implies that the entire secret key can be re-used if the authentication protocol succeeds.

%As a consequence of this, our definitions readily yield a host of desirable properties and interesting consequences, such as composability and security in the presence of quantum side information. Our security definition for full quantum state authentication implies that the entire secret key can be re-used if the authentication protocol succeeds, and also implies a conceptually simple protocol for quantum key distribution. 

Next, we present several protocols satisfying our security definitions. We show that the classical Wegman-Carter authentication scheme with $3$-universal hashing is secure against superposition attacks, as well as adversaries with quantum side information. We then present conceptually simple constructions of full quantum state authentication.

Finally, we prove a lifting theorem which shows that, as long as a protocol can securely authenticate the maximally entangled state, it can securely authenticate any state, even those that are entangled with the adversary. Thus, this shows that protocols satisfying a fairly weak form of authentication security automatically satisfy a stronger notion of security (in particular, the definition of Dupuis, et al (2012)).

%Our security definitions for full quantum authentication 
%
%We then analyze several natural authentication protocols and show that they realize our strong notions of secure quantum authentication. 
%
%
%We then give several feasibility results for our strong definitions. In particular, we show: (1) the classical Wegman-Carter authentication scheme with $3$-universal hashing is secure against superposition attacks, as well as adversaries with quantum side information; 
%
%(2) quantum authentication where the entire key can be re-used if verification is successful; (3) conceptually simple constructions of quantum authentication; and (4) a conceptually simple QKD protocol.
\end{abstract}

\newpage
\tableofcontents

\newpage

\section{Introduction}
\label{sec:intro}

Authenticating messages is a fundamental operation in classical cryptography.  A sender Alice wishes to send a message $m$ over an insecure channel to a receiver Bob, with the guarantee that the message was not tampered with in transit. To accomplish this, Alice appends a ``signature'' $\sigma$ to $m$ using a shared secret key $k$ and send the message/signature pair $(m,\sigma)$ to Bob.  Bob receives some potentially altered pair $(m',\sigma')$, and then verifies that $\sigma'$ is a valid signature of $m'$ under key $k$.  If verification passes, Bob accepts $m'$, and if verification fails, Bob ignores the message and discards it.  A secure authentication protocol guarantees the following: even if the adversary has arbitrarily tampered with the communication channel, as long as the adversary does not know the secret key $k$, then either Bob rejects with high probability, or the message he receives is $m$. Such a (symmetric key) authentication protocol is usually referred to as a Message Authentication Code (MAC).  As long as $k$ is only used to authenticate a single message, information-theoretic security can be achieved: no adversary -- even a computationally unbounded one -- can modify the message without detection~\cite{wegman1981new}. % Put another way, information-theoretic classical one-time MACs exist~\cite{wegman1981new}.
 %Intuitively, this means the adversary cannot do anything but forward the message as is or send a junk message that is always rejected.  We generally require that security holds for \emph{any} $m$, reflecting the possibility that the adversary may be able to affect the message being sent.  

Just as authentication is a fundamental operation in classical cryptography, it will continue to be an important tool in the coming age of quantum computers.  In this work, we investigate authentication in the quantum setting, and consider quantum attacks on both \emph{classical} authentication protocols, as well as full-fledged \emph{quantum} protocols for authenticating quantum data. What kinds of security guarantees can we hope for in the quantum setting? Various notions of security for authentication schemes against quantum attacks have been considered in the literature. However, as we will discuss below, these existing definitions do not fully capture security properties we would expect of a secure authentication scheme.

The contribution of our paper is three-fold: first, we present new security definitions for authentication in a quantum setting that strengthen previous definitions and address their limitations. Second, we prove interesting consequences of our stronger security definition for quantum authentication, such as information-theoretic key recycling and an easy protocol for quantum key distribution. Finally, we prove that several natural authentication protocols satisfy our security definitions.

\subsection{Quantum Attacks on Classical Protocols.}  A recent series of works~\cite{Boneh2011qrom,damgaard2013superposition,boneh2013quantum,boneh2013secure,zhandry2012qprfs,kaplan2016breaking} have studied quantum superposition attacks on classical cryptosystems.  In the setting of MACs, an adversary in such an attack is able to trick the sender into signing a superposition of messages.\footnote{One motivation for studying superposition attacks comes from the ``Frozen Smart-Card'' example~\cite{gagliardoni2015semantic}: real-world classical authentication systems are frequently implemented on small electronic devices such as RFID tags or a smart-cards. A determined and sophisticated attacker in possession of such a smart-card could try to perform a quantum ``side-channel attack'' on it: he places the device in a very low temperature environment, and attempts to query the device in quantum superposition. One would like to guarantee that even then the attacker is unable to, say, extract the secret key.}
 That is, the sender computes the map $\ket{m}\mapsto\ket{m,\sigma_m}$ in superposition, where $\sigma_m$ is the signature on $m$.  The adversary chooses some message superposition $\sum_m \alpha_m \ket{m}$, and the sender then applies the map, giving the adversary $\sum_m \alpha_m \ket{m,\sigma_m}$.  
At this point, it is unclear what the security definition should actually be.  Clearly, the adversary can tamper with the signed state: he can, for example, measure the entire state in the standard basis, obtaining the pair $(m,\sigma_m)$ with probability $|\alpha_m|^2$. Then $(m,\sigma_m)$ will pass verification, but will be different from the signed state the adversary received.  If the adversary can change the message state, what sort of guarantees can we hope for?

Boneh and Zhandry~\cite{boneh2013quantum} give the first definition of security for classical authentication against superposition attacks.   They argue that, at a minimum, the adversary given a single signed superposition should only be able to produce a single signed message; he should not be able to simultaneously produce two valid signed messages $(m,\sigma_m)$ and $(m',\sigma_{m'})$ for $m\neq m'$.  In the classical setting, this requirement is equivalent to the traditional MAC security definition.%: an adversary who intercepts the signed message $(m,\sigma)$ should not be able to produce another valid signed message , and is able to maul the message into $(m',\sigma')$, can also produce two signed messages: namely the original sender's message $(m,\sigma)$ and the mauled message $(m',\sigma')$.  

However, the Boneh-Zhandry definition has some unsatisfying properties. For example, consider the case where the sender only signs messages that start with the email address of some intended recipient, say, \texttt{bob@gmail.com}. Suppose the adversary tricks the sender into a signing a superposition of messages that all begin with \texttt{bob@gmail.com}, but then manipulates the signed superposition into a different superposition that includes valid signed messages that \emph{do not} start with \texttt{bob@gmail.com}. Clearly, this is an undesirable outcome. Unfortunately, the Boneh-Zhandry definition does not rule out such attacks --- it only disallows an adversary from producing $q + 1$ valid signed messages when given $q$ signed superpositions. The situation illustrated here, however, is that the adversary is given \emph{one} signed superposition, and now wants to produce \emph{one} valid signed message that was not part of the original superposition.

%--- due to the no-cloning theorem, the adversary cannot copy the original superposition, which is thus consumed while creating $(m,\sigma)$.  The result is that the adversary only obtains a single signed message, $(m,\sigma)$.  However, to violate Boneh-Zhandry, at least two signed messages are required.

Along similar lines, suppose an adversary tricks the sender into signing a uniform superposition on messages, and then produces a classical signed message $(m,\sigma)$.  From the sender's perspective, each message has weight $\frac{1}{|\cM|}$, where $\cM$ is the message space.  The sender cannot prevent the adversary from measuring the message state to produce $(m,\sigma)$ for a random $m$.  However, it is reasonable to insist as a security requirement that the adversary cannot bias the output of this measurement to obtain, say, $(m^*,\sigma_{m^*})$ with probability much higher than $\frac{1}{|\cM|}$.  Again, Boneh and Zhandry's definition does not preclude such a biasing, since the adversary only ever obtains a single signed message.  Thus, the Boneh-Zhandry definition does not capture natural non-malleability properties one would hope for from an authentication scheme.

Boneh and Zhandry's definition suffers from these weaknesses because it only considers what types of outputs the adversary can produce, ignoring the relationships between the output and the original signed state.  In the classical setting, the two approaches are actually equivalent, but in the quantum setting this is not the case.

\subsection{Quantum Authentication of Quantum Data.} We turn to the setting of schemes for authenticating quantum states. Barnum et al.~\cite{barnum2002authentication} was the first to study this, and they present a definition of non-interactive quantum authentication where, conditioned on the protocol succeeding, the sender has effectively teleported a quantum state to the receiver (provided that the probability of success is not too small).
%
%%if $\rho$ is the quantum ``message'' originally signed by the sender, after tampering by an adversary, the 
%
%%new definition which states, roughly, that conditioned on successful verification, the final state held by Bob will be close to the original message state $\rho$ (provided that the probability of successful 
%
%the adversary can do is forward the state as is, or send a junk state.  
They then give a scheme (called the \emph{purity testing scheme}) which attains this definition. Interestingly, they also show that quantum state authentication also implies quantum state \emph{encryption}.\footnote{By contrast, in the classical setting, message authentication does \emph{not} imply message encryption.} Subsequent works~\cite{ben2006secure,aharonov2008interactive,dupuis2012actively,broadbent2013quantum} presented some stronger security definitions that we will discuss momentarily. %several other authentication schemes which satisfy Barnum et al.'s security definitions (as well some newer definitions, which we will mention later).

Roughly speaking, a (private-key) quantum authentication scheme is a pair of keyed quantum operations $(\auth_k,\ver_k)$, where $k$ is a secret key shared by the sender and receiver, where $\auth_k$ is a map that takes in a quantum message state $\rho$, and outputs an authenticated state $\sigma$. The map $\ver_k$ is a verification operation that takes in a (possibly) tampered state $\wt{\sigma}$ and outputs a state $\wt{\rho}$, along with a flag $\acc$ or $\rej$ indicating whether the verification succeeded or failed. These maps are such that for all input states $\rho$ and all keys $k$, we have $\ver_k(\auth_k(\rho)) = \rho \otimes \ketbra{\acc}{\acc}$. 

Barnum, et al. define an $\eps$-secure authentication scheme to be such that for all pure message states $\ket{\psi}$, for all adversary operations $\superop$, we have that 
\begin{equation}
	\Tr \Big \{ \Big ( (\Id - \ketbra{\psi}{\psi}) \otimes \ketbra{\acc}{\acc} \Big) \tau \Big \} \leq \eps
\label{eq:barnum}
\end{equation}
where $\tau = \frac{1}{|\keyspace|} \sum_k \ver_k(\superop(\auth_k(\ketbra{\psi}{\psi})))$ is the final state (including the $\acc$ or $\rej$ flag) after authentication, adversary attack, and verification, averaged over the choice of secret key $k$. Informally, it states that either the receiver obtains a state that is close to $\ket{\psi}$, or rejects with high probability. Barnum, et al. also extend this definition to the setting when the input state is not pure but a mixed state $\rho$; it similarly guarantees that the final output state, \emph{conditioned} on verifier success, will be close to the input state $\rho$ (provided that the success probability is not too small). 

However, this security definition of~\cite{barnum2002authentication} has some shortcomings.

\medskip
\paragraph{Adversaries with quantum side information.} It does not handle the case of when the adversary has some \emph{quantum side information} about the message state. Let $(\auth_k,\ver_k)$ be some $\eps$-secure authentication scheme (according to the Barnum, et al. definition). Let $\msgspace$ denote the sender's message register that is to be authenticated. Suppose the attacker has a qubit register $\advspace$ such that the joint state of the $\advspace \msgspace$ registers is
 $$
 		\frac{1}{\sqrt{2}} \ket{0}^\advspace \ket{\psi_0}^\msgspace + \frac{1}{\sqrt{2}} \ket{1}^\advspace \ket{\psi_1}^\msgspace
 $$ 
where $\ket{\psi_0}$ and $\ket{\psi_1}$ are some orthogonal states. The adversary is thus \emph{entangled} with the message state, and the qubit $\advspace$ is its side information about the message. Note that the marginal input state (i.e. if we ignore the adversary's qubit) is the probabilistic mixture $\rho = \frac{1}{2} \ketbra{\psi_0}{\psi_0} + \frac{1}{2} \ketbra{\psi_1}{\psi_1}$. Using the secret key $k$ shared with the receiver, the sender applies $\auth_k$ to $\msgspace$ and transmits the authenticated state across the channel. The adversary then performs the following attack: controlled on the $\advspace$ qubit, it forwards the authenticated message state untouched (if the $\advspace$ qubit is $\ket{0}$), and otherwise replaces the authenticated state with some fixed junk state $\ket{\xi}$ and sends that instead (if the $\advspace$ qubit is $\ket{1}$). When the receiver performs the verification operation, the junk state will fail to pass with high probability, and thus \emph{conditioned} on success (which occurs with probability at least $1/2$), the receiver will have a state close to $\ket{\psi_0}$. However, this final state is far from the original mixture $\rho$ -- violating the conclusion of the Barnum, et al. security condition. In other words, the adversary significantly tampered with the message, but the receiver still accepted with relatively large probability.

Being unable to handle adversaries with quantum side information prevents the security definition from being \emph{composable}. In many situations we would like to use authentication not as a stand-along task, but as a primitive in a larger protocol -- indeed, quantum authentication \emph{has} been used as a primitive in schemes for delegated quantum computation, e.g.,~\cite{aharonov2008interactive,broadbent2013quantum}. Here, the ``adversary'' (which may be other components of the protocol) may generate the inputs to the authentication scheme, and thus be entangled with the message that is supposed to be authenticated. If authentication scheme satisfied a composable security definition, then we may use the security of the authentication primitive in a black box manner to argue the proper functioning of the larger protocol.

Recently, several works~\cite{dupuis2012actively,broadbent2016efficient} have proposed composable security definitions for quantum authentication -- that is, they handle adversaries with quantum side information. However, their security definitions have the following drawback:

\medskip
\paragraph{Averaging over the key.} The security definitions of~\cite{barnum2002authentication,dupuis2012actively,broadbent2016efficient} averages over the secret key shared between the sender and receiver. Suppose Alice sends Bob the authenticated state $\sigma_k = \auth_k(\rho)$ using key $k$. Bob receives a (possibly tampered) state $\wt{\sigma}_k$, and proceeds to verify the authentication. Let $\tau_k$ denote Bob's state \emph{conditioned} on successful verification. Roughly speaking, the security definitions of~\cite{barnum2002authentication,dupuis2012actively,broadbent2016efficient} refers to the \emph{average} state $\Ex_k \tau_k$; in particular, it states that $\Ex_k \tau_k$ is close to the original state $\rho$. Even putting aside the example given above (where the average state may not be close to $\rho$), this statement does not, by itself, imply that $\tau_k$ is close to the original state $\rho$ \emph{with high probability} over $k$, which is a stronger property, and quantifies how much the adversary learns about the secret key $k$. In other words, the definition does not \emph{a priori} rule out the possibility that each $\tau_k$ is far from $\rho$ for every $k$, yet the average happens to be $\rho$.

Later, we will show how taking into account the correlations between the key and the final state of the protocol yields interesting consequences -- such as the ability to reuse the key upon successful verification. Furthermore, security definitions that average over the key are strictly weaker than those that keep the key register (we give a brief argument for this in Section~\ref{sec:framework}).

%such as quantum key distribution from quantum state authentication. These consequences are not (known to be) implied by security definitions that average over the key.

\subsection{This Work}

In this work, we address the above limitations by giving new security notions for authentication in the quantum setting. More generally, we present an abstract framework of security for both classical and quantum authentication schemes that not only captures existing security definitions (such as the Boneh-Zhandry definition for classical protocols or the Barnum, et al. definition of quantum state authentication), but also is more demanding in that it strongly \emph{characterizes} the (effective) behavior of an adversary. In particular, the adversary may have access to quantum side information with the message state that is being authenticated. The characterization of the adversary's admissable actions is what allows us to easily deduce many desirable security properties (such as unforgeability, key reuse, and more). Furthermore, we will show that various natural authentication protocols satisfy our security definitions.

Our abstract security framework is inspired by the simulation paradigm in classical cryptography.  In our framework, one first defines a class $\advclass$ of \emph{ideal adversaries}. Intuitively, ideal adversaries are those that cannot be avoided in a real execution of an ideal authentication protocol, such as those that discard messages, or ones that carry out actions explicitly allowed by the protocol. For example, in the case of classical protocols, one can define the class of ideal adversaries to be ones that ``behave classically'' on the message state -- that is, they're restricted to measurements in the computational basis. In the case of quantum authentication, an ideal adversary can \emph{only} act on the side information, but otherwise acts as the identity on the authenticated message. 

An authentication protocol $P$ satisfies our security definition with respect to the class $\advclass$ if the behavior of any adversary (not necessarily ideal) in the protocol $P$ can be approximately simulated by an ideal adversary in $\advclass$. We take the most general notion of simulation possible: the joint state of the secret key, the message state after the receiver's verification procedure, and the quantum side information held by the adversary \emph{conditioned on successful verification} must be indistinguishable from the same joint state arising from the actions of \emph{some} ideal adversary from the class $\advclass$. Since our notion of simulation is so general, this implies that our security definitions satisfy security under \emph{sequential composition}; that is, the authentication protocols that realize our security definition can be securely composed with arbitary cryptographic protocols in a sequential fashion.

%Different classes of ideal adversaries give rise to different definitions of security. 

We now discuss how security for both classical authentication schemes and fully quantum authentication protocols can be defined in this framework.

\subsubsection{A new security definition for classical authentication}  
The Boneh-Zhandry definition focuses on what classical signed messages an adversary can produce, treating the superposition access to the sender as a tool to mount stronger attacks.  Here, we instead think of a classical protocol giving rise to a weak form of authentication of quantum messages, where a superposition is authenticated by classically signing each message in the superposition.  That is, a state $\sum_m \alpha_m \ket{m}$ is authenticated as the state $\sum_m \alpha_m \ket{m,\sigma_m}$.  The state is similarly verified in superposition by running the classical verification algorithm in superposition.

More generally, we think of the protocol acting on message states that may be entangled with an adversary. For example, the sender could sign the $\msgspace$ part of the state $\sum_m \alpha_m \ket{m}^{\msgspace} \otimes \ket{\varphi_m}^{\advspace}$, where the adversary has control of the $\ket{\varphi_m}^{\advspace}$ states. The signed state then would become $\sum_m \alpha_m \ket{m, \sigma_m}^{\msgspace \tagspace} \otimes \ket{\varphi_m}^{\advspace}$. Signing mixed states can also be expressed in this way, simply by purifying the mixture. By thinking of the protocol in this way, we are able to give security definitions that actually consider the relationship between the sender's signed state and the final state the adversary produces.

Clearly, such a classical scheme cannot fully protect the quantum state.  An adversary could, for example measure $(m,\sigma_m)$, or any subset of bits of the state, and keep the result of such a measurement in his own private space. This would not be detected by the classical verification procedure, but the final message would have been changed. %Also, the adversary can choose to replace the signed state with a junk state if the outcome of some measurement is $1$, and forward the signed state if the outcome of a measurement is $0$.  

Our security definition for classical protocols says that, roughly, an arbitrary adversary can be simulated by an ideal adversary that can only do the following: perform some (partial) measurement of the message in the computational basis, and controlled on the outcome of the partial measurement, perform some quantum operation on his own private qubits. 
% (perhaps perturbing his own private qubits based on the result of the measurement), and then perhaps conditionally replacing the state with junk.  
We also extend the definition to handle side information the adversary may have about the message state; for example, the adversary may possess the purification of the message state.  Thus, our definition is essentially the best one could hope for, since is disallows the adversary from doing anything other than operations that are trivially possible on \emph{any} classical protocol.

Our definition readily implies the Boneh-Zhandry security definition for one-time MACs, and does not suffer from the weakness of their definition\footnote{One limitation of our definition is that we consider the signature registers as being initialized by the signer.  Boneh and Zhandry, in contrast, allow the registers to be initialized by the adversary, with the signature being XORed into the registers}.  Finally, we show that the classical Wegman-Carter MAC that uses three-universal hashing is sufficient for achieving this strong security definition.

\subsubsection{Definitions for Quantum Authentication}  

We next turn to quantum protocols for authenticating quantum messages.  For general quantum protocols, the adversary can always do the following. He can always act non-trivially on his own private workspace -- the verification procedure can never detect this. Otherwise, he can forward the authenticated state as is, without recording any information about the state, or he can send junk to the receiver. Our strongest definition of security -- which we call \emph{total authentication} -- says that this is essentially all an adversary can do in a secure quantum authentication protocol. In other words, a real adversary in a total authentication protocol can be approximated by an ideal adversary that behaves trivially on the authenticated state. 

Our definition strengthens the definitions of~\cite{barnum2002authentication,dupuis2012actively,broadbent2016efficient}: not only do we we consider side information about the plaintext state, we also allow the receiver's view to include the authentication key as well as whatever information the adversary may learn about the key.  The ideal adversary must approximate the real adversary, even considering the entire key.  In contrast, existing definitions trace out the key --- either partially or entirely --- and therefore do not directly consider  \emph{arbitrary} information the adversary may learn about the key. Our security definition of total authentication thus rules out the possibility of the adversary learning significant information about the key. This fact has interesting consequences:
\begin{enumerate}
\item \textbf{Key reuse.} For example, our definition immediately implies that, upon successful verification by the receiver, the key can actually be completely recycled to authenticate a new message.  This is because, upon successful verification, the key is essentially independent of the adversary and can therefore be used again in the same protocol. This is contrast to the classical setting: in general keys cannot be recycled without computational assumptions. Furthermore, no prior definition for authentication of quantum data directly implies key re-usability, and no prior protocol for quantum messages gets full key re-usability upon successful verification. 

\item \textbf{A simple quantum key distribution protocol.} Our definition also gives a conceptually simple quantum key distribution (QKD) protocol\footnote{The observation that quantum authentication implies a form of QKD is due to Charlie Bennett.}.  Alice prepares a maximally entangled state, chooses a random key $k$, and authenticates half the state with the key.  She then sends the authenticated half to Bob, keeping the unauthenticated half to herself.  When Bob receives the state, he sends a ``received'' message back to Alice, who then sends the key $k$ to Bob.  Bob verifies the state using the key.  Even though the adversary eventually sees the authentication key $k$, he does not know the key when he intercepts the quantum state, and must therefore interact with the state without the key.  If Bob's verification passes, it implies, roughly, that the adversary could not have tampered with the state (by the security of total authentication); in particular, the adversary could not have learned any information about the maximally entangled state.  Therefore, Alice and Bob measure their halves of the maximally entangled state and obtain a shared key that is unknown to the eavesdropper.  If Bob's verification rejects, the two try again. Though this is not a practical QKD scheme (because any tampering by the adversary would cause Alice and Bob to abort), it is conceptually very simple and illustrates the power of our definitions. 
\end{enumerate}

\paragraph{A protocol satisfying total authentication.} We exhibit a protocol meeting our strong security notion. We present an authentication scheme based on \emph{unitary designs}, which are efficiently sampleable distributions over unitary matrices that behave much like the uniform distribution over unitaries when only considering low degree moments. %The protocol is simple: to authenticate a quantum state $\rho$, first the state $\rho$ is padded with some number of zero qubits, so that the state looks like $\rho \otimes \ketbra{0}{0}^{\otimes s}$. Then, using the secret key $k$ the sender selects a random unitary $U_k$ from an appropriate unitary design. The state $U_k \rho \otimes \ketbra{0}{0}^{\otimes s} U_k^\dagger$ is then sent across the quantum channel. To verify, the receiver applies the inverse unitary $U_k^\dagger$ and checks that the last $s$ qubits are all $0$. Recall that in the classical setting, padding a message before applying a non-malleable encryption gives authenticated encryption.  Thus, our construction of authentication from unitary designs generalizes this idea to the quantum setting.

\paragraph{Total authentication with key leakage.} We also give a definition of \emph{total authentication with key leakage}. This is a notion of security where the real adversary can be simulated by an ideal trivial adversary that only acts on its own private workspace, \emph{but in a manner that may depend on the key}. This is slightly weaker notion of security than total authentication, but it still implies simple QKD and some amount of key reuse. We note that the work of~\cite{hayden2011universal} essentially shows that the Barnum et al. protocol satisfies total authentication with (minor) key leakage. We also give a simple protocol that achieves this, based on the \emph{classical} Wegman-Carter authentication scheme.

%We give a simple authentication scheme that achieves this: first, one classically authenticates, performs the quantum Fourier transform, and classically authenticates again using a fresh key. We call this the ``Auth-QFT-Auth'' protocol, and show that it achieves total authentication where the key used in the second authentication may leak. In exchange we obtain secrecy for the quantum message as well as the key from the first authentication. This illustrates the surprising versatility of classical authentication schemes: combined with one quantum step (the Fourier transform), it can give full quantum authentication. This also gives a conceptually simple alternative to the protocol of~\cite{barnum2002authentication}.

\paragraph{A lifting theorem.} Finally, we prove an intriguing \emph{equivalence} between a very weak form of authentication security and a stronger notion. Specifically, this weak form of authentication security only guarantees that an authentication scheme is able to authenticate a \emph{single state}: a Bell state. Furthermore, this Bell state is unentangled with the adversary, and the security guarantee holds on average over the secret key. We prove a \emph{lifting theorem} that ``lifts'' this weak security to a much stronger one that shows the same authentication scheme, when augmented with a Pauli randomization step, is actually secure when authenticating \emph{arbitrary} messages, which might be entangled with the adversary! This stronger security notion still averages over the key, so it does not achieve total authentication. Nonetheless, we find it conceptually very interesting that such a lifting theorem holds.

%\paragraph{Prior work.} As mentioned before, several previous works have explored strengthening the security guarantees of quantum authentication schemes. In~\cite{hayden2011universal}, it is shown that the Barnum, et al. protocol based on purity testing satisfies \emph{Universally Composable} (UC) security, which is a very general security framework for composable protocols, first pioneered by Canetti~\cite{canetti2001universally} in the classical setting and later adapted to the quantum setting by~\cite{ben2005universal,unruh2010universally}. However, the security definition given by~\cite{hayden2011universal} does not imply full reuse of the key, but only \emph{partial} reuse of the key. \cite{hayden2011universal} also make the observation that the UC security of the Barnum, et al. authentication scheme implies quantum key distribution. 
%
%Another variant of composable security was given by~\cite{dupuis2012actively,broadbent2016efficient}, and it is most similar to ours, but their definition averages over the secret key, so therefore it doesn't imply key reuse. 
%
%Our unitary design scheme is very similar to the \emph{non-malleable quantum encryption} scheme based on unitary $2$-designs that was proposed by Ambainis, Bouda, and Winter~\cite{ambainis2009nonmalleable}. However, their scheme does not provide any authentication, and does not consider quantum side information.  

\paragraph{Outline.} In the next section we cover some preliminaries and notation. In Section~\ref{sec:framework} we formally present the fundamental security definitions used in our paper. In Sections~\ref{sec:classical_properties} and~\ref{sec:quantum_properties} we prove that our definitions satisfy the properties expected of authentication schemes. In Section~\ref{sec:threewise}, we analyze the security of the Wegman-Carter MAC with 3-universal hashing within our security framework. In Section~\ref{sec:qft} we present and analyze the Auth-QFT-Auth scheme. In Section~\ref{sec:design} we present and analyze the unitary design scheme. In Section~\ref{sec:lifting} we prove the lifting theorem.

\section{Preliminaries}
\label{sec:prelim}

\subsection{Notation}
\paragraph{Quantum information.} We assume basic familiarity with quantum computing concepts, such as states, measurements, and unitary operations. We will use caligraphic letters to denote Hilbert spaces, such as $\hilb$, $\msgspace$, $\tagspace$, $\keyspace$, and so on. We write $\states(\hilb)$ to denote the set of unit vectors in $\hilb$. For two Hilbert spaces $\hilb$ and $\msgspace$, we write $\linmap(\hilb,\msgspace)$ to denote the set of matrices that map $\hilb$ to $\msgspace$. We abbreviate $\linmap(\hilb,\hilb)$ as simply $\linmap(\hilb)$. The following are important subsets of $L(\hilb)$ that we'll use throughout this paper. 
\begin{itemize}
	\item $\dens(\hilb)$ denotes the set of \emph{density matrices} on $\hilb$; that is, positive semidefinite operators on $\hilb$ with unit trace. 
	\item $\dens_{\leq}(\hilb)$ denotes the set of \emph{subnormalized} density matrices on $\hilb$; that is, positive semidefinite operators on $\hilb$ with trace at most one.
	\item $\unitary(\hilb)$ denotes the set of unitary matrices acting on $\hilb$. For an integer $N$, we will also write $\unitary(N)$ to denote the set of all $N \times N$ complex unitary matrices.
\end{itemize}
Another important class of operators are \emph{isometries}: these are like unitaries, except that can append ancilla qubits. We say that a map $V \in \linmap(\hilb,\msgspace)$ is an isometry if for all vectors $\ket{\psi} \in \hilb$, $\| V\ket{\psi} \| = \| \ket{\psi} \|$. Note that this requires $\dim(\msgspace) \geq \dim(\hilb)$. We will let $\isometry(\hilb,\msgspace)$ denote the set of isometries in $\linmap(\hilb,\msgspace)$.

We use $\Id$ to denote the identity matrix. For a Hilbert space $\hilb$, we let $|\hilb|$ denote the dimension of $\hilb$. 

We will typically decorate states and unitaries with superscripts to denote which spaces they act on. For example, let $\authspace$ and $\advspace$ be two Hilbert spaces. Let $U \in \unitary(\authspace)$ and let $V \in \unitary(\authspace \otimes \advspace)$. Then when we write the product $U^{\authspace} V^{\authspace \advspace}$ we mean the $(U^{\authspace} \otimes \Id^{\advspace}) V^{\authspace \advspace}$; we will often omit mention of the identity unitary when it is clear from context.

Another convention is the implicit partial trace. For example, let $\rho^{\keyspace \msgspace} \in \dens(\keyspace \otimes \msgspace)$. Then $\rho^{\msgspace} = \Tr_{\keyspace}(\rho^{\keyspace \msgspace})$. Additionally, given a pure state $\ket{\rho}$, we will let $\rho$ denote the rank one density matrix $\ketbra{\rho}{\rho}$. 

\paragraph{Superoperators.} In this paper we will consider \emph{superoperators}, which are linear maps that act on a vector space of linear maps. For Hilbert spaces $\hilb$ and $\msgspace$, let $\metalinmap(\hilb,\msgspace)$ denote the set of all linear maps that take elements of $\linmap(\hilb)$ to $\linmap(\msgspace)$. While superoperators can be very general, we will focus on superoperators $\superop \in \metalinmap(\hilb,\msgspace)$ that are \emph{completely positive} and \emph{trace non-increasing}, which have the following characterization: there exists an alphabet $\Sigma$ and set of matrices (not necessarily Hermitian) $\{ A_a \}_{a \in \Sigma} \subset \linmap(\hilb,\msgspace)$ such that
\begin{enumerate}
	\item $\superop(X) = \sum_{a \in \Sigma} A_a X A_a^\dagger$	for all $X \in \linmap(\hilb)$, and
	\item $\sum_{a \in \Sigma} A_a^\dagger A_a \preceq \Id^\hilb$.
\end{enumerate}
For the rest of this paper, when we speak of superoperators, we will always mean completely positive, trace non-increasing superoperators. Although the definition of superoperators is rather abstract, they capture general quantum operations on arbitrary quantum states, including post-selection, as demonstrated by Stinespring's dilation theorem:
\begin{theorem}[Stinespring's dilation theorem]
\label{thm:stinespring}
A map $\superop \in \metalinmap(\hilb,\msgspace)$ is a completely positive, trace non-increasing superoperator if and only if there exists auxiliary Hilbert spaces $\advspace,\advspace'$, an isometry $V \in \isometry(\hilb \otimes \advspace,\msgspace \otimes \advspace')$, and a projector $\Pi$ acting on $\msgspace \otimes \advspace'$ such that for all density matrices $\rho \in \dens(\hilb)$, we have
$$
	\superop(\rho) = \Tr_{\advspace'}( \Pi V \rho V^\dagger \Pi).
$$
\end{theorem}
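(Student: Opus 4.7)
\medskip
\noindent\textbf{Proof proposal.} The plan is to prove the two directions separately. The easy direction is ($\Leftarrow$): assuming we have $V$ and $\Pi$ as described, I would verify that the map $\rho \mapsto \Tr_{\advspace'}(\Pi V \rho V^\dagger \Pi)$ is completely positive and trace non-increasing. Complete positivity follows by observing that the operation is a composition of manifestly completely positive steps: conjugation by $V$, conjugation by $\Pi$, and partial trace over $\advspace'$. Trace non-increase follows because $V$ is an isometry (so $V\rho V^\dagger$ has the same trace as $\rho$ when $\rho$ is extended by a fixed pure ancilla in $\advspace$), conjugation by the projector $\Pi$ can only decrease trace since $\Pi \preceq \Id$, and partial trace preserves trace.

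The substantive direction is ($\Rightarrow$). Here I start from the Kraus-form characterization given immediately before the theorem: there exists an alphabet $\Sigma$ and operators $\{A_a\}_{a \in \Sigma} \subset \linmap(\hilb,\msgspace)$ such that $\superop(X) = \sum_a A_a X A_a^\dagger$ and $\sum_a A_a^\dagger A_a \preceq \Id^\hilb$. The key idea is to mimic the standard Stinespring construction for the trace-preserving case, and then incorporate an extra ``failure branch'' that absorbs the defect $\Id - \sum_a A_a^\dagger A_a$, using the projector $\Pi$ to post-select away this branch.

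Concretely, I would let $\advspace = \C$ (so that $\rho$ is implicitly tensored with a fixed one-dimensional ancilla) and introduce a fresh symbol $\bot \notin \Sigma$. I would let $\advspace'$ be a Hilbert space large enough to contain an orthonormal set $\{|a\rangle\}_{a \in \Sigma}$ together with an orthogonal subspace isomorphic to $\hilb$, the latter indexed by the symbol $\bot$. Setting $B = (\Id^\hilb - \sum_a A_a^\dagger A_a)^{1/2}$ (which is well-defined and positive semidefinite by the Kraus inequality), I would define the linear map $V$ by
\[
V \ket{\psi}^\hilb = \sum_{a \in \Sigma} (A_a \ket{\psi})^\msgspace \otimes \ket{a}^{\advspace'} + \ket{0}^\msgspace \otimes (B\ket{\psi})^\hilb \otimes \ket{\bot}^{\advspace'},
\]
where $\ket{0} \in \msgspace$ is any fixed unit vector. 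A direct computation using $\sum_a A_a^\dagger A_a + B^2 = \Id^\hilb$ shows that $V^\dagger V = \Id^\hilb$, so $V$ is an isometry. I would then let $\Pi = \Id^\msgspace \otimes \sum_{a \in \Sigma} |a\rangle\langle a|$ be the projector onto the ``success'' subspace of $\advspace'$. Expanding $\Pi V \rho V^\dagger \Pi = \sum_{a,a' \in \Sigma} A_a \rho A_{a'}^\dagger \otimes \ketbra{a}{a'}$ and tracing out $\advspace'$ recovers exactly $\sum_a A_a \rho A_a^\dagger = \superop(\rho)$, as required.

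The main obstacle I anticipate is a purely cosmetic one: the ``junk'' vector $B\ket{\psi}$ naturally lives in $\hilb$ rather than in $\msgspace$, so some care is needed in how one enlarges $\advspace'$ (or alternatively $\msgspace$) to accommodate it without disturbing the codomain $\msgspace$ that is fixed by the statement. I would address this by absorbing the $\hilb$-valued junk into the $\bot$-branch of $\advspace'$ as written above, which keeps the codomain of $V$ honestly of the form $\msgspace \otimes \advspace'$. Apart from this bookkeeping, the remaining verifications are routine linear algebra: confirming that $V$ is an isometry and that the projection-and-partial-trace recovers the original Kraus sum.
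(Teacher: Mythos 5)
Your proposal is correct, and the construction is the standard one: the paper states Stinespring's dilation theorem as a background fact in the Preliminaries without proof, so there is nothing in the paper to compare against. Your forward-direction construction (adjoining the defect operator $B = (\Id^\hilb - \sum_a A_a^\dagger A_a)^{1/2}$ on an orthogonal $\bot$-branch of $\advspace'$, taking $\advspace = \C$, and letting $\Pi$ project onto the span of the $\ket{a}$ labels) verifies correctly, and your handling of the type mismatch for $B\ket{\psi}$ is fine. One tiny remark: since the paper \emph{defines} completely positive trace non-increasing maps via the Kraus characterization, for the converse direction it is cleanest to exhibit the Kraus operators explicitly, e.g.\ $A_a = (\Id^\msgspace \otimes \bra{a}^{\advspace'})\,\Pi V\,(\Id^\hilb \otimes \ket{\phi_0}^{\advspace})$ for an orthonormal basis $\{\ket{a}\}$ of $\advspace'$ and a fixed ancilla state $\ket{\phi_0}$, and check $\sum_a A_a^\dagger A_a \preceq \Id^\hilb$ from $V^\dagger \Pi V \preceq \Id$; your composition-of-CP-maps argument is equivalent but leans on the equivalence the theorem is meant to establish.
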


%An alternative perspective on superoperators is the following. Since $\superop$ is linear on density matrices, it specified by a collection of complex numbers $\superop_{(x,y),(x',y')}$ where $\superop$ maps a density matrix $\rho_{x,x'}$ to $\tau_{y,y'}=\sum_{x,x'}\superop_{(x,y),(x',y')}\rho_{x,x'}$.  The matrix $\superop$ then has the following properties:
%
%\begin{itemize}
%	\item $\superop$ can be taken to be Hermitian.  Moreover, the values $\superop_{(x,y),(x',y)}$ can be taken to be real, for all $x,x',y$.  
%	\item $\superop$ is positive semi-definite.  This is equivalent to the positivity condition.
%	\item Let $M_{x,x'}=\sum_y \superop_{(x,y),(x',y')}$.  Then exists a Hermitian, positive semi-definite matrix $N_{x,x'}$ such that $M+N=\Id$.  This is implied by the trace non-increasing property.  Moreover, trace non-increasing follows from this property and positivity.
%\end{itemize}
%
%Therefore, any matrix $\superop$ satisfying the above also defines a superoperator. We also refer the reader to the excellent lecture notes of Watrous on quantum information for more information regarding superoperators (as well as many other things).

\paragraph{Matrix norms and distance measures.} We will make use of several matrix norms and distance measures in this paper. 

Given a (not necessarily unit) vector $\ket{\psi} \in \hilb$, we use $\| \ket{\psi} \|_2$ to denote the Euclidean norm of $\ket{\psi}$. 

The most matrix norm important is the \emph{trace norm} of a linear operator $X$, defined to be $\| X \|_1 = \Tr(\sqrt{X^\dagger X})$. Correspondingly, the \emph{trace distance} between density matrices $\rho,\sigma$ is defined to be $\| \rho - \sigma \|_1$. The operational significance of the trace distance is that $\| \rho -\sigma \|_1$ denotes the maximum bias with which one can distinguish between $\rho$ and $\sigma$ using any quantum operation.

The next norm we will make use of is the \emph{Frobenius norm} of a linear operator $X$, which is defined to be $\|X\|_2 = \sqrt{\Tr(X^\dagger X)}$. A useful property of the Frobenius norm is that $\|X\|_2 = \sqrt{\sum_{ij} |X_{ij}|^2}$, where the sum is over all the matrix entries of $X$ (with respect to any basis).

The \emph{operator norm} (also known as the \emph{spectral norm}) of an operator $X \in \linmap(\hilb)$ is to defined to be $\|X \|_\infty = \sup_{\ket{v} \in \states(\hilb)} \| X\ket{v} \|_2$, where the supremum is over all unit vectors in $\hilb$.

\begin{fact}
\label{fact:pure_to_dens}
	Let $\ket{\psi},\ket{\theta} \in \states(\hilb)$. Then 
	$$
		\| \psi - \theta \|_1 \leq 2 \| \ket{\psi} - \ket{\theta} \|_2
	$$
	where recall that $\psi = \ketbra{\psi}{\psi}$ and $\theta = \ketbra{\theta}{\theta}$.
\end{fact}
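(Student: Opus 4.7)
The plan is to prove the inequality by directly decomposing the rank-two operator $\psi - \theta$ into a sum of two rank-one operators, and then bounding the trace norm of each piece separately.

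First I would write
\[
	\psi - \theta \;=\; \ketbra{\psi}{\psi} - \ketbra{\theta}{\theta} \;=\; \ket{\psi}\bigl(\bra{\psi} - \bra{\theta}\bigr) + \bigl(\ket{\psi} - \ket{\theta}\bigr)\bra{\theta}.
\]
By the triangle inequality for the trace norm,
\[
	\| \psi - \theta \|_1 \;\leq\; \bigl\| \ket{\psi}(\bra{\psi}-\bra{\theta}) \bigr\|_1 + \bigl\| (\ket{\psi}-\ket{\theta})\bra{\theta} \bigr\|_1.
\]
Then I would use the elementary fact that for any vectors $\ket{a}, \ket{b}$, the rank-one operator $\ket{a}\bra{b}$ has trace norm $\|\ket{a}\|_2 \cdot \|\ket{b}\|_2$ (its only nonzero singular value is $\|\ket{a}\|_2 \|\ket{b}\|_2$). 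Applying this to each term, and noting that $\|\ket{\psi}\|_2 = \|\ket{\theta}\|_2 = 1$ while $\|\bra{\psi} - \bra{\theta}\|_2 = \|\ket{\psi} - \ket{\theta}\|_2$, each of the two terms equals $\|\ket{\psi} - \ket{\theta}\|_2$, giving the claimed bound.

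There is no real obstacle here — the only small point to verify is the trace norm identity for rank-one operators, which follows immediately from the singular value decomposition $\ket{a}\bra{b} = \|\ket{a}\|_2 \|\ket{b}\|_2 \cdot \ket{\hat a}\bra{\hat b}$ where $\ket{\hat a}, \ket{\hat b}$ are the corresponding unit vectors. The argument is entirely a two-line exercise in the triangle inequality, and the factor of $2$ arises exactly from the two terms in the decomposition.
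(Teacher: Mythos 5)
Your proof is correct and complete: the rank-one decomposition $\psi - \theta = \ket{\psi}(\bra{\psi}-\bra{\theta}) + (\ket{\psi}-\ket{\theta})\bra{\theta}$ is valid, the trace norm of a rank-one operator is indeed the product of the Euclidean norms of its two factors, and the triangle inequality then yields exactly the factor of $2$. The paper states this fact without proof, so there is no argument to compare against; your route is the standard one (an equally common alternative derives $\|\psi-\theta\|_1 = 2\sqrt{1-|\ip{\psi}{\theta}|^2}$ and bounds that by $2\|\ket{\psi}-\ket{\theta}\|_2$, but your version is, if anything, more elementary).
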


%\section{Definitions}
\subsection{Basic definitions for authentication}
\label{sec:defs}

\paragraph{Spaces.} We let $\keyspace$ denote the \textbf{key space}, $\msgspace$ denote the \textbf{message space}, $\authspace$ denote the \textbf{authenticated space}, and $\flagspace$ to denote the \textbf{flag space}. The flag space $\flagspace$ is a two-dimensional Hilbert space spanned by orthogonal states $\ket{\acc}$ and $\ket{\rej}$. The space $\advspace$ is the \textbf{private space of the adversary}. We will let $\safespace$ denote the a registers held by the sender and receiver that, during the execution of the authentication protocol, are not communicated nor acted upon by the sender, receiver, or adversary.

\paragraph{Authentication scheme.}
An authentication scheme is a pair of keyed superoperators $\auth,\ver$ where
\begin{itemize}
	\item $\auth_k$ for $k\in\keyspace$ is a superoperator mapping $\dens(\msgspace)$ to $\dens(\authspace)$.
	\item $\ver_k$ for $k\in\keyspace$ is a superoperator mapping $\dens(\authspace)$ to $\dens(\msgspace \otimes \flagspace)$.
\end{itemize}
satisfying the correctness requirements that for any quantum state $\rho\in\dens(\msgspace)$, for all keys $k \in \keyspace$, $\ver_k(\auth_k(\rho)) = \rho \otimes \ketbra{\acc}{\acc}$.
%\begin{equation}
%\left\|\left(\Ex_k\ketbra{k}{k}\otimes \ver_k(\auth_k(\rho))\right) - \Ex_k\ketbra{k}{k}\otimes\rho\right\|_1 < \delta
%\label{eq:auth_correctness}
%\end{equation}
%where $\|\cdot\|_1$ denotes the trace norm. 

We will also use $\auth$ and $\ver$ to denote the operators
$$
	\auth(\cdot) = \sum_k \ketbra{k}{k} \otimes \auth_k(\cdot) \qquad \qquad \ver(\cdot) = \sum_k \ketbra{k}{k} \otimes \ver_k(\cdot).
$$

\paragraph{Some simplifying assumptions.} This definition of authentication scheme is more general than we need in this paper. Throughout this work, we shall work with a simplified model of authentication schemes: first, we will assume that $\auth_k$ behaves as an isometry taking $\msgspace$ to $\authspace$ (i.e. it isn't probabilistic). Let $\mathcal{V}_k$ denote the subspace of the Hilbert space $\authspace$ that is the image of $\auth_k$, let $\Pi_{\mathcal{V}_k}$ denote the projector onto the space $\mathcal{V}_k$, and let $\auth_k^{-1}$ denote the inverse isometry that maps $\mathcal{V}_k$ to $\msgspace$. In this case, a canonical way to define the $\ver_k$ superoperator is as follows:
\begin{align}
\label{eq:ver_full}
	\rho \mapsto \big ( \auth_k^{-1} \circ \Pi_{\mathcal{V}_k} \big)(\rho) \otimes \ketbra{\acc}{\acc}^{\flagspace} + \Tr((\Id - \Pi_{\mathcal{V}_k}) \, \rho) \, \frac{\Id_{\msgspace}}{|\msgspace|} \otimes \ketbra{\rej}{\rej}^{\flagspace}
\end{align}

Here, $\auth_k^{-1} \circ \Pi_{\mathcal{V}_k}$ denotes the operation that first applies the projection $\Pi_{\mathcal{V}_k}$ to the state, followed by the inverse isometry $\auth_k^{-1}$. The state $\frac{\Id_{\msgspace}}{|\msgspace|}$ is the maximally mixed state on the message space. In other words, the verification procedure first checks that the received state (which resides in $\authspace$) is supported on the subspace of valid signed states $\mathcal{V}_k$. If so, then it inverts the authentication isometry to obtain an unsigned message state, and sets the $\flagspace$ register to $\ket{\acc}$. Otherwise, it replaces the state with a uniformly random message state, and sets the $\flagspace$ register to $\ket{\rej}$. 

However in this paper we are mostly concerned with the output of the $\ver_k$ procedure in the \emph{accepting} case. For technical convenience then, throughout this paper we will treat $\ver_k$ as the following superoperator mapping $\dens(\authspace)$ to $\dens(\msgspace)$:
$$
	\ver_k(\rho) = \big ( \auth_k^{-1} \circ \Pi_{\mathcal{V}_k} \big)(\rho).
$$
In other words, it only outputs the $\ket{\acc}$ part of~\eqref{eq:ver_full}, and does not output a $\acc$ or $\rej$ flag. Furthermore, notice that this superoperator is not trace preserving; the trace of $\ver_k(\rho)$ is equal to the probability that $\rho$ was accepted by the verification procedure defined in~\eqref{eq:ver_full}. Thus one can view $\ver_k$ as a ``filter'' that only accepts states that were properly authenticated. 

We stress, however, that these simplifying assumptions are not crucial to our results -- it is mostly for notational convenience that we treat $\ver_k$ as a filter.

\paragraph{Classical Authentication.} In a classical authentication protocol, the authentication operator $\auth_k$ is specified by a classical function $\auth_k:\msgspace \mapsto \authspace$  acting on the computational basis, run in superposition on the input state. The verification operator behaves the same as described above: $\ver_k$ projects onto the subspace of $\authspace$ spanned by classical strings $\auth_k(m)$ for all $m \in \msgspace$, and then applies the inverse map $\auth_k^{-1}$.

\paragraph{Message authentication codes.} A message authentication code (or MAC) is special type of classical authentication scheme $(\auth,\ver)$ where for a message $m$, $\auth_k(m) = (m,\sigma(k,m))$, where we call $\sigma(k,m)$ the \emph{message tag}.  We treat $\ver_k$ as an operator that projects out messages that do not have valid tags, and for messages with valid tags, $\ver_k$ will strip the tags away:
$$
	\ver_k = \sum_m \ketbra{m}{m,\sigma(k,m)}.
$$

\paragraph{Adversaries.} We model adversaries in the following way: the adversary prepares the initial message state $\ket{\rho}^{\msgspace \safespace \advspace}$, where we can assume that the adversary possesses the purification of $\rho^{\msgspace \safespace}$. After the state is authenticated with a secret key $k$, the adversary gets to attack the $\authspace \advspace$ spaces with an arbitrary completely positive trace non-increasing superoperator $\superop$. After this attack, the state is un-authenticated with the same key $k$.

%The reader may notice that this is not the most general scenario: the sender and receiver may possess some parts of the message state $\rho$ that do not undergo authentication and transmission over the communication channel, and also is not accessible to the adversary. However, for notational convenience we shall treat these parts of the message state as part of the adversary register $\advspace$; it will be evident 

We don't require the superoperator $\superop$ to be trace preserving; this is to allow adversaries to \emph{discard} certain measurement outcomes (or, alternatively, \emph{post-select} on measurement outcomes, without renormalizing). While this may seem to give the adversary far too much power, in our security definitions we take into account the probability of the event that the adversary post-selects on. If this probability is too small, the security guarantees are meaningless, which is necessary. Allowing for superoperators to be trace non-preserving will help make our definitions clean to state.

\paragraph{A remark about the sender and receiver's private register $\safespace$.} The reader may wonder why we do not allow the sender, receiver, nor adversary to act upon the $\safespace$ register during the execution of the authentication protocol. The register $\safespace$ is supposed to model entanglement the sender and receiver may keep during the protocol. The important aspect of it is that the adversary does \emph{not} have access to this side information.

If, when analyzing the authentication scheme in the context of a larger protocol in which the sender/receiver \emph{do} act upon the register $\safespace$, we can assume that during the authentication phase, the sender and receiver do not touch $\safespace$, but wait until the authentication protocol is over. Thus we can analyze the behavior of the authentication protocol without this action. 

\section{Security Framework for Quantum Authentication}
\label{sec:framework}
%We now give a framework of security definition for authentication protocols in the quantum setting, involving adversaries that may possess side information that is entangled with the messages. Our security definitions generalizes some of the known classical and quantum authentication definitions.  

We present our security definitions using the real/ideal paradigm. Let $(\auth,\ver)$ be an authentication protocol, with key space $\keyspace$, message space $\msgspace$, and authenticated space $\authspace$. 

\begin{definition} 
\label{def:general_sec}
Let $(\auth,\ver)$ be an authentication scheme. Let $\mathscr{A} \subseteq \metalinmap(\authspace \advspace,\authspace \advspace)$ denote a set of ideal adversaries. The scheme $(\auth,\ver)$ is \emph{$\eps$-reduces to $\mathscr{A}$-adversaries} iff the following holds: for all initial message states $\ket{\rho}^{\msgspace \safespace \advspace}$, for all adversaries $\superop \in \metalinmap(\authspace \advspace,\authspace \advspace)$, there exists an ideal adversary $\id \in \mathscr{A}$ such that the following (not necessarily normalized) states are $\eps$-close in trace distance:
\begin{itemize}
	\item (Real experiment) $\Ex_k\ketbra{k}{k}\otimes \left[ \ver_k \circ \superop \circ \auth_k \right](\rho^{\msgspace \safespace \auxspace})$
	\item (Ideal experiment) $\Ex_k\ketbra{k}{k}\otimes \left[ \ver_k \circ \id \circ \auth_k \right] (\rho^{\msgspace \safespace \auxspace})$
\end{itemize}
where $\auth_k$ acts on $\msgspace$, $\ver_k$ acts on $\authspace$, and both act as the identity on $\safespace \advspace$.
\end{definition}

Intuitively, our security definition states that for an authentication scheme $(\auth,\ver)$ that is $\mathscr{A}$-secure, for all initial message states $\rho^{\msgspace \safespace \advspace}$, an \emph{arbitrary} adversary that acts on an authenticated state $\auth_k(\rho^{\msgspace  \safespace  \advspace})$ is \emph{reduced} to an ``ideal adversary'' in $\advclass$; behaving differently will cause the verification procedure to abort. In other words, ``all the adversary can do'' is behave like some adversary in the class $\advclass$.

%``all an adversary can do'' when given the authenticated state  is to perform one of the ``ideal attacks'' in $\mathscr{A}$; anything else will cause the verification procedure to abort. In other words, the state of the message and side information after verification is well-approximated by some ideal attack applied to the original authenticated state.

\paragraph{A comment about normalization.} It is important that the states of the real experiment and ideal experiment are not requiried to have unit trace. This is because their trace corresponds to the probability that the verification procedure accepts. If the probability of acceptance is smaller than $\eps$, then the security guarantee is vacuous. Intuitively, this corresponds to situations such as the adversary successfully guessing the secret key $k$, so we cannot expect any security guarantee in that setting. However, if the probability of acceptance is significantly larger than $\eps$, then we can condition on acceptance, and still obtain a meaningful security guarantee: the distance between the (renormalized) real experiment and ideal experiments is small.

\medskip \medskip

We now specialize the above definition to some important classes of ideal adversaries that we will consider in this paper. Note that for two classes of ideal adversaries $\advclass$ and $\advclass'$, if $\advclass \subset \advclass'$, then an authentication scheme reducing to $\advclass$-adversaries implies reducing to $\advclass'$-adversaries. Hence reducing to $\advclass$-adversaries is a stronger security guarantee.

%\paragraph{Basis-respecting adversaries.} 
%\medskip \medskip
%Comment on UC Security: Our various security definitions are equivalent to UC security for quantum authentication with different restrictions on the simulator. 

\subsection{Basis-dependent authentication}
We first define a notion of security of authentication schemes that reduce to a \emph{basis-respecting} adversary. 

\begin{definition}[Basis-respecting adversaries] 
Let $\basis = \{ \ket{\psi} \}$ denote an orthonormal basis for $\authspace$. Then an adversary $\id \in \metalinmap(\authspace \advspace,\authspace \advspace)$ is \emph{$\basis$-respecting} iff it can be written as
$$
	\id(\sigma) = \Tr_{\advspace'}(\Pi V \sigma V^\dagger \Pi)
$$
for all $\sigma \in \dens(\authspace \advspace)$, where $\Pi$ is a projector acting on $\advspace \advspace'$, and $V \in \isometry(\authspace \advspace,\authspace \advspace \advspace')$ is an isometry that can be written as 
$$
	V = \sum_{\psi \in \basis} \ketbra{\psi}{\psi}^\authspace \otimes V_\psi
$$
where for each $\psi$, $V_\psi \in \isometry(\advspace,\advspace \advspace')$ is some isometry. 
\end{definition}
Without the second condition on $V$, by Stinespring's Dilation Theorem every superoperator can be written as $\id(\sigma) = \Tr_{\advspace'}(\Pi V \sigma V^\dagger \Pi)$ for some choice of isometry $V$ and projector $\Pi$. However, the second condition forces $V$ to respect the basis $\basis$. Intuitively, a basis-respecting adversary first performs some (partial) measurement on the $\authspace$ register in the $\basis$ basis, and based on the measurement outcome, performs some further isometry on the side information in $\advspace$. When $\basis$ is simply the computational basis, then the adversary treats the $\authspace$ register as classical. 

\begin{definition}[Security relative to a basis]
	Let $\basis$ be a basis for $\authspace$. An authentication scheme $(\auth,\ver)$ \emph{$\eps$-authenticates relative to basis $\basis$} iff it $\eps$-reduces to the class of $\basis$-respecting adversaries.
\end{definition}

Intuitively, our new definition captures the ``best possible'' security definition for \emph{classical} authentication protocols.  With a classical protocol, the adversary can perform arbitrary measurements on the authenticated space without detection by the verification algorithm.  Because measurements are now undetectable, the adversary can also perform $\sigma$-dependent operations to the auxiliary registers, where $\sigma$ is the classical authenticated message observed in the authenticated registers.  For example, he can copy $\sigma$ into the auxiliary space.  He can also now choose to abort or not depending on $\sigma$.  However, he should not be able to turn $\sigma$ into $\sigma'\neq \sigma$.

In Section~\ref{sec:classical_properties}, we establish consequences of our definition of basis-dependent security, including the property of unforgeability: the adversary cannot produce two valid signed messages with non-negligible probability, when given access to only one superposition. Thus, our definition subsumes the Boneh-Zhandry security definition for one-time MACs.

%First, we show that from the point of view of an adversary, the state which was authenticated in superposition is indistinguishable from having been measured in the basis $\basis$. Showing this uses our definition crucially: we reduce all potential distinguishers into adversaries that must behave in a basis-respecting manner, but then such an adversary cannot tell whether the state was measured or not. 

%Next, we show that our definition implies unforgeability: the adversary cannot produce two valid signed messages with non-negligible probability, when given access to only one superposition. Thus, our definition subsumes the Boneh-Zhandry security definition for one-time MACs.

In Section~\ref{sec:threewise} we show that the classical Wegman-Carter MAC where the message $m$ is appended with $h(m)$, where $h(\cdot)$ is drawn from a three-wise independent hash family, is a scheme that authenticates relative to the computational basis. 

\begin{theorem}
	The Wegman-Carter MAC with three-universal hashing is $O(\sqrt{|\msgspace|/|\tagspace|})$-authenticating relative to the computational basis, where $\tagspace$ is the range of the hash family. 
\end{theorem}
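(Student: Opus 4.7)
The plan is to use Stinespring's theorem (Theorem~\ref{thm:stinespring}) to write the adversary as $\superop(\sigma) = V \sigma V^\dagger$ for a contraction $V \in \linmap(\authspace \otimes \advspace)$, absorbing the Stinespring dilation into $\advspace$ and folding the post-selection projector into $V$. Decomposing $V = \sum_{x, y \in \authspace} \ketbra{x}{y}^{\authspace} \otimes V_{x, y}^{\advspace}$ in the computational basis, I would define the ideal adversary $\id$ to realize the fully-diagonal action $\id(\sigma) = V_{\mathrm{diag}}\, \sigma\, V_{\mathrm{diag}}^\dagger$ with $V_{\mathrm{diag}} := \sum_x \ketbra{x}{x}^{\authspace} \otimes V_{x, x}^{\advspace}$. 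This is a formally basis-respecting adversary in the sense of Definition~\ref{def:general_sec}: complete each block $V_{x, x}$ to an isometry $W_x = V_{x, x} \otimes \ket{0}^{\advspace'} + \sqrt{\Id - V_{x, x}^\dagger V_{x, x}} \otimes \ket{1}^{\advspace'}$ (well-defined since $V_{x,x}$ is a contraction), set $\widetilde V = \sum_x \ketbra{x}{x}^{\authspace} \otimes W_x$, and project with $\Pi = \Id^{\advspace} \otimes \ketbra{0}{0}^{\advspace'}$.

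Writing the purified input as $\ket{\rho} = \sum_m \ket{m}^{\msgspace} \ket{\phi_m}^{\safespace \advspace}$ with $\sum_m \|\phi_m\|^2 = 1$, so that $\auth_k \ket{\rho} = \sum_m \ket{m, h_k(m)}^{\authspace} \ket{\phi_m}$, a direct expansion shows that after the verification filter $\ver_k = \auth_k^{-1} \circ \Pi_{\mathcal{V}_k}$ the subnormalized output vectors for the real and ideal experiments are
\begin{align*}
  \ket{\mathrm{real}_k} &= \sum_{m, m'} \ket{m'}^{\msgspace} \otimes V_{(m', h_k(m')),\, (m, h_k(m))}\, \ket{\phi_m}, \\
  \ket{\mathrm{ideal}_k} &= \sum_{m} \ket{m}^{\msgspace} \otimes V_{(m, h_k(m)),\, (m, h_k(m))}\, \ket{\phi_m},
\end{align*}
whose difference $\ket{R_k} := \sum_{m \ne m'} \ket{m'}^{\msgspace} \otimes V_{(m', h_k(m')),\, (m, h_k(m))}\, \ket{\phi_m}$ captures precisely the message-changing ``shifts'' that survive verification.

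The heart of the proof is the estimate $\Ex_k \|\ket{R_k}\|_2^2 \le |\msgspace|/|\tagspace|$. By orthogonality of $\ket{m'}^{\msgspace}$ and Cauchy-Schwarz on the inner sum over $m$,
$$
\|\ket{R_k}\|^2 = \sum_{m'} \Bigl\|\sum_{m \ne m'} V_{(m', h_k(m')),\, (m, h_k(m))}\, \ket{\phi_m}\Bigr\|^2 \le |\msgspace| \sum_{m \ne m'} \bigl\|V_{(m', h_k(m')),\, (m, h_k(m))}\, \ket{\phi_m}\bigr\|^2.
$$
Since $h$ is $3$-wise (hence $2$-wise) independent, for each $m \ne m'$ the pair $(h_k(m), h_k(m'))$ is uniform on $\tagspace \times \tagspace$, so $\Ex_k$ of each summand equals $|\tagspace|^{-2} \sum_{u, v} \|V_{(m', v),\, (m, u)}\, \ket{\phi_m}\|^2$. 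Summing over $m \ne m'$ and using the contraction identity $\sum_{m', v} V_{(m', v),\, (m, u)}^\dagger V_{(m', v),\, (m, u)} \preceq \Id^{\advspace}$ collapses the grand sum to at most $|\tagspace|^{-2} \sum_{m, u} \|\phi_m\|^2 = |\tagspace|^{-1}$, yielding the claimed $L^2$ bound.

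To conclude, the triangle-inequality extension of Fact~\ref{fact:pure_to_dens} to subnormalized vectors gives $\|\ketbra{\mathrm{real}_k}{\mathrm{real}_k} - \ketbra{\mathrm{ideal}_k}{\mathrm{ideal}_k}\|_1 \le (\|\ket{\mathrm{real}_k}\| + \|\ket{\mathrm{ideal}_k}\|)\, \|\ket{R_k}\|_2 \le 2\|\ket{R_k}\|_2$, so Jensen's inequality yields $\Ex_k \|\cdots\|_1 \le 2\sqrt{\Ex_k \|\ket{R_k}\|_2^2} = O(\sqrt{|\msgspace|/|\tagspace|})$. Since the real and ideal experiments in Definition~\ref{def:general_sec} are block-diagonal in the orthogonal labels $\ketbra{k}{k}^{\keyspace}$, this per-$k$ bound lifts directly to the total trace distance. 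The main obstacle is the careful construction of $\id$ as a formally basis-respecting adversary that captures exactly the diagonal part of $V$, so that the real-versus-ideal difference is identified with $\ket{R_k}$; the $3$-wise independence of $h$ then supplies enough randomness over $k$ to close the averaging estimate.
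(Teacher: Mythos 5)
Your proposal is correct, and it follows the same high-level skeleton as the paper's proof (Stinespring to reduce to a single contraction, decomposition of the post-verification vector into a ``diagonal'' ideal part plus an off-diagonal error part, an $L^2$ bound on the error averaged over the key, and an explicit basis-respecting simulator), but the two steps that carry the real content are done genuinely differently. First, your ideal adversary is just the diagonal compression $V_{\mathrm{diag}}=\sum_x \ketbra{x}{x}\otimes V_{x,x}$ completed to an isometry, whereas the paper builds a copy-and-compare circuit that coherently records the authenticated register, runs the original $V$ on the copy, and post-selects on the copy being unchanged; both are valid basis-respecting adversaries and both depend only on $V$, not on $k$ or $\rho$. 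Second, and more importantly, the key estimate is organized differently. The paper expands $\Ex_k\ip{\tau_{k,err}}{\tau_{k,err}}$ keeping all cross terms, so each summand depends on the hash at \emph{three} points $a,m,m'$; this is exactly where $3$-universality is invoked (to replace $k_a$ by an independent $h_a$), and the sum is then completed via the correction terms $\xi_1,\xi_2$ and the orthogonality of the $\ket{\psi_{mtz}}$. You instead apply Cauchy--Schwarz over the output message index \emph{before} averaging over $k$, paying a factor of $|\msgspace|$ but reducing every surviving term to one depending on the hash at only \emph{two} points $m,m'$, after which the contraction identity $\sum_{x}V_{x,y}^\dagger V_{x,y}\preceq\Id$ closes the sum. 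Since the final answer is the square root of the $L^2$ bound, the lost factor of $|\msgspace|$ is exactly absorbed into the claimed $O(\sqrt{|\msgspace|/|\tagspace|})$, and you even get a slightly cleaner constant ($2\sqrt{|\msgspace|/|\tagspace|}$ versus the paper's $3\sqrt{6|\msgspace|/|\tagspace|}$). I checked the computation and it goes through.

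One consequence you should be aware of: your argument never uses $3$-universality --- only that $(h_k(m),h_k(m'))$ is uniform for $m\neq m'$, i.e.\ strong $2$-universality. The paper explicitly leaves it as an open question (Section 9, item 1) whether pairwise independence suffices for this security notion, so either you have found a simpler proof that also settles that question affirmatively at the same $O(\sqrt{|\msgspace|/|\tagspace|})$ rate, or there is a subtle error; I could not find one, and the pointwise Cauchy--Schwarz step is precisely what lets you sidestep the three-point correlations (worst-case constructive interference across message branches is bounded deterministically rather than averaged away). I would encourage you to state this weakening explicitly and sanity-check it against the Boneh--Zhandry counterexample regime $|\tagspace|<|\msgspace|$, where the bound is vacuous anyway. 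Finally, do spell out the bookkeeping you compress into ``absorbing the Stinespring dilation into $\advspace$'': the real and ideal experiments must be compared before tracing out the dilation register, with the ideal adversary's own traced-out ancilla $\advspace'$ taken to contain it; this is the same move the paper makes, but it deserves a sentence.
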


\subsection{Total authentication}

In this section we formally define our notion of total authentication. First, we define \emph{oblivious adversaries}. 

\begin{definition}[Oblivious adversary]
	An adversary $\id \in \metalinmap(\authspace \advspace,\authspace \advspace)$ is \emph{oblivious} iff there exists a superoperator $\superop \in \metalinmap(\advspace,\advspace)$ such that
	$$
		\id(\sigma) = (\Id^\authspace \otimes \superop)(\sigma)
	$$
	for all $\sigma \in \dens(\authspace \advspace)$.
\end{definition}

In other words, an oblivious adversary does not act at all on the authenticated message, and only acts on the auxiliary side information that it possesses about the state. 

\begin{definition}[Total authentication]
An authentication scheme $(\auth,\ver)$ \emph{$\eps$-totally authenticates} iff it $\eps$-reduces to the class of oblivious adversaries.	
\end{definition}

This generalizes the security definition of~\cite{dupuis2012actively}, which is similar, except it traces out the key register. Therefore, it does not keep track of potential correlations between the adversary and the key. We will argue shortly that our definition of total authentication is strictly stronger than the definition of~\cite{dupuis2012actively}; that is, there are protocols which satisfy the security definition of~\cite{dupuis2012actively}, but do not satisfy total authentication.

%This is a generalization of the~\cite{barnum2002authentication} definition to handle arbitrary auxiliary information about the input state. In the case that the adversary possesses a purification of the initial message state, this is the strongest possible notion of security: for any authentication scheme, an adversary can always mount the following trivial attack.  First, he can arbitrarily modify the unauthenticated auxiliary state.  Note that he cannot necessarily modify the contents of the auxiliary state based on the authenticated state, since this amounts to some measurement on the authenticated state, which verification may detect.  Second, he can choose to either forward the authenticated state as is, or abort and forward nothing (equivalently, forward a junk state that will be rejected with overwhelming probability).  Moreover, he can choose whether to abort or forward based on the contents of the auxiliary registers, and can even abort/forward in superposition. However, in an authentication scheme that totally authenticates the adversary can \emph{only} behave in such trivial ways. Such trivial attacks are incorporated in our set of ideal adversaries as $\id(\sigma)$ is a superoperator.

In Section~\ref{sec:quantum_properties} we establish a few properties of this definition. The first is that a totally authenticating scheme yields encryption of the quantum state. Barnum, et al. showed that quantum state authentication implies quantum state encryption~\cite{barnum2002authentication}. However, they did not take into account quantum side information. We show that our definition very easily implies encryption even when the adversary may be entangled with the message state. 

Then, we show how our notion of total authentication gives rise to a conceptually simple version of quantum key distribution (QKD).~\cite{hayden2011universal} have already observed that the universal composability of the Barnum et al. protocol implies that it can be used to perform QKD as well. Thus while our application of quantum authentication to QKD is not novel, we use this as another opportunity to showcase the strength of our definition. We also show how our definition easily implies full key reuse.

In Section~\ref{sec:design} we present a scheme achieves total authentication, and to our knowledge this is the first scheme that achieves such security. %As described in the introduction, this is based on applying an (approximate) unitary design on the input state padded with some number $s$ of $\ket{0}$ qubits.

\begin{theorem}
	The unitary design scheme is $2^{-s/2}$-totally authenticating, where $s$ is the number of extra $\ket{0}$ qubits.
\end{theorem}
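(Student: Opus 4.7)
The plan is to follow the standard template for twirling-based authentication proofs, adapted to the unitary design setting. First, by Stinespring's dilation (Theorem~\ref{thm:stinespring}) I may replace the adversary's attack by a unitary $E$ on $\authspace\otimes\advspace$, and purify the input $\rho^{\msgspace\safespace\advspace}$ to some $\ket{\rho}$ (absorbing the purifying register into $\advspace$). The authenticated space factors as $\authspace = \msgspace \otimes \mathcal{N}$ where $\mathcal{N}$ is the $s$-qubit ancilla register, so that $\auth_k(\rho) = U_k(\rho\otimes\ketbra{0^s}{0^s}^\mathcal{N}) U_k^\dagger$ for $U_k$ drawn from the design on $\authspace$, which has dimension $D = |\msgspace|\cdot 2^s$. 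I expand $E$ in the Pauli basis of $\authspace$:
\[
E \;=\; \sum_\sigma \sigma^\authspace \otimes E_\sigma^\advspace,
\]
with the normalization $\sum_\sigma E_\sigma^\dagger E_\sigma = \Id^\advspace$ forced by unitarity and $\Tr(\sigma^\dagger\tau) = D\delta_{\sigma\tau}$. I then propose as ideal adversary the oblivious map $\id(\xi) := E_\Id\,\xi\,E_\Id^\dagger$, which isolates the trivial-Pauli component of the attack.

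Second, I express the real state. Substituting the Pauli expansion gives
\[
\tau_k \;=\; (\ver_k \circ \superop \circ \auth_k)(\rho) \;=\; \sum_{\sigma,\tau}(M_\sigma^k\otimes E_\sigma)\,\rho\,(M_\tau^{k\,\dagger}\otimes E_\tau^\dagger),
\]
where $M_\sigma^k := (\Id^\msgspace\otimes\bra{0^s}^\mathcal{N})\,U_k^\dagger\sigma U_k\,(\Id^\msgspace\otimes\ket{0^s}^\mathcal{N})$ is the ancilla-sandwich on $\msgspace$. Since $M_\Id^k = \Id^\msgspace$ deterministically, the $(\Id,\Id)$ term equals $\id(\rho)$, so $\Delta_k := \tau_k - \id(\rho)$ collects only Pauli cross-terms with $(\sigma,\tau)\ne(\Id,\Id)$. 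Orthogonality of the $\ketbra{k}{k}$ blocks in Definition~\ref{def:general_sec} reduces the target trace distance to $\Ex_k\|\Delta_k\|_1$. To bound it I use a purification trick: set $A_k := \sum_\sigma M_\sigma^k\otimes E_\sigma$ and $B := \Id\otimes E_\Id$, so $\tau_k = A_k\ketbra{\rho}{\rho}A_k^\dagger$ and $\id(\rho) = B\ketbra{\rho}{\rho}B^\dagger$. Then
\[
\Delta_k \;=\; A_k\ket{\rho}\bra{\rho}(A_k-B)^\dagger + (A_k-B)\ket{\rho}\bra{\rho}B^\dagger,
\]
each summand being rank one, so Cauchy--Schwarz gives $\Ex_k\|\Delta_k\|_1 \le 2\sqrt{\Ex_k\|(A_k-B)\ket{\rho}\|_2^2}$, using $\|A_k\ket{\rho}\|_2,\|B\ket{\rho}\|_2 \le 1$. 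The 2-design identity $\Ex_k\, U_k^\dagger\sigma U_k\otimes U_k^\dagger\tau^\dagger U_k = \alpha_{\sigma\tau}\Id\otimes\Id + \beta_{\sigma\tau}F$ together with $\Ex_k M_\sigma^k = \delta_{\sigma,\Id}\Id$ shows $\Ex_k A_k = B$, so the remaining quantity $\Ex_k\|(A_k-B)\ket{\rho}\|_2^2 = \bra{\rho}(\Ex_k A_k^\dagger A_k - B^\dagger B)\ket{\rho}$ reduces to a sum over non-identity Pauli pairs. Projecting the swap $F$ through $\ket{0^s 0^s}\bra{0^s 0^s}$ on the two ancilla copies yields $F_\msgspace$ with weight $D/(D^2-1) = O(1/D)$; combined with $\sum_\sigma E_\sigma^\dagger E_\sigma = \Id$, this bounds the expected squared error by $O(2^{-s})$, whose square root is the claimed $2^{-s/2}$.

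The main obstacle I anticipate is precisely the passage from a second-moment (Hilbert--Schmidt) estimate to a trace-norm bound without dimensional loss; the naive inequality $\|\cdot\|_1 \le \sqrt{\mathrm{rank}}\,\|\cdot\|_2$ would cost a factor of $|\msgspace|$ and destroy the bound. The rank-one splitting of $\Delta_k$ above -- essentially the operator analogue of Fact~\ref{fact:pure_to_dens} -- is the device designed to bypass this loss, since it reduces everything to controlling a \emph{single} vector norm. A secondary obstacle is the careful bookkeeping of cancellations: one must verify that the $\alpha_{\sigma\tau}\Id\otimes\Id$ contributions in the 2-design formula cancel exactly between $\Ex_k A_k^\dagger A_k$ and $B^\dagger B$, so that only the $\beta_{\sigma\tau}F \sim 1/D$ pieces survive. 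If an unexpected residual dimensional factor appears, the fallback is to strengthen the scheme to use a 4-design, enabling direct concentration of $\|\Delta_k\|_2^2$ around its mean instead of routing through a single second moment.
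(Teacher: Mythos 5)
Your proof is essentially correct, but it takes a genuinely different route from the paper's. The paper never performs a Pauli expansion of the attack: instead it defines $\Lambda_U = \bra{0}^{\otimes s}U^\dagger V U\ket{0}^{\otimes s}$ and $\Gamma_V = \Tr_{\authspace}(V)/|\authspace|$, computes the Haar average $\int \Lambda_U\,dU = \Id^{\msgspace}\otimes\Gamma_V$ and the mean of $f(U)=\|\Gamma_V\ket{\rho}-\Lambda_U\ket{\rho}\|_2^2$ (which comes out to $\tfrac{NM-1}{N^2-1}$, exactly matching your second moment), and then establishes \emph{pointwise} concentration of $f$ via Levy's Lemma together with an explicit Lipschitz bound; the passage from Haar to designs is done with Low's derandomization theorem, and since $f$ has degree $4$ this forces $t=8$. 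Your argument replaces all of the concentration machinery with the observation that the key-averaged trace distance only needs the \emph{mean} of the squared $2$-norm error: the rank-one splitting $\Delta_k = A_k\ket{\rho}\bra{\rho}(A_k-B)^\dagger + (A_k-B)\ket{\rho}\bra{\rho}B^\dagger$ plus Cauchy--Schwarz gives $\Ex_k\|\Delta_k\|_1 \le 2\sqrt{\Ex_k\|(A_k-B)\ket{\rho}\|_2^2}$, and the Pauli cross-terms vanish under an exact $2$-design twirl (for $\sigma\ne\tau$ both $\Tr(\sigma^\dagger)\Tr(\tau)$ and $\Tr(\sigma^\dagger\tau)$ vanish), leaving only the diagonal contribution $\tfrac{DM-1}{D^2-1}\sum_{\sigma\ne\Id}E_\sigma^\dagger E_\sigma \preceq 2^{-s}\Id$. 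This is the Clifford-code-style analysis, and if carried out in full it is \emph{stronger} than the paper's result: it shows a $2$-design suffices, which the paper explicitly lists as an open problem in its final section. That is not a reason to distrust it -- your moment computation agrees exactly with the paper's Haar calculation, and the paper only needs an $8$-design because its proof strategy routes through concentration of a degree-$4$ polynomial.

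Two loose ends you should tie up. First, you work with exact design moments throughout; the scheme uses an $\eps$-approximate design, so you must propagate the approximation error through $\Ex_k A_k$ and $\Ex_k A_k^\dagger A_k$ (with the paper's $\eps = N^{-17}$ and a moment-based notion of approximation this is routine, but it has to be said, and the bound you get depends on which definition of approximate design is used). Second, the adversary is a trace non-increasing superoperator, i.e.\ a unitary $V$ followed by a projector $P$ on $\advspace$, not a bare unitary; as in the paper's Wegman--Carter analysis, $P$ acts only on $\advspace$, commutes with verification, and can be absorbed into the definition of the $E_\sigma$ (at the cost of $\sum_\sigma E_\sigma^\dagger E_\sigma \preceq \Id$ rather than equality, which only helps your bounds). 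With those two points addressed, your argument is a complete and cleaner proof of the theorem.
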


As a consequence, this yields an authentication scheme where the key can be recycled fully, conditioned on successful verification by the receiver. In contrast, the protocol of Barnum et al. is not known to possess this property;~\cite{hayden2011universal} showed that most of the key can be securely recycled. 

\subsection{Total authentication with key leakage}

\newcommand{\leak}{\ell}

Finally, we introduce a slight weakening of the definition of total authentication above: we consider schemes that achieve total authentication of quantum data, but incur some \emph{key leakage}. We model this in the following way: let $\keyspace'$ be such that $|\keyspace|' \leq |\keyspace|$. Define a \emph{key leakage function} $\ell: \keyspace \mapsto \keyspace'$. If $|\keyspace'|$ is strictly smaller than $|\keyspace|$, then $\ell(k)$ must necessarily lose information about the key $k \in \keyspace$, but it will also leak some information about it. 

In a total authentication scheme with key leakage, an arbitrary adversary is reduced to an oblivious adversary (i.e., is forced to only act on the side information), but the manner in which it acts on the side information \emph{may depend on $\ell(k)$}.

\begin{definition}[Authentication with key leakage]
Let $(\auth,\ver)$ be an authentication scheme.  Let $\keyspace'$ be some domain such that $|\keyspace'| \leq |\keyspace|$ and let $\leak:\keyspace \to \keyspace'$ be a key leakage function.  Let $\mathscr{A} \subseteq \metalinmap(\authspace \advspace,\authspace \advspace)$ denote a set of ideal adversaries. The scheme $(\auth,\ver)$ \emph{$\eps$-reduces to $\mathscr{A}$-adversaries with key leakage $\leak$} iff the following holds: for all initial message states $\ket{\rho}^{\msgspace \safespace \advspace}$, for all adversaries $\superop \in \metalinmap(\authspace \advspace,\authspace \advspace)$, there exists a collection of ideal adversaries $\{\id_h \} \subset \mathscr{A}$, indexed by $h \in \keyspace'$, such that the following (not necessarily normalized) states are $\eps$-close in trace distance:
\begin{itemize}
	\item (Real experiment) $\Ex_k\ketbra{k}{k}\otimes \left[ \ver_k \circ \superop \circ \auth_k \right](\rho^{\msgspace \safespace\auxspace})$
	\item (Ideal experiment) $\Ex_k\ketbra{k}{k}\otimes \left[ \ver_k \circ \id_{\ell(k)} \circ \auth_k \right](\rho^{\msgspace \safespace \auxspace})$.
\end{itemize}
\end{definition}

\begin{definition}[Total authentication with key leakage]
Let $\keyspace'$ be some domain such that $|\keyspace'| \leq |\keyspace|$ and let $\leak:\keyspace \to \keyspace'$ be a key leakage function. An authentication scheme $(\auth,\ver)$ \emph{$\eps$-totally authenticates with key leakeage $\leak$} iff it $\eps$-reduces to the class of oblivious adversaries with key leakeage $\leak$.	
\end{definition}

This definition may seem somewhat strange: how is an ideal adversary able to learn bits $\ell(k)$ of the key $k$, if it doesn't act on the authenticated part of the state at all? Of course, any adversary that learns something about the key must have acted on the authenticated state, but the point is that, conditioned on successful verification, the adversary ``effectively'' behaved like an oblivious adversary that had access to $\ell(k)$. 

In Section~\ref{sec:qft} we present a very simple scheme that achieves total authentication with some key leakage: to authenticate an arbitrary quantum state $\rho$, first apply the classical Wegman-Carter authentication scheme on it using key $k$. Then, apply $H^{\otimes n}$ to all the qubits in the authenticated state (i.e. apply the quantum Fourier transform over $\Z_2$). Finally, apply the classical Wegman-Carter scheme again using a fresh key $h$. Thus, we are authenticating the state $\rho$ in complementary bases. We call this the ``Auth-QFT-Auth'' scheme.

We will show that this in fact achieves total authentication (and hence encryption of the state), but at the cost of leaking the ``outer key'' $h$:

\begin{theorem}[Security of the Auth-QFT-Auth scheme]
The Auth-QFT-Auth scheme is $\delta$-totally authenticating with outer key leakage, where $\delta = O(\sqrt{|\msgspace|^{5/2}/|\authspace|})$. 
\end{theorem}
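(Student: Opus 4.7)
The plan is to decompose the security analysis of Auth-QFT-Auth into three stages that peel away the layers of the scheme, reducing an arbitrary adversary to an oblivious one (with possible dependence on the outer key $h$). Roughly: outer Wegman-Carter forces the attack to respect the computational basis of the outer-auth space; conjugating by $H^{\otimes n}$ turns this into a Hadamard-basis attack on the inner-auth space; and inner Wegman-Carter combined with a 3-universality moment bound shows that a Hadamard-basis attack on a classical codeword is effectively oblivious on the message.

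First, I would invoke the security of the \emph{outer} Wegman-Carter layer. Let $\xi = H^{\otimes n} \auth_k(\rho^{\msgspace \safespace \advspace})$ denote the intermediate state on which the outer authentication acts; then the outer verification behaves exactly like a standalone Wegman-Carter scheme on $\xi$. The cited theorem on Wegman-Carter with three-universal hashing guarantees that the adversary's action $\superop$ on the outer-authenticated space is approximated, after outer verification, by a computational-basis-respecting ideal adversary $\id_h$ on the outer-auth space tensored with $\advspace$. The dependence on $h$ is precisely the outer-key leakage that appears in the statement; this is what forces us to work in the key-leakage model rather than in the pure total authentication model.

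Second, I would conjugate by the Hadamard layer. Because $\id_h$ respects the computational basis of the outer-auth space and the outer authentication sits immediately above $H^{\otimes n}$, the induced map on the inner-authenticated space is a \emph{Hadamard-basis-respecting} adversary $\id'_h$: it measures in the Hadamard basis of the inner-auth space and, conditioned on the outcome, applies some isometry to $\advspace$. Explicitly, if $\id_h$ has Kraus form $\sum_y \ketbra{y}{y} \otimes V_{h,y}$ (possibly followed by a post-selection projector), then $\id'_h$ has the form $\sum_y \bigl(H^{\otimes n}\ketbra{y}{y}H^{\otimes n}\bigr) \otimes V_{h,y}$.

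Third, and this is the technical core, I would show that the inner verification applied to $\id'_h\bigl(\auth_k(\rho)\bigr)$ is, on average over the inner key $k$, close in trace distance to an oblivious action $(\Id^{\msgspace} \otimes \superop_h)(\rho)$ for some superoperator $\superop_h$ on $\advspace$. The intuition is that a Hadamard-basis measurement on a Wegman-Carter codeword $\ket{m,\sigma(k,m)}$ produces an outcome whose phase depends linearly on $m$ and $\sigma(k,m)$; when one expands the verified state as a sum over pairs $(m_1,m_2)$ of phases $(-1)^{y\cdot(m_1\oplus m_2,\,\sigma(k,m_1)\oplus\sigma(k,m_2))}$ and averages over $k$, three-universality of the hash family collapses the off-diagonal terms ($m_1\ne m_2$) and leaves only the diagonal contribution, which is precisely the identity on $\msgspace$ tensored with an operation on $\advspace$. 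A direct moment computation should then yield the claimed error $\delta = O\bigl(\sqrt{|\msgspace|^{5/2}/|\authspace|}\bigr)$.

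The main obstacle will be step three: cleanly controlling the off-diagonal cancellations when a Hadamard-basis-respecting attack is followed by projection onto the inner Wegman-Carter codeword subspace. Unlike the computational-basis attack analyzed in the proof of the Wegman-Carter theorem, a Hadamard-basis attack introduces a full phase expansion, so there are on the order of $|\msgspace|^2\cdot|\tagspace_1|^2$ phase terms to manage; three-universality is used to save a factor of $1/|\tagspace_1|$ per matched pair, and the standard Frobenius-to-trace passage loses one further square root. Matching these bookkeeping factors to the target $O\bigl(\sqrt{|\msgspace|^{5/2}/|\authspace|}\bigr)$ is where the bulk of the calculation will lie, and also where one must verify that the $V_{h,y}$'s can be assembled into a single $h$-dependent superoperator $\superop_h$ on $\advspace$, completing the reduction to an oblivious adversary with outer-key leakage.
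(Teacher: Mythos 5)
Your plan follows essentially the same route as the paper's proof: reduce the outer layer to a computational-basis-respecting adversary via the Wegman-Carter theorem, let the Hadamard conjugation turn this into a phase expansion over the inner codewords, and then use $3$-universality of the inner key to kill the off-diagonal $(m_1\neq m_2)$ terms in a second-moment bound, yielding $O(\sqrt{|\msgspace|^{3/2}/|\tagspace_1|})=O(\sqrt{|\msgspace|^{5/2}/|\authspace|})$. The one step you defer --- assembling the residual $h$-dependent action into a single oblivious superoperator on $\advspace$ --- is handled in the paper by having the ideal adversary prepare the $h$-dependent entangled state $\frac{1}{\sqrt{N}}\sum_x\ket{x,h(x)}\ket{x,h(x)}$, apply the original $V$, and post-select back onto it, which realizes exactly the uniform average over $x$ of the conditional maps; this is where the outer-key leakage concretely enters.
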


While this scheme leaks some bits of the outer key, it preserves the secrecy of the state $\rho$ and the ``inner key'' $k$. Furthermore, it is much more ``lightweight'' than the full unitary design scheme that achieves total authentication without key leakage. It also illustrates that applying a simple classical authentication scheme in complementary bases is already enough to reduce a full quantum adversary to performing only trivial attacks. Finally, the analysis of this scheme crucially relies on the basis dependent security definition above. 

%We note that Hayden, Leung, and Mayers show that the authentication scheme of~\cite{barnum2002authentication} satisfies total authentication with key leakage~\cite{hayden2011universal}, but it is unclear whether it satisfies the strongest definition of total authentication without key leakage. 

\subsection{A remark about efficiency}

Recently, Broadbent and Wainewright~\cite{broadbent2016efficient} study the \emph{efficiency} of the simulating ideal adversaries in the security proofs of two authentication schemes, the Clifford scheme and the trap code scheme. Specifically, they show that if the adversary in the authentication protocol is a quantum computer that runs in time $T$, then the ideal adversary which simulates it also runs in time $O(T)$. This efficiency-preservation is important for notions of composable security.

We note that the constructions of the ideal adversary in our analysis of the Wegman-Carter scheme, the Auth-QFT-Auth scheme, and the unitary design scheme are also efficiency preserving, and hence if the arbitrary adversary runs in polynomial time, then the simulating adversary also runs in polynomial time.

\subsection{Comparison with security definition in~\cite{dupuis2012actively}}
Similarly to our definition, the security definition of message authentication~\cite{dupuis2012actively} implies that essentially all the adversary can do is act on its own private workspace. However, it traces out the key register, and thus it does not keep track of correlations between the adversary and the secret key. It is a natural question to ask whether the security definition of~\cite{dupuis2012actively} \emph{implies} our definition of total authentication. Here we show that it cannot, because there are protocols that satisfy the~\cite{dupuis2012actively} definition, but not ours.\footnote{See Section~\ref{sec:lifting} for a formal statement of the~\cite{dupuis2012actively} definition.}

Consider a protocol $(\auth,\ver)$ that satisfies the~\cite{dupuis2012actively} definition. Let $k$ denote the secret key used in the protocol. Now consider the following modified protocol $(\auth',\ver')$: to authenticate a message state $\rho$, it produces $\auth_k(\rho)$, but then appends an independently random bit $b$, where $(k,b)$ is the secret key register of $(\auth',\ver')$. To verify, the receiver just applies the $\ver_k$ operation, and ignores the last bit. This new protocol still satisfies the~\cite{dupuis2012actively} definition, because the extra bit $b$ is independent of the $(\auth,\ver)$ process, and thus final state of the protocol can be simulated by an ideal adversary that generates its own $b$ bit -- as long as we're tracing out the key. However, this protocol does not satisfy total authentication. This is because an adversary can simply copy the the bit $b$ into its private workspace; but this cannot be simulated by an ideal adversary that is unentangled with the $(k,b)$ register.

%The security definition of total authentication without key leakage is strictly stronger than the key-averaged security definition ~\cite{dupuis2012actively}. Consider the following example. Let (Auth,Ver) be an authentication scheme that satisfies total authentication. (Auth',Ver') is another authentication scheme that sends the first bit of the key along with authenticated state where authentication is done using the remaining key through Auth. Clearly, this scheme does not satisfy total authentication as adversary can learn one bit of information about the key without disturbing the authenticated state. But, (Auth',Ver') is secure according to the ~\cite{dupuis2012actively} security definition.

Furthermore, any authentication scheme satisfying ~\cite{dupuis2012actively}'s security definition also satisfies ''total authentication with key leakage" for some key leakage function $\leak$ and any authentication scheme satisfying "total authentication with key leakage" satisfies the key-averaged security definition. Hence, these two security definitions are equivalent (up to some error).

\section{Properties of basis-dependent authentication}
\label{sec:classical_properties}

\subsection{Indistinguishability from measured}

Here, we show that any classical scheme that authenticates relative to the computational basis implies that the authenticated state is indistinguishable from being measured in the computational basis. For concreteness we will work with the computational basis; this is without loss of generality.

%\begin{definition} If $(\auth_k,\ver_k)$ is a classical authentication scheme that is $\eps$-\emph{indistinguishable from measured} in the computational basis, then for any state $\rho^{\msgspace \safespace \auxspace}$, the following two states are $\eps$ close:
%\begin{itemize}
%		\item $\Ex_k \auth_k (\rho^{\msgspace \safespace \auxspace})$ (the unmeasured authenticated state), and 
%		\item $\Ex_k \left[ \measure \circ \auth_k \right](\rho^{\msgspace \safespace \auxspace})$ (the measured authenticated state),
%\end{itemize}
%where $\measure$ denotes measuring $\authspace$ in the computational basis, and $\auth_k$ and $\measure$ both act as the identity on $\safespace \auxspace$.
%\end{definition}

\begin{theorem}\label{thm:indistfrommeasured} Let $\eps < 1/2$. If $(\auth_k,\ver_k)$ $\eps$-authenticates relative to the computational basis, then the following two states are $7\sqrt{\eps}$-close:
\begin{itemize}
		\item $\Ex_k \auth_k (\rho^{\msgspace \safespace \auxspace})$ (the unmeasured authenticated state), and 
		\item $\Ex_k \left[ \measure \circ \auth_k \right](\rho^{\msgspace \safespace \auxspace})$ (the measured authenticated state),
\end{itemize}
where $\measure$ denotes measuring $\authspace$ in the computational basis, and $\auth_k$ and $\measure$ both act as the identity on $\safespace \auxspace$.
\end{theorem}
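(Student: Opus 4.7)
My plan is to apply the security definition with an adversary that coherently entangles the authenticated register with ancillas in a way that lets us compare the ``unmeasured'' and ``measured'' branches on equal footing. Specifically, I would use the adversary $\superop$ that appends a control qubit $C$ in state $\ket{+}$ and an ancilla register $R \cong \authspace$ in state $\ket{0}$ to $\advspace$, and then performs a CNOT-copy of $\authspace$ into $R$ conditioned on $C = 1$. This creates a superposition between ``identity'' ($C = 0$, leaves $\authspace$ untouched) and ``coherent copy'' ($C = 1$, entangles $R$ with the computational basis of $\authspace$). The adversary is not basis-respecting, so the security definition yields nontrivial information: there is a basis-respecting $\id$ such that the real and ideal experiment states (with the key register) are $\eps$-close in trace distance.

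Next, I would extract two separate approximations by applying contractive CP operations. Projecting the $C$ register onto $\ket{0}$ and applying the controlled isometry $\auth$ gives, on the real side, $\tfrac{1}{2}\, \Ex_k \ketbra{k}{k} \otimes \auth_k(\rho) \otimes \ket{0}\bra{0}^R$; projecting $C$ onto $\ket{1}$, tracing out $R$ (which decoheres $\authspace$ because the distinct codewords $\ket{\auth_k(m)}^R$ are orthogonal), and applying $\auth$ gives $\tfrac{1}{2}\, \Ex_k \ketbra{k}{k} \otimes \measure \circ \auth_k(\rho)$. Each of these is $\eps$-close, in trace distance, to the corresponding component of the ideal state $Y = \Ex_k \ketbra{k}{k} \otimes [\ver_k \circ \id \circ \auth_k](\rho \otimes \ket{+}\bra{+}^C \otimes \ket{0}\bra{0}^R)$, since both projection and partial trace are trace-nonincreasing maps and $\auth$ is an isometry.

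A triangle inequality through $Y$ then reduces bounding the trace distance between the measured and unmeasured averaged authenticated states to a bound on the $C = 0$ versus $C = 1$ components of $Y$. The $\sqrt{\eps}$ factor in the final bound emerges from the fact that $Y$ is produced by a basis-respecting $\id$ acting on an input whose $C$ register is initialized in $\ket{+}$; together with the security bound, these constraints pin down the cross-components of $Y$ in the $C$ basis via a Cauchy--Schwarz-type inequality (or equivalently, a gentle-measurement argument on the $C$ register), yielding the stated $7\sqrt{\eps}$ bound. The main obstacle is extracting this sharp cross-component bound from the security guarantee, which requires simultaneously tracking the basis-respecting structure of $\id$, the CNOT-copy structure of the real experiment, and the specific projection/trace operations used to isolate the two branches.
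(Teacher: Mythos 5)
There is a genuine gap, and it sits at the very first step: the adversary you propose is itself basis-respecting, so the security definition gives you no information about it. Appending $\ket{+}^C\ket{0}^R$ and performing a CNOT-copy of $\authspace$ into $R$ controlled on $C$ is a unitary of exactly the form $V=\sum_{y}\ketbra{y}{y}^{\authspace}\otimes V_y$ with $V_y$ acting only on the adversary's ancillas --- this is the defining form of a $\basis$-respecting adversary for $\basis$ the computational basis (the $\ket{+}$ and $\ket{0}$ preparations fold into the isometries $V_y$). Hence the simulator can be taken to be $\id=\superop$ itself, your ideal state $Y$ equals the real state exactly, and the $\eps$-closeness you invoke is vacuous. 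Worse, the final step is then circular: the $C=0$ and $C=1$ components of $Y$ are precisely the unmeasured and measured authenticated states (after your projections and partial trace), so ``pinning down the cross-components of $Y$'' is exactly the statement you are trying to prove, and nothing in the basis-respecting structure of $\id$ forces those two diagonal blocks to be close. Your claim that ``the adversary is not basis-respecting, so the security definition yields nontrivial information'' is the false step.

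The missing idea is that indistinguishability quantifies over all distinguishers, so the proof must start from a hypothetical distinguisher $D$ between $\Ex_k\auth_k(\rho)$ and $\Ex_k[\measure\circ\auth_k](\rho)$ and embed \emph{it} into the adversary; it is $D$ (not a controlled copy) that is genuinely non-basis-respecting when it has distinguishing advantage. The paper's proof prepares $\frac{1}{2}\bigl[\ketbra{0}{0}^{\safespace'}\otimes\measure(\rho)+\ketbra{1}{1}^{\safespace'}\otimes\rho\bigr]$, uses that authentication commutes with computational-basis measurement, runs $D$ as the attack and records its output in $\advspace'$; a basis-respecting simulator commutes with $\measure$, which forces its $\advspace'$ marginal to be simultaneously close to $\ketbra{0}{0}$ and $\ketbra{1}{1}$ --- a contradiction when $D$ has advantage close to $1$. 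A second ingredient you are also missing is the amplification step: the contradiction only rules out high-advantage distinguishers, and the $7\sqrt{\eps}$ bound comes from tensoring $t$ independent instances (a hybrid argument shows $(\auth^t,\ver^t)$ is $t\eps$-authenticating), boosting a $\delta$-advantage distinguisher to advantage $1-2e^{-t\delta^2/2}$, and optimizing $t$. Neither your single fixed adversary nor a ``gentle measurement on $C$'' can substitute for these two steps.
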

\begin{proof}  We prove this theorem by contradiction: assuming an adversary can distinguish from measured, we will obtain a violation of the security of authentication.  %Analogous to the proof that authentication implies encryption of Barnum et al.~\cite{barnum2002authentication}, our proof will proceed in two parts.  First, we will reduce to the case where we assume the distinguishing adversary has very high success probability.  Second, we will show that by iterating the scheme, we boost a low success probability adversary into a high success probability adversary.  For this proof, we will not need the full security where the key $k$ is considered --- instead, we will invoke the authentication security by tracing out and averaging over the key as in prior works.
Let $\rho^{\msgspace \safespace \auxspace}$ be the input quantum state to the protocol. Let $D$ be a distinguisher violating the indistinguishability from measured property.  Suppose $D$ has very large distinguishing advantage $1-\gamma$ (we deal with the low-distinguishing advantage later). This means that 
\begin{itemize}
	\item $D(\Ex_k \auth_k (\rho^{\msgspace \safespace \auxspace}))$ outputs 1 with probability at least $1-\gamma$, and 
	\item $D(\Ex_k \left[\measure\circ \auth_k\right](\rho^{\msgspace \safespace \auxspace}))$ outputs 1 with probability at most $\gamma$
\end{itemize}
	
Now suppose we set up the following state, where $\safespace'$ is an extra qubit register:
$$
	\frac{1}{2} \left [ \ketbra{0}{0}^{\safespace'} \otimes \measure(\rho^{\msgspace \safespace \advspace}) + \ketbra{1}{1}^{\safespace'} \otimes \rho^{\msgspace \safespace \advspace} \right].
$$
In other words, the $\safespace'$ qubit indicates whether we have measured $\msgspace$ in the computational basis or not. Next we authenticate this state using $\auth_k$. Since the scheme is classical, authentication commutes with measurement in the computational basis.  Therefore, the authenticated state, averaged over the key $k$, is 
\begin{equation}
\label{eq:classical_prop_1}
	\frac{1}{2} \left [ \ketbra{0}{0}^{\safespace'} \otimes \Ex_k (\measure \circ \auth_k)(\rho^{\msgspace \safespace \advspace}) + \ketbra{1}{1}^{\safespace'} \otimes \Ex_k \auth_k (\rho^{\msgspace \safespace \advspace}) \right].
\end{equation}
We now apply the distinguisher $D$ to this state, which acts on registers $\authspace \advspace$, and saves its output to a qubit register $\advspace'$. Because $D$ has high distinguishing advantage, applying $D$ and conditioning on $D$ giving the right answer only negligibly affects the state.  Therefore, the resulting state is $4\sqrt{2\gamma}$-close to:
\begin{equation}
\label{eq:classical_prop_2}
	\frac{1}{2} \left [ \ketbra{0}{0}^{\safespace'} \otimes \Ex_k (\measure \circ \auth_k)(\rho^{\msgspace \safespace \advspace}) \otimes \ketbra{0}{0}^{\advspace'} + \ketbra{1}{1}^{\safespace'} \otimes \Ex_k \auth_k (\rho^{\msgspace \safespace \advspace})  \otimes \ketbra{1}{1}^{\advspace'} \right].
\end{equation}
Now, this state will pass verification with probability 1, since the authentication scheme is classical.  Furthermore, since the authentication scheme is classically secure, the final state in line~\eqref{eq:classical_prop_2} can be approximated by a basis-respecting ideal adversary $\id$ acting on the state in line~\eqref{eq:classical_prop_1}:
\begin{equation}
\label{eq:classical_prop_3}
	\frac{1}{2} \left [ \ketbra{0}{0}^{\safespace'} \otimes \Ex_k  (\measure \circ \id \circ \auth_k)(\rho^{\msgspace \safespace \advspace}) + \ketbra{1}{1}^{\safespace'} \otimes \Ex_k (\id \circ \auth_k) (\rho^{\msgspace \safespace \advspace}) \right].
\end{equation}
Here, we used the fact that $\id$ and $\measure$ commute, because $\id$ is basis-respecting. Let $\tau_k^{\msgspace \safespace \advspace \advspace'}$ denote $(\id \circ \auth_k)(\rho^{\msgspace \safespace \advspace})$. Then we have that $\Ex_k  \measure (\tau_k) \approx_{2\eps} \Ex_k (\measure \circ \auth_k)(\rho^{\msgspace \safespace \advspace}) \otimes \ketbra{0}{0}^{\advspace'}$ and $\Ex_k \tau_k \approx_{2 \eps} \Ex_k \auth_k (\rho^{\msgspace \safespace \advspace})  \otimes \ketbra{1}{1}^{\advspace'}$. However this implies that $\Ex_k \tau_k^{\advspace'}$ is both $2\eps$ close to $\ketbra{0}{0}$ and $\ketbra{1}{1}$ simultaneously, which is absurd when $\eps < 1/2$ --- we have reached a contradiction. Thus, we have that if the scheme $\frac{1}{2}-4\sqrt{2\gamma}$-authenticates in the computational basis, there is no distinguisher with advantage $1-\gamma$.

\medskip

Next, we show how to boost a low-advantage distinguisher for a scheme $(\auth,\ver)$ into a high-advantage distinguisher for the product scheme $(\auth^t,\ver^t)$ which acts on message space $\msgspace^t$ by applying $\auth$ to each message component with an independent key.

A simple hybrid argument shows that, if $(\auth,\ver)$ $\eps$-authenticates in the computational basis, then $(\auth^t,\ver^t)$ $t\eps$-authenticates in the computational basis.  Note that Barnum et al.'s proof of this required somewhat more effort; however, for us, due to the fact that we consider side information in our definition, in our case the security of the product scheme comes essentially for free.

Next, assume $D$ distinguishes from measured for the state $\rho^{\msgspace \safespace \auxspace}$ in the scheme $(\auth,\ver)$ with advantage $\delta$.  Then we can boost the success probability to a distinguisher $D^t$ for the state $(\rho^{\msgspace \safespace \auxspace})^{\otimes t}$ in scheme $(\auth^t,\ver^t)$ with advantage $1-2e^{-t\delta^2/2}$.  But from the above, this means that the scheme $(\auth^t,\ver^t)$ cannot $\frac{1}{2}-8 e^{-t\delta^2/4}$-authenticate.  Thus, \[t\eps > \frac{1}{2}-8 e^{-t\delta^2/4}\]

Choosing $t=1/3\eps$ gives $\delta < 7\sqrt{\eps}$.  

\end{proof}

\subsection{Unforgeability}

In this section we show that our security definition of authentication schemes relative to a basis implies the classical security definition of authentication schemes -- namely, that the adversary, after having received the authenticated message state, cannot produce two distinct authenticated message-tag pairs with non-negligible probability. This property is called \textbf{unforgeability}. Thus this shows that our security definition recovers the Boneh-Zhandry security definition for one-time MACs.

Our model for signature forgery is the following. Let $(\auth,\ver)$ be a classical authentication scheme that is $\basis$-respecting for some basis. We will let $\basis$ be the computational basis without loss of generality. Furthermore, we will restrict our attention to MACs where for a classical message $m \in \msgspace$, $\auth_k(m) = (m,\sigma(k,m))$, although our arguments extend to general classical authentication schemes. 

Without loss of generality we can assume that the initial message state is a pure state $\ket{\rho}^{\msgspace \advspace} = \sum_m \alpha_m \ket{m}^\msgspace \otimes \ket{\varphi_m}^\advspace$ where the $\ket{\varphi_m}$ are arbitary pure states held by the adversary\footnote{For notational simplicity we omit mention of the sender/receiver's private space $\safespace$; our arguments proceed similarly when we take it into account.}. After signing, we have
$$
	\tau^{\keyspace \authspace \advspace} = \Ex_k \ketbra{k}{k} \otimes \auth_k ( \rho^{\msgspace \advspace} ).
$$
The adversary applies some superoperator $\attack$ on $\authspace \advspace$ and outputs a system on $\authspace_1 \authspace_2 \advspace$. The spaces $\authspace_1$ and $\authspace_2$ are both isomorphic to $\authspace$. Let the tampered state be denoted as
$$
	\wt{\tau}^{\keyspace \authspace_1 \authspace_2 \advspace} = \Ex_k \ketbra{k}{k} \otimes \attack(\auth_k ( \rho^{\msgspace \advspace} )).
$$
We define the \textbf{probability of forgery by $\attack$ on input $\rho$} to be the probability that, upon measuring $\keyspace$, $\authspace_1$, and $\authspace_2$ in the computational basis, we obtain a key $k$ and two valid signed messages $(m,\sigma(k,m))$ and $(m',\sigma(k,m'))$ with $m \neq m'$. 

The next theorem shows that quantum-secure authentication schemes possess the unforgeability property. The idea of the proof is as follows: suppose that there was an authentication scheme $(\auth,\ver)$, an adversary $\attack$ and an initial message state $\rho^{\msgspace}$ such that $\attack$ on input $\rho$ could forge an authenticated message with non-negligible probability. Using the fact that the authentication scheme is secure, we can in fact find a \emph{fixed} message $m \in \msgspace$ and another adversary $\what{\attack}$ that, when given an authentication of message $m$, forges a valid signed message $(m',\sigma(k,m'))$ where $m' \neq m$ with non-negligible probability. The definition of secure authentication scheme easily implies this is impossible. 

\newcommand{\forge}{{\mathsf{ForgeCheck}}}
\newcommand{\meas}{{\mathcal{M}}}

\begin{theorem}
	Let $(\auth,\ver)$ be an authentication scheme that is $\eps$-authenticating relative to the computational basis. Let $\attack$ be a forger. Then for all initial message states $\rho^{\msgspace \advspace}$, the probability of forgery by $\attack$ on input $\rho$ is at most $3\eps$.
\end{theorem}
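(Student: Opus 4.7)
The plan is to reduce the two-message forgery event to a single-message, fixed-input impossibility, which follows directly from the security definition. First I would establish the following single-message fact: if the input is a fixed computational basis state $\ket{m}^\msgspace \otimes \ket{\varphi_m}^\advspace$, then no adversary $\attack'$ can produce on its output $\authspace$ register a valid signed message $(m',\sigma(k,m'))$ with $m' \neq m$ with probability greater than $\eps$. Indeed, the ideal basis-respecting adversary furnished by the security definition receives the computational basis input $\ket{m,\sigma(k,m)}^\authspace$ and, by block-diagonality of its $V$, leaves the $\authspace$ register unchanged; so after $\ver_k$ the $\msgspace$ register is deterministically $m$. Any deviation in the real experiment is then bounded by the $\eps$-closeness of real and ideal states in trace distance.

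Second, I would apply this single-message fact twice in the fixed-input setting $\ket{m}^\msgspace \otimes \ket{\varphi_m}^\advspace$: once to $\attack$ directly, bounding ``Case~B'' that $(m_1,\sigma_1)$ is valid with $m_1 \neq m$, and once to the modified adversary $\what{\attack}$ that runs $\attack$ and then swaps the contents of $\authspace_1$ and $\authspace_2$, bounding ``Case~A'' that $(m_2,\sigma_2)$ is valid with $m_2 \neq m$ in $\attack$'s original output. The two-message forgery event $m_1 \neq m_2$ with both valid is contained in Case~A $\cup$ Case~B: if $m_1 = m$, then $m_1 \neq m_2$ forces $m_2 \neq m$ and yields Case~A; otherwise Case~B holds. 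A union bound gives that the fixed-input two-message forgery probability on $\ket{m}^\msgspace \ket{\varphi_m}^\advspace$ is at most $2\eps$.

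Third, I would bridge from the fixed-input bound to the general input $\ket{\rho} = \sum_m \alpha_m \ket{m}^\msgspace \ket{\varphi_m}^\advspace$. Applying the security definition once to $\attack$ on $\rho$ yields an ideal $\id$ whose real/ideal trace distance is at most $\eps$, so the forgery probability differs by at most $\eps$ between the real and ideal experiments. The ideal forgery probability is then controlled by exploiting that the forgery POVM is diagonal in the computational basis on $\msgspace$ and $\authspace_2$, together with the block-diagonal form of $\id$ in the $\authspace$ register: these two facts together suppress all off-diagonal cross-terms $\alpha_m \alpha_{m'}^*\,\ketbra{m}{m'}$, so the ideal forgery probability decomposes cleanly as $\sum_m |\alpha_m|^2 q_m$, where $q_m$ is the ideal adversary's forgery probability on the fixed-$m$ input. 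A Case~A/Case~B analysis applied within this decomposition bounds each $q_m$ by $2\eps$, giving a total of $2\eps$ for the ideal forgery probability and hence $3\eps$ after combining with the $\eps$ real/ideal closeness.

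The principal technical obstacle is the last step: the specific ideal adversary approximating $\attack$ on $\rho$ is not the same as the ideal adversaries witnessing the fixed-$m$ single-message bounds, so the per-$m$ bound on $q_m$ must be derived from the \emph{block-diagonal structure} of $\id$ itself, together with a second application of the security definition to a suitable modified experiment, rather than transferred from the fixed-input analysis directly. Exploiting the fact that the forgery measurement lies in the computational basis (killing coherences between distinct values of $m$) is the key observation that lets the averaging go through without incurring the $\sqrt{\eps}$ loss that a naive appeal to Theorem~\ref{thm:indistfrommeasured} would produce.
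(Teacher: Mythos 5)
Your proposal is correct and follows essentially the same route as the paper: replace the real forger by an ideal basis-respecting adversary at a cost of $\eps$, use block-diagonality of the ideal adversary together with the computational-basis forgery measurement to decompose the ideal forgery probability as $\sum_m |\alpha_m|^2 q_m$, and bound each conditional term by a further appeal to the security definition on a fixed basis-state input. The only differences are cosmetic: the paper argues by contradiction via an averaging step to isolate one bad $m$, and its final step (``$(m,\sigma(k,m))$ has negligible information about a signature of $m'\neq m$'') is left informal, whereas your swap-based single-message lemma makes that step precise.
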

\begin{proof}
	Suppose for contradiction that the probability of forgery is at least $\delta = 3\eps$. Since the scheme is $\eps$-authenticating relative to the computational basis, we can simulate the forger by an ideal adversary $\id$ that respects the computational basis: on input $\tau^{\keyspace \authspace \advspace}$ (the authentication of $\rho$), it first measures the $\authspace$ register to yield a valid signed message $(m,\sigma(k,m))$. Then, conditioned on this result, it applies an arbitrary quantum operation on the $\advspace$ register. Since $\attack$ is a forger, the ideal adversary $\id$ is also a forger: measuring $\keyspace \authspace \advspace$ in the computational basis will yield $k$, $(m,\sigma(k,m))$ and $(m',\sigma(k,m'))$ where $m \neq m'$ with probability at least $\delta - \eps = 2\eps$. Let $E_m$ denote the event that measuring $\authspace$ yields a valid signature of the message $m$. Let $F_m$ denote the event that measuring $\advspace$ yields a valid signature of a message that's distinct from $m$.
	
	Thus
	\begin{align*}
		\sum_{m} \Pr[E_m] \cdot \Pr[F_m | E_m] \geq 2\eps
	\end{align*}
	where the probabilities are with respect to the ideal adversary $\id$. Thus by averaging there exists an $m$ where $\Pr[F_m | E_m] \geq 2\eps$. But notice that $\Pr[E_m]$ is independent of the key, and simply $|\alpha_m|^2$, because the ideal adversary only measures the $\authspace$ register of $\tau$ in the computational basis. Thus, if we condition the state $\id(\tau)$ on the event $E_m$, we have the following state:
	$$
		\id(\tau^{\keyspace \authspace \advspace})\big |_{E_m} = \Ex_k \ketbra{k}{k}^{\keyspace} \otimes \ketbra{m,\sigma(k,m)}{m,\sigma(k,m)}^{\authspace} \otimes \id_{m,\sigma(k,m)} \left(\ketbra{\varphi_m}{\varphi_m}^{\advspace} \right)
	$$
	where $\id_{m,\sigma(k,m)}$ denotes the attack that the ideal adversary performs on the side information, conditioned on reading $(m,\sigma(k,m))$ in $\authspace$. However, $\Pr[F_m | E_m] \geq 2\eps$ implies that measuring $\Ex_k \ketbra{k}{k} \otimes \id_{m,\sigma(k,m)} \left(\ketbra{\varphi_m}{\varphi_m}^{\advspace} \right)$ in the computational basis yields $k$ and a forgery $(m',\sigma(k,m'))$ where $m' \neq m$ with probability at least $2\eps$. However, this is impossible, as $(m,\sigma(k,m))$ should have negligible information about a valid signature of $m'$. 	
\end{proof}

\section{Properties of total authentication}
\label{sec:quantum_properties}

\subsection{Encryption}

Analogous to the Barnum et al.'s~\cite{barnum2002authentication} result that authentication implies encryption, we show that authentication when considering side information must encrypt the state, even to an adversary that may be entangled with the state.  This result is compatible with Barnum et al.'s: we start from a stronger property that considers side information, and end with a stronger form of authentication that also considers side information.
%
%\begin{definition} If $\auth$ is an $\eps$-secure encryption scheme with side information, then for any two states $\rho_0^{\msgspace\auxspace},\rho_1^{\msgspace\auxspace}$ such that $\rho_0^\auxspace$ and $\rho_1^\auxspace$ are $\delta$-close, the following two distributions are $\delta+\eps$ close:
%	\begin{itemize}
%		\item $\Ex_k \left[\auth_k\otimes \Id^\auxspace\right] (\rho_0^{\msgspace\auxspace})$ and 
%		\item $\Ex_k \left[\auth_k\otimes \Id^\auxspace\right] (\rho_1^{\msgspace\auxspace})$
%	\end{itemize}
%\end{definition}

\begin{theorem} If $(\auth,\ver)$ $\eps$-totally authenticating, then for any two states $\rho_0^{\msgspace \auxspace},\rho_1^{\msgspace  \auxspace}$ such that $\rho_0^\auxspace$ and $\rho_1^\auxspace$ are $\delta$-close, the following two states are $\delta+14\sqrt{\eps}$ close:
	\begin{itemize}
		\item $\Ex_k \auth_k (\rho_0^{\msgspace  \auxspace})$ and 
		\item $\Ex_k \auth_k (\rho_1^{\msgspace  \auxspace})$
	\end{itemize}
\end{theorem}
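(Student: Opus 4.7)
The plan is to turn any candidate distinguisher between the two authenticated states into an adversary for the authentication protocol, and then exploit the fact that the oblivious simulator guaranteed by total authentication sees only the adversary's marginal on $\advspace$.

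After first purifying $\rho_0, \rho_1$ (by Uhlmann one can pick purifications so that the $\advspace$-marginal trace distance is still $\delta$, possibly after enlarging $\advspace$), I would form the hybrid input
\[
  \ket{\sigma}^{\msgspace \safespace \advspace} = \tfrac{1}{\sqrt{2}} \ket{0}^\safespace \ket{\rho_0}^{\msgspace \advspace} + \tfrac{1}{\sqrt{2}} \ket{1}^\safespace \ket{\rho_1}^{\msgspace \advspace},
\]
where $\safespace$ is a sender/receiver register inaccessible to the adversary; note $\sigma^\advspace = \tfrac{1}{2}(\rho_0^\advspace + \rho_1^\advspace)$, so the adversary's view of the aux register is nearly $b$-independent. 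Next I would implement the distinguisher as a Naimark dilation $V$ on $\authspace \advspace$ that coherently writes its output bit $b'$ into an ancilla $\advspace'$, and use $V$ as the adversary's attack in the authentication of $\ket{\sigma}$. Total authentication then yields an oblivious simulator $\id = \Id^\authspace \otimes \mathcal{F}^{\advspace}$ whose real and ideal post-verification states (including the key register) are $\eps$-close in trace distance.

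The payoff comes from analyzing the ideal experiment. Since $\mathcal{F}$ does not act on $\safespace$, it commutes with measurement of $\safespace$, and
\[
  \Pr_{\text{ideal}}[b' = \safespace] = \tfrac{1}{2}(1-q_0) + \tfrac{1}{2} q_1, \qquad q_s := \Pr[\mathcal{F}(\rho_s^\advspace) \to 1].
\]
Because $\mathcal{F}$ is CPTP and $\|\rho_0^\advspace - \rho_1^\advspace\|_1 \leq \delta$, one has $|q_0 - q_1| \leq \delta/2$, so $\Pr_{\text{ideal}}[b' = \safespace] \leq \tfrac{1}{2} + \delta/4$. Transferring this to the real experiment via the trace-distance bound, and then equating it with a direct calculation of $\Pr_{\text{real}}[b' = \safespace, \acc]$ in terms of the distinguisher's output probabilities $p_s = \Pr[V(\Ex_k \auth_k(\rho_s)) \to 1, \acc]$, yields $|p_0 - p_1| \leq \delta/2 + O(\eps)$. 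Maximizing over $V$ gives the desired bound on $\bigl\| \Ex_k \auth_k(\rho_0) - \Ex_k \auth_k(\rho_1) \bigr\|_1$.

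The main obstacle I anticipate is the bookkeeping of subnormalized states: the Naimark dilation $V$ may disturb $\authspace$ and cause $\ver_k$ to reject, so both real and ideal states live in $\dens_{\leq}$, and event probabilities must be compared via the full trace-distance bound $|\Tr(\Pi(\alpha-\beta))| \leq \|\alpha-\beta\|_1$ rather than the halved version available for normalized states. A second conceptual point is that combining $\rho_0$ and $\rho_1$ into a single hybrid $\ket{\sigma}$ is essential: applying total authentication separately to each $\rho_b$ would produce two unrelated simulators $\mathcal{F}_0, \mathcal{F}_1$ for which the comparison $|q_0 - q_1| \leq \delta/2$ simply need not hold. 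Finally, the outline above produces a bound of the form $\delta + O(\eps)$; the stated $\delta + 14\sqrt{\eps}$ almost certainly arises from an amplification argument analogous to the product-scheme trick in the proof of Theorem~\ref{thm:indistfrommeasured}, which is technically looser but perhaps cleaner to write down.
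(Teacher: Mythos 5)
Your high-level reduction is the same as the paper's: embed $\rho_0,\rho_1$ into a hybrid flagged by a register $\safespace$ the adversary cannot touch, run the distinguisher as the attack, invoke total authentication to replace it by an oblivious adversary, and observe that an oblivious adversary's output bit can correlate with $\safespace$ only through the $\advspace$-marginals, which are $\delta$-close. (The paper handles the $\delta$ part slightly differently, by a triangle-inequality reduction to the case $\rho_0^\advspace=\rho_1^\advspace$, but your way of carrying $\delta$ through via $|q_0-q_1|\le\delta/2$ is equally fine, as is your coherent superposition in place of the paper's classical mixture.)

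The genuine gap is in the final step, where you claim the single-shot reduction already yields $\delta+O(\eps)$ and that the $\sqrt{\eps}$ is merely a "cleaner but looser" alternative. The security guarantee constrains only the \emph{post-verification} state $\Ex_k\ketbra{k}{k}\otimes[\ver_k\circ\superop\circ\auth_k](\sigma)$, and $\ver_k$ is a filter: the reject branch is simply discarded. Consequently your comparison with the ideal experiment bounds only $\bigl|\Pr[b'=1\wedge\acc\mid\safespace=0]-\Pr[b'=1\wedge\acc\mid\safespace=1]\bigr|$, whereas the trace distance between $\Ex_k\auth_k(\rho_0)$ and $\Ex_k\auth_k(\rho_1)$ is governed by the distinguisher's output statistics \emph{without} any conjunction with acceptance. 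Unwinding your own inequality one finds an extra additive term of order $\Pr[\rej]$, and for the optimal single-shot POVM the Naimark dilation $V$ may disturb $\authspace$ arbitrarily badly, so $\Pr[\rej]$ is uncontrolled; all of the distinguishing advantage could in principle live in the reject branch, about which total authentication says nothing. This is exactly why the paper first handles only distinguishers with advantage $1-\gamma$ close to $1$ — a gentle-measurement argument then shows the post-measurement state is $O(\sqrt{\gamma})$-close to the undisturbed authenticated state, so verification accepts with probability near $1$ and the accept branch captures essentially everything — and then boosts an arbitrary advantage-$\delta$ distinguisher to a near-perfect one on the $t$-fold product scheme (whose security degrades only to $t\eps$). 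The amplification is not an optional tidier route; it is what closes the hole you filed under "bookkeeping of subnormalized states," and it is the source of the $14\sqrt{\eps}$.
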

We remark that the conclusion here is the same as \emph{q-IND-CPA} security of quantum encryption schemes, as defined by~\cite{broadbent2015quantum}. 
\begin{proof} First, we observe that any scheme that gives $\eps$ secure encryption in the case $\delta=0$ gives $2\eps+\delta$ secure encryption in the general case. Indeed, by assumption, $\Ex_k \auth_k (\rho_0^{\msgspace  \auxspace})$ is $\eps$-close to $\Ex_k \auth_k(\ketbra{0}{0})\otimes \rho_0^\auxspace$, which is $\delta$ close to $\Ex_k \auth_k(\ketbra{0}{0})\otimes \rho_1^\auxspace$, which is $\eps$ close to $\Ex_k \auth_k(\rho_1^{\msgspace\auxspace})$. Therefore, it suffices to prove that $\auth$ is $7\sqrt{\eps}$ secure for states with $\delta=0$.
	
Our proof will closely follow the proof of Theorem~\ref{thm:indistfrommeasured}. We prove this theorem by contradiction: assuming an adversary can distinguish the two states, we will obtain a violation of the security of total authentication. Let $\rho_0^{\msgspace\auxspace},\rho_1^{\msgspace\auxspace}$ be quantum states. Suppose $D$ is a distinguisher such that
	\begin{itemize}
		\item $D(\Ex_k \auth_k  (\rho_1^{\msgspace\auxspace}))$ outputs 1 with probability at least $1-\gamma$, and 
		\item $D(\Ex_k \auth_k (\rho_0^{\msgspace\auxspace}))$ outputs 1 with probability at most $\gamma$
	\end{itemize}
	
	We set up the state $\frac{1}{2}\ketbra{0}{0}^\safespace \otimes\rho_0^{\msgspace\auxspace}+\frac{1}{2}\ketbra{1}{1}^\safespace \otimes \rho_1^{\msgspace\auxspace}$ where $\safespace$ is a private qubit register. Next, we authenticate and then apply the distinguisher $D$, saving its output to an auxiliary qubit register $\advspace'$. The subsequent state is thus $4\sqrt{2\gamma}$-close to 
\begin{equation}
\label{eq:quantum_prop_2}
	\frac{1}{2} \left [ \ketbra{0}{0}^{\safespace} \otimes \Ex_k \auth_k (\rho_0^{\msgspace \advspace}) \otimes \ketbra{0}{0}^{\advspace'} + \ketbra{1}{1}^{\safespace} \otimes \Ex_k \auth_k (\rho_1^{\msgspace\advspace})  \otimes \ketbra{1}{1}^{\advspace'} \right].
\end{equation}
	
	Now, this state will pass verification with probability 1.  Therefore, since $(\auth,\ver)$ is a secure total authentication scheme, this state is approximated by an oblivious adversary $\id$ that only acts on $\advspace$ of the authenticated state $\frac{1}{2}\ketbra{0}{0}^\safespace \otimes \auth_k(\rho_0^{\msgspace\auxspace})+\frac{1}{2}\ketbra{1}{1}^\safespace \otimes \auth_k(\rho_1^{\msgspace\auxspace})$. However, since $\rho_0^\advspace = \rho_1^\advspace$, the $\advspace$ register of this state is independent of whether the $\safespace$ qubit is $0$ or $1$. Therefore the output of the ideal adversary $\id$ must be at least $1/2$-far from the state given in line~\eqref{eq:quantum_prop_2}, which is a contradiction if $4\sqrt{2\gamma} > 1/2$. Thus, we have that if the scheme is $\frac{1}{2}-4\sqrt{2\gamma}$-totally authenticating, there is no distinguisher with advantage $1-\gamma$. Finally, we can boost a low-advantage-distinguisher for the  scheme $(\auth,\ver)$ into a high-advantage distinguisher in the same way as in Theorem~\ref{thm:indistfrommeasured}, and obtain the conclusion for all distinguishers.
	
%	Next, we show how to boost a low-advantage distinguisher for a scheme $(\auth,\ver)$ into a high-advantage distinguisher for the product scheme $(\auth^t,\ver^t)$ which acts on message space $\msgspace^t$ by applying $\auth$ to each message component with an independent key.
%	
%	A simple hybrid argument shows that, if $(\auth,\ver)$ $\eps$-authenticates, then $(\auth^t,\ver^t)$ $t\eps$-authenticates in the computational basis.  
%	
%	Next, assume $D$ distinguishes from measured for the state $\rho^{\msgspace\auxspace}$ in the scheme $(\auth,\ver)$ with advantage $\delta$.  Then we can boost the success probability to a distinguisher $D^t$ for the state $(\rho^{\msgspace\auxspace})^{\otimes t}$ in scheme $(\auth^t,\ver^t)$ with advantage $1-2e^{-t\delta^2/2}$.  But from the above, this means that the scheme $(\auth^t,\ver^t)$ cannot $\frac{1}{2}-8 e^{-t\delta^2/4}$-authenticate.  Thus, \[t\eps > \frac{1}{2}-8 e^{-t\delta^2/4}\]
%	
%	Choosing $t=1/3/\eps$ gives $\delta < 7\sqrt{\eps}$.  
%	
	
\end{proof}

\subsection{Quantum Key Distribution}
\newcommand{\alice}{{\mathcal{A}}}
\newcommand{\bob}{{\mathcal{B}}}

Suppose we have a total authentication scheme. Then as argued in the Introduction, we immediately get a simple method to perform quantum key distribution. However, the QKD scheme sketched in the Introduction is rather fragile: any small amount of tampering by the adversary will cause Alice and Bob to abort. Here we sketch a slightly more robust way of carrying out QKD using a total authentication scheme.

Suppose Alice and Bob want to generate $n$ bits of perfectly correlated key bits. We now describe a protocol that takes $2$ rounds and $O(n \log n)$ bits of communication, and tolerates the adversary attacking at most $O(n/\log n)$ fraction of the qubits of communication. If this is the case, then Alice and Bob can distill at least $\Omega(n)$ bits of shared key. Let $(\auth,\ver)$ be a scheme that encodes single qubits as $O(\log n)$ qubits, and is $\eps$-totally authenticating for $\eps = n^{-\Omega(1)}$. The unitary design scheme is one such example. 

%An $\eps$-secure protocol satisfying \emph{total authentication} ($Q$) immediately gives a method for Quantum Key Distribution that only takes 2 rounds and $O(n\log n)$ bits of communication to get at least $(1-p)n$ bits of secret key. Here, we assume that the $Q$ protocol encodes a single bit with $O(\log n)$ bits to get $\frac{1}{\text{poly}(n)}$-secure total authentication (as seen in the Unitary-Design protocol). We also assume that the adversary attacks a constant fraction (at most $p$) of the communicating bits.
The QKD protocol is as follows:
\begin{enumerate}
\item Alice prepares the maximally entangled state over $2n$ qubits i.e. $\ket{\Phi}^{\alice \bob}=\frac{1}{\sqrt{2^n}} \sum_{x\in \{0,1\}^n}\ket{xx}^{\alice \bob}$. 
\item Alice will generate independent keys $k_1,\ldots,k_n$ for $n$ uses of the authentication scheme $(\auth,\ver)$. She authenticates each of the $n$ qubits on the $\bob$-half of $\ket{\Phi}^{\alice \bob}$ using an independent key. She sends $\bob$ to Bob.

%Alice chooses key $k$ for protocol $Q$ at random and authenticates $L_2$ using $Q$. After authentication, let the state be $\sum_{x\in L}\ket{x\sigma_x}^{L_1\mathcal{Y}}$. $l$ is the number of bits of the secret key that Alice authenticates at once. In the above statement, $l=1$.
%\item Alice sends $\mathcal{Y}$ part to Bob.
\item Bob sends a bit to Alice acknowledging that he received some state through the quantum channel (that may have been tampered by the adversary).

\item Alice sends the keys $k_1,\ldots,k_n$ over an authenticated, but non-private, classical channel.
\item On the quantum state he received, Bob performs the verification procedure $\ver_{k_1} \otimes \cdots \ver_{k_n}$ on $n$ parts of $\log n$ qubits each. He relays to Alice which parts successfully passed verification. Let $S \subset [n]$ denote the successfully unauthenticated qubits.

%\item Bob applies the Verification procedure on state $\tau^\mathcal{Y}$ to get a state in $L_2$. He sends bit $b$ to Alice representing if verifier accepted ($b=1$) or rejected.
\item Alice and Bob measure the part of their respective states corresponding to $S$ in the computational basis, and use these bits as their shared key.
%\item If $b=1$, Alice measures the qubits in $L_1$ and uses the final collapsed state $x$ as the secret key. If adversary had just forwarded the state that he received, after verification, qubits in $L_1$ and $L_2$ would be maximally entangled and thus Alice and Bob end up with same random $l$ bits.
\end{enumerate}

%Let $T$ denote the subset of encoded qubits that the adversary attacks. If $|T| \ll n/\log n$, then there is a linear fraction of ``logical'' qubits of the original maximally entangled state that were not tampered with. 

Since $(\auth,\ver)$ is totally authenticating, after Bob successfully unauthenticates the qubits in $S$, the qubits shared between Alice and Bob in $S$ will be $\approx \eps n$-close to the maximally entangled state. Thus when they both measure, they will both share keys $(x,x')$ that are $\eps n$-close to uniform, perfectly correlated, and private from any other system (because the maximally entangled state is in tensor product with any other quantum system). If we assume that the probability that Bob successfully verifies is not too small, then this means that Alice and Bob have successfully performed quantum key distribution.

\subsection{Key Reuse}
It is easy to see that our definition of total authentication implies that, conditioned on successful verification of an authentication scheme (satisfying total authentication), the key can be reused by the sender and receiver for some other purpose. This is because conditioned on the acceptance, the final state of the adversary is within $\eps/\alpha$ trace distance of being independent of the key, where $\alpha$ is the probability of acceptance in the authentication protocol.

%Alice reuses the key once she gets back an acknowledgement from Bob that he accepted the authenticated state. We have $\eps$-secure total authentication implying
%$$ \left \|\Ex_{k} \ketbra{k}{k} \otimes [(\ver_k\circ\mathcal{O})(\sigma^{\authspace \advspace})]-\Ex_{k}\ketbra{k}{k} \otimes[(\ver_k \circ \id)(\sigma^{\authspace \advspace})] \right \|_1 \le \eps
%$$
% where $\id$ is the ideal adversary. As in the ideal case, adversary never touches $k$ and the authenticated state, final state after verification is completely unentangled with the key $k$ and distribution of $k$ is uniform. Therefore, for a scheme satisfying total authentication, when Bob accepts, the final state (including adversary's register) is close in trace distance to an ideal state and we can reuse the key $k$ again. 

\section{Quantum MACs from $3$-universal hashing}
\label{sec:threewise}

In the classical setting, secure one-time MACs can be constructed via universal hashing. Let $\{h_k\}_k$ be a strongly ($2$-)universal hash family. Then it is well known that the classical authenticiation protocol $\auth_k(m) = (m,h_k(m))$ is secure against classical adversaries~\cite{wegman1981new}. Here, we show that the \emph{same} authentication protocol is also quantum-secure, provided that the hash family $\{h_k\}_k$ satisfies the following: for all distinct $m_1,m_2,m_3$, the distribution of $(h_k(m_1),h_k(m_2),h_k(m_3))$ for a randomly chosen $k \in \keyspace$ is uniform in $\tagspace^3$. Such a family is called a \emph{$3$-universal hash family}. We will overload notation and use $k(\cdot)$ to denote the function $h_k(\cdot)$. 

We note that Boneh and Zhandry showed that, when authenticating classical messages in the one-time setting, pairwise independence is sufficient to ensure that a quantum adversary cannot forge a new signed message, as long as the length of the tag is longer than the message! When the tag is shorter than the message, they showed that pairwise independence is insecure, and $3$-wise independence is necessary. 

Our analysis of the 3-wise independent Wegman-Carter MAC requires that, in order to obtain security against quantum side information, the message tag needs to be longer than the message. Thus it is conceivable that pairwise independence is sufficient for the same guarantee; we leave this as an open question.

\begin{theorem}
\label{thm:threewise}
	Let $\keyspace = \{k \}$ be a $3$-universal hash family. Let $\auth_k(m) = (m,k(m))$ and $\ver_k$ be the corresponding verification function. Then the authentication scheme $(\auth,\ver)$ is $O(\sqrt{|\msgspace|/|\tagspace|})$-authenticating relative to the computational basis.
\end{theorem}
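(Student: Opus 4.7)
The plan is to model a general adversary $\superop$ via Stinespring's theorem as a unitary $U$ on $\authspace\advspace$ (absorbing any post-selection into the side-information register $\advspace$), decompose the authenticated/attacked/verified state as a ``diagonal'' part plus an ``off-diagonal'' part, exhibit a computational-basis-respecting ideal adversary that exactly produces the diagonal part, and finally bound the norm of the off-diagonal part using the unitarity of $U$ together with $3$-universality. This last bound is the main obstacle.

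More concretely, write the purified initial message state as $\ket{\rho}^{\msgspace\safespace\advspace}=\sum_m \ket{m}^\msgspace\otimes\ket{\varphi_m}^{\safespace\advspace}$ with $\sum_m\|\ket{\varphi_m}\|^2=1$, and expand $U$ in computational-basis matrix elements $U^{(m',t'\mid m,t)}\in\linmap(\advspace)$ defined by $U\ket{m,t}\otimes\ket{\varphi}=\sum_{m',t'}\ket{m',t'}\otimes U^{(m',t'\mid m,t)}\ket{\varphi}$. A short calculation shows that after $\auth_k$, $U$, and $\ver_k$, the resulting subnormalized pure state is
\[
\ket{\phi_k}=\sum_{m,m'}\ket{m'}^\msgspace\otimes U^{(m',k(m')\mid m,k(m))}\ket{\varphi_m}^{\safespace\advspace},
\]
which I split as $\ket{\phi_k}=\ket{\phi_k^{\mathrm{diag}}}+\ket{\phi_k^{\mathrm{off}}}$ keeping only $m=m'$ in the diagonal part. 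For the ideal adversary I take a ``diagonal isometric dilation'' of $U$: set $A_{m,t}:=U^{(m,t\mid m,t)}$ (a contraction, by unitarity of $U$), extend to an isometry $V_{m,t}:\advspace\to\advspace\otimes\C^{2}$ via $V_{m,t}\ket{\varphi}=A_{m,t}\ket{\varphi}\ket{0}+B_{m,t}\ket{\varphi}\ket{1}$ with $B_{m,t}^\dagger B_{m,t}=I-A_{m,t}^\dagger A_{m,t}$, let $V=\sum_{m,t}\ketbra{m,t}{m,t}^\authspace\otimes V_{m,t}$, and use the projector $\Pi=I^\advspace\otimes\ketbra{0}{0}$. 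The resulting $\id$ is computational-basis-respecting by construction, and a direct check yields $\ver_k\!\circ\!\id\!\circ\!\auth_k(\ketbra{\rho}{\rho})=\ketbra{\phi_k^{\mathrm{diag}}}{\phi_k^{\mathrm{diag}}}$. Since the real and ideal states are block-diagonal in $k$, the trace distance becomes
\[
\Ex_k\bigl\|\ketbra{\phi_k}{\phi_k}-\ketbra{\phi_k^{\mathrm{diag}}}{\phi_k^{\mathrm{diag}}}\bigr\|_1 \le \Ex_k\bigl(\|\ket{\phi_k^{\mathrm{off}}}\|^2+2\|\ket{\phi_k^{\mathrm{off}}}\|\cdot\|\ket{\phi_k^{\mathrm{diag}}}\|\bigr).
\]

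The technical crux is bounding $\Ex_k\|\ket{\phi_k^{\mathrm{off}}}\|^2$. Expanding, this equals $\sum_{m_1,m_2}\bra{\varphi_{m_2}}(I^\safespace\otimes M_{m_1,m_2})\ket{\varphi_{m_1}}$ where $M_{m_1,m_2}:=\sum_{m'\notin\{m_1,m_2\}}\Ex_k\,U^{(m',k(m')\mid m_2,k(m_2))\dagger}U^{(m',k(m')\mid m_1,k(m_1))}$. For $m_1=m_2$, pairwise independence of $(k(m_1),k(m'))$ together with the unitarity identity $\sum_{m',t'}U^{(m',t'\mid m,t)\dagger}U^{(m',t'\mid m,t)}=I$ give $M_{m,m}\preceq\frac{1}{|\tagspace|}I$, contributing at most $1/|\tagspace|$. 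For $m_1\ne m_2$, $3$-wise independence rewrites the expectation inside $M_{m_1,m_2}$ as a uniform average over independent $t_1,t_2,t_3\in\tagspace$; adding back the $m'\in\{m_1,m_2\}$ terms to form a full sum over $m'$ and invoking $\sum_{m',t'}U^{(m',t'\mid m_2,t_2)\dagger}U^{(m',t'\mid m_1,t_1)}=\delta_{m_1,m_2}\delta_{t_1,t_2}I=0$ shows that $M_{m_1,m_2}$ equals $-|\tagspace|^{-3}$ times just the two ``subtracted'' terms $m'\in\{m_1,m_2\}$; a double Cauchy-Schwarz argument using unitarity twice bounds each such term by $|\tagspace|^2$ in operator norm, yielding $\|M_{m_1,m_2}\|_\infty\le 2/|\tagspace|$.

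Finally, combining $\sum_m\|\ket{\varphi_m}\|^2=1$ with Cauchy-Schwarz to obtain $\bigl(\sum_m\|\ket{\varphi_m}\|\bigr)^2\le|\msgspace|$, the off-diagonal ($m_1\ne m_2$) contribution is at most $2|\msgspace|/|\tagspace|$, so $\Ex_k\|\ket{\phi_k^{\mathrm{off}}}\|^2=O(|\msgspace|/|\tagspace|)$. Together with $\Ex_k\|\ket{\phi_k^{\mathrm{diag}}}\|^2\le 1$ and one more Cauchy-Schwarz application on the cross term, the displayed trace-distance bound becomes $O(\sqrt{|\msgspace|/|\tagspace|})$, as required.
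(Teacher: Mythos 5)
Your proposal is correct and follows essentially the same route as the paper's proof: decompose the post-verification state into a diagonal (message-preserved) part and an off-diagonal part, realize the diagonal part exactly by a computational-basis-respecting ideal adversary, and bound $\Ex_k\|\ket{\phi_k^{\mathrm{off}}}\|^2$ by using $3$-wise independence to replace the correlated tag of the altered message by a fresh uniform one, completing the sum over target messages so that unitarity/orthogonality annihilates it, and controlling the completion terms by Cauchy--Schwarz, which yields $O(|\msgspace|/|\tagspace|)$ and hence the $O(\sqrt{|\msgspace|/|\tagspace|})$ trace-distance bound. The only differences are organizational: you work with operator-valued matrix elements $U^{(m',t'\mid m,t)}$ and build the ideal adversary as an isometric dilation of the diagonal blocks, whereas the paper expands in explicit amplitudes over a Schmidt decomposition and constructs the ideal adversary via a copy--compare--uncompute circuit applied to the authenticated registers; both constructions fit the paper's definition of a basis-respecting adversary and give the same bound.
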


Before beginning the proof we first state what the implications for key length are. Suppose we wish to guarantee that the Wegman-Carter MAC is $\eps$-authenticating relative to the computational basis, then $|\msgspace|/|\tagspace| \leq O(\eps^2)$, which implies that $\log |\tagspace| \geq \log |\msgspace| + 2\log \frac{1}{\eps} + O(1)$. To ensure three-wise independence, it is sufficient for the key to have length $3\log |\msgspace| + 6\log \frac{1}{\eps} + O(1)$.

\begin{proof}
To prove this, we need to show that for all message states $\rho^{\msgspace \safespace \advspace}$ and all adversaries $\attack \in \metalinmap(\authspace \advspace,\authspace \advspace)$, the result of the QMAC is to reduce the action of the adversary on the authenticated message to an ideal, computational basis-respecting adversary. 

We will concentrate on the case of signing pure state messages -- this is because we can always purify the initial message state, and give the purification to the adversary. Furthermore, for simplicity we will consider the case where the register $\safespace$ (corresponding to the sender/receiver's private space) is empty. In other words, we will show that Wegman-Carter MAC is a quantum secure MAC when the initial message state is a state $\ket{\rho}^{\msgspace \advspace} = \sum_{m} \alpha_{m} \ket{m}^{\msgspace} \otimes \ket{\varphi_m}^{\advspace}$. The register $\msgspace$ corresponds to the message, and the register $\advspace$ is held by the adversary. At the end we will discuss how the proof generalizes to the case of non-empty $\safespace$.

It will be convenient to work with the Schmidt decomposition of $\ket{\rho}$, which we write as
$$
	\ket{\rho}^{\msgspace \advspace} = \sum_z \sqrt{\lambda_z} \left ( \sum_m \alpha_{zm} \ket{m}^{\msgspace} \right ) \otimes \ket{\varphi_z}^{\advspace}
$$
where for $z \neq z'$, we have $\ip{\varphi_z}{\varphi_{z'}} = 0$, and the $\lambda_z$'s are nonnegative numbers summing to $1$. Furthermore, the dimension of the span of $\{ \ket{\varphi_z} \}_z$ is at most $|\msgspace|$. 

After signing, the state becomes
$$
	\sigma^{\keyspace \authspace \advspace} = \Ex_{k} \ketbra{k}{k} \otimes 
	\auth_k(\rho)
$$
where $\authspace = \msgspace \tagspace$. Now consider an attack $\attack$ of the adversary. By Stinespring's Dilation Theorem, the superoperator $\attack$ can be implemented by applying a unitary $V$ on registers $\authspace \advspace$, as well as some auxiliary register $\advspace'$ held by the adversary, followed by a projective measurement $P$ on $\advspace \advspace'$, followed by tracing out $\advspace'$. 

First, we will assume that the auxiliary space $\advspace'$ is part of the purification in $\ket{\rho}^{\msgspace \advspace}$. Secondly, we will ignore the projector $P$ for now, and handle it later. 
%Since we're giving the adversary as much power as possible, we can assume that the attack $\attack$ is simply a unitary $V$ acting on $\authspace \advspace$. 

We specify the action of $V$ on $\authspace \advspace$ as
$$
	V: \ket{m,t}^{\msgspace \tagspace} \otimes \ket{\varphi_z}^{\advspace} \mapsto \ket{\psi_{mtz}}^{\msgspace \tagspace \advspace}
$$
where $\{\ket{\psi_{mtz}}\}$ are a collection of states in $\msgspace \tagspace \advspace$ such that for all $(m,t,z) \neq (m',t',z')$, $\ip{\psi_{mtz}}{\psi_{m't'z'}} = 0$. Furthermore, write the states as follows:
$$
	\ket{\psi_{mtz}} = \sum_{a,b} \beta^{mtz}_{ab} \ket{a,b} \otimes \ket{\phi^{mtz}_{ab}}
$$
where the $\{ \ket{\phi^{mtz}_{ab}} \}$ are an arbitrary collection of unit vectors residing in the space $\advspace$, and $\ket{a,b}$ are vectors in $\authspace = \msgspace\tagspace$. Therefore after the attack we have
$$
	\wt{\sigma}^{\keyspace \authspace \advspace} = \Ex_k \ketbra{k}{k} \otimes V\, \auth_k(\rho) \, V^\dagger.
$$
Now we apply the verification procedure to this state to obtain $\tau$, where we've conditioned on the procedure accepting:
$$
	\tau^{\keyspace \authspace \advspace} = \ver(\wt{\sigma}^{\keyspace \authspace \advspace}) = \Ex_k \ketbra{k}{k} \otimes \ver_k \left(V\, \auth_k(\rho) \, V^\dagger \right)
$$
Note that $\tau$ does not have unit trace in general (because the verification procedure $\ver_k$ may not pass with probability $1$). For a fixed key $k$, we can write 
$$
	\ket{\tau_{k}} = \ver_k \, V\, \auth_k \ket{\rho} = \sum_{z,m,a} \sqrt{\lambda_z} \alpha_{zm} \beta^{mk_mz}_{ak_a} \,\, \ket{a}^{\msgspace} \otimes \ket{\phi^{mk_mz}_{ak_a} }^{\tagspace}
$$
where we abbreviate $k(m)$ and $k(a)$ by $k_m$ and $k_a$ respectively. We can decompose the vector $\ket{\tau_k} = \ket{\tau_{k,ideal}} + \ket{\tau_{k,err}}$ where
\begin{align}
	\ket{\tau_{k,ideal}}^{\msgspace \tagspace \advspace} = \sum_{z,m} \sqrt{\lambda_z} \alpha_{zm} \beta^{mk_mz}_{mk_m}  \,\, \ket{m}^{\msgspace}\otimes \ket{\phi^{mk_mz}_{mk_m} }^{\advspace} \label{eq:ideal} \\
	\ket{\tau_{k,err}}^{\msgspace \tagspace \advspace} = \sum_{z,m,a: a \neq m} \sqrt{\lambda_z}  \alpha_{zm} \beta^{mk_mz}_{ak_a}  \,\, \ket{a}^{\msgspace} \otimes \ket{\phi^{mk_mz}_{ak_a} }^{\advspace}
	\label{eq:error}
\end{align}
Thus $\tau^{\keyspace \authspace \advspace} = \tau_{ideal} + \tau_{err}$ where $$\tau_{ideal} = \Ex_k \ketbra{k}{k} \otimes \ketbra{\tau_{k,ideal}}{\tau_{k,ideal}},$$ and let $$\tau_{err} = \Ex_k \ketbra{k}{k} \otimes \left ( \ketbra{\tau_{k,ideal}}{\tau_{k,err}} + \ketbra{\tau_{k,err}}{\tau_{k,ideal}} + \ketbra{\tau_{k,err}}{\tau_{k,err}} \right).$$
The $\tau_{ideal}$ represents the part of $\tau$ that looks like it underwent an \emph{ideal} attack, while the term $\tau_{ideal}$ represents the rest of $\tau$. We will bound this error term and show that its size is small within $\tau$, and thus this will show that $\tau$ is close to the result of an ideal attack. 

To bound the size of $\tau_{err}$, we note that 
\begin{align*}
	\| \tau_{err} \|_1 &\leq \Ex_k \left [ 2 \| \,\, \ketbra{\tau_{k,ideal}}{\tau_{k,err}} \,\,  \|_1 + \| \,\, \ketbra{\tau_{k,err}}{\tau_{k,err}} \,\, \|_1 \right ] \\
	&= \Ex_k \left [ 2 \sqrt{\ip{\tau_{k,err}}{\tau_{k,err}} \cdot \ip{\tau_{k,ideal}}{\tau_{k,ideal}}} + \ip{\tau_{k,err}}{\tau_{k,err}} \right ] \\
	&\leq 3 \Ex_k \sqrt{\ip{\tau_{k,err}}{\tau_{k,err}}} \\
	&\leq 3 \sqrt{ \Ex_k \ip{\tau_{k,err}}{\tau_{k,err}} }
\end{align*}
where in the equality we used that for two pure states $\ket{\varphi}$ and $\ket{\psi}$, $\| \,\, \ketbra{\varphi}{\psi} \,\, \|_1 = \sqrt{ \ip{\varphi}{\varphi} \cdot \ip{\psi}{\psi}}$. In the second-to-last inequality we used that $\ip{\tau_{k,ideal}}{\tau_{k,ideal}} \leq 1$, and in the last inequality we used the concavity of the square-root function. Now,
\begin{align}
	\Ex_k \ip{\tau_{k,err}}{\tau_{k,err}} &=  \Ex_k \sum_{\substack{z,z' \\ a,m,m' : a \notin \{m,m' \}}} \sqrt{\lambda_z \lambda_{z'}} \cdot \alpha_{zm} \conj{\alpha}_{z'm'} \cdot \beta^{mk_mz}_{ak_a} \conj{\beta}^{m'k_{m'}z'}_{ak_a} \cdot \ip{\phi^{m'k_{m'}z'}_{ak_a}}{\phi^{mk_{m}z}_{ak_a}}\\
	&= \sum_{\substack{z,z' \\ a,m,m' : a \notin \{m,m' \}}} \sqrt{\lambda_z \lambda_{z'}} \cdot \alpha_{zm} \conj{\alpha}_{z'm'} \cdot \left (\Ex_k \beta^{mk_mz}_{ak_a} \conj{\beta}^{m'k_{m'}z'}_{ak_a} \cdot \ip{\phi^{m'k_{m'}z'}_{ak_a}}{\phi^{mk_{m}z}_{ak_a}} \right) \label{eq:3wise_1}
%	&= \Ex_k \sum_{a,m\neq a} |\alpha_{m}|^2 |\beta^{m,k(m)}_{a,k(a)}|^2  + \sum_{\substack{m,m' \neq m \\ a \neq m,m'}} \alpha_{m} \conj{\alpha}_{m'} \beta^{m,k(m)}_{a,k(a)} \conj{\beta}^{m',k(m')}_{a,k(a)} \ip{\phi^{m',k(m')}_{a,k(a)}}{\phi^{m,k(m)}_{a,k(a)} }
%	\label{eq:err2}
\end{align}

Observe that, for every $a, m,m'$ such that $a \notin \{m,m'\}$, $k_a$ is independent of $k_m$ and $k_{m'}$ (this is where we use $3$-wise independence of $k$). Therefore, we can write
\begin{align*}
\Ex_k \beta^{mk_mz}_{ak_a} \conj{\beta}^{m'k_{m'}z'}_{ak_a} \cdot \ip{\phi^{m'k_{m'}z'}_{ak_a}}{\phi^{mk_{m}z}_{ak_a}} = \Ex_{k,h} \beta^{mk_mz}_{ah_a} \conj{\beta}^{m'k_{m'}z'}_{ah_a} \cdot \ip{\phi^{m'k_{m'}z'}_{ah_a}}{\phi^{mk_{m}z}_{ah_a}}
\end{align*}
where the expectation on the right hand side is over two independent hash families $k$ and $h$. We have equality because $(k_m,k_{m'},k_a)$ and $(k_m,k_{m'},h_a)$ are identically distributed.

This motivates us to define 
\begin{align*}
	\xi_1 = \Ex_{k,h}  \sum_{z,z',m,m'} \sqrt{\lambda_z \lambda_{z'}} \cdot \alpha_{zm} \conj{\alpha}_{z'm'} \cdot \beta^{mk_mz}_{mh_m} \conj{\beta}^{m'k_{m'}z'}_{mh_m} \cdot \ip{\phi^{m'k_{m'}z'}_{mh_m}}{\phi^{mk_{m}z}_{mh_m}} \\
	\xi_2 = \Ex_{k,h}  \sum_{z,z',m} \sqrt{\lambda_z \lambda_{z'}} \cdot \alpha_{zm} \conj{\alpha}_{z'm} \cdot \beta^{mk_mz}_{mh_m} \conj{\beta}^{mk_{m}z'}_{mh_m} \cdot \ip{\phi^{mk_{m}z'}_{mh_m}}{\phi^{mk_{m}z}_{mh_m}}.
\end{align*}
We will momentarily show that $\xi_1$ and $\xi_2$ are small in magnitude. Assuming this, we add $\xi_1$ and $\xi_2$ to~\eqref{eq:3wise_1} to get a nicer-looking sum:
\begin{align}
	\text{\eqref{eq:3wise_1} } + \xi_1 + \conj{\xi}_1 - \xi_2 &= \sum_{\substack{z,z' \\ a,m,m'}} \sqrt{\lambda_z \lambda_{z'}} \cdot \alpha_{zm} \conj{\alpha}_{z'm'} \cdot \left (\Ex_{k,h} \beta^{mk_mz}_{ah_a} \conj{\beta}^{m'k_{m'}z'}_{ah_a} \cdot \ip{\phi^{m'k_{m'}z'}_{ah_a}}{\phi^{mk_{m}z}_{ah_a}} \right) \\
	&= \frac{1}{|\tagspace|} \sum_{z,z',m,m'} \sqrt{\lambda_z \lambda_{z'}} \cdot \alpha_{zm} \conj{\alpha}_{z'm'} \cdot \Ex_{k} \sum_{a,b} \beta^{mk_mz}_{ab} \conj{\beta}^{m'k_{m'}z'}_{ab} \cdot \ip{\phi^{m'k_{m'}z'}_{ab}}{\phi^{mk_{m}z}_{ab}} \\
	&= \frac{1}{|\tagspace|}  \sum_{z,m} \lambda_z \cdot |\alpha_{zm}|^2 \Ex_{k} \sum_{a,b} |\beta^{mk_mz}_{ab}|^2 \\
	&= \frac{1}{|\tagspace|}.
\end{align}
To go from the second line to the third line we used the orthogonality conditions 
$$\ip{\psi_{m't'z'}}{\psi_{mtz}} = \sum_{a,b} \beta^{mtz}_{ab} \conj{\beta}^{m't'z'}_{ab} \ip{\phi^{m't'z'}_{ab}}{\phi^{mtz}_{ab} } = 0$$ whenever $(m,t,z) \neq (m',t',z')$.

Now we bound the magnitudes of $\xi_1$ and $\xi_2$. We use Cauchy-Schwarz repeatedly to bound $|\xi_1|$:
\begin{align}
	|\xi_1| &= \frac{1}{|\tagspace|} \left | \Ex_k \sum_{z,z',m,m'} \sqrt{\lambda_z \lambda_{z'}} \cdot \alpha_{zm} \conj{\alpha}_{z'm'} \cdot \sum_b \beta^{mk_mz}_{mb} \conj{\beta}^{m'k_{m'}z'}_{mb} \cdot \ip{\phi^{m'k_{m'}z'}_{mb}}{\phi^{mk_{m}z}_{mb}} \right | \\
		&\leq \frac{1}{|\tagspace|} \Ex_k \sqrt {\sum_{z,z',m,m'} |\alpha_{zm}|^2 |\alpha_{z'm'}|^2 \cdot \left | \sum_b \beta^{mk_mz}_{mb} \conj{\beta}^{m'k_{m'}z'}_{mb} \cdot \ip{\phi^{m'k_{m'}z'}_{mb}}{\phi^{mk_{m}z}_{mb}}  \right|^2 } \\
		&\leq \frac{1}{|\tagspace|} \Ex_k \sqrt {\sum_{z,z',m,m'} |\alpha_{zm}|^2 |\alpha_{z'm'}|^2 \cdot \left ( \sum_b |\beta^{mk_mz}_{mb}|^2 \cdot \| \ket{\phi^{mk_{m}z}_{mb}}  \|^2 \right)\left ( \sum_b |\beta^{m'k_{m'}z'}_{mb}|^2 \cdot \| \ket{\phi^{m'k_{m'}z'}_{mb}}  \|^2 \right) } \\
		&\leq \frac{1}{|\tagspace|} \Ex_k \sqrt {\sum_{z,z',m,m'} |\alpha_{zm}|^2 |\alpha_{z'm'}|^2 \cdot \left ( \sum_{a,b} |\beta^{mk_mz}_{ab}|^2 \cdot \| \ket{\phi^{mk_{m}z}_{ab}}  \|^2 \right)\left ( \sum_{a,b} |\beta^{m'k_{m'}z'}_{ab}|^2 \cdot \| \ket{\phi^{m'k_{m'}z'}_{ab}}  \|^2 \right) } \\
		&\leq \frac{1}{|\tagspace|} \Ex_k \sqrt {\sum_{z,z',m,m'} |\alpha_{zm}|^2 |\alpha_{z'm'}|^2 \cdot 2 } \\
		&\leq \sqrt{2} \frac{|\msgspace|}{|\tagspace|}.
\end{align}
In the last line, we used the fact that the dimension of the span of the $\ket{\varphi_z}$'s is at most $|\msgspace|$. A similar calculation will show that $|\xi_2| \leq \sqrt{2}|\msgspace|/|\tagspace|$ as well. Putting everything together, we get that 
$$
	\left | \Ex_k \ip{\tau_{k,err}}{\tau_{k,err}} \right | \leq (1 + 3\sqrt{2})|\msgspace|/|\tagspace|.
$$
This implies that
$$
	\| \tau - \tau_{ideal} \|_1 \leq 3\sqrt{6|\msgspace|/|\tagspace|}.
$$
Recall that we have ignored the final projector $P$ that the real adversary $\attack$ may have applied after applying the unitary $V$. Since $P$ acts on $\advspace$ only, it commutes with the verification operation, and thus we have that
$$
	\| P \tau P^\dagger - P \tau_{ideal}P^\dagger \|_1 \leq 3\sqrt{6|\msgspace|/|\tagspace|}.
$$
where $P \tau P^\dagger = \Ex_k \ketbra{k}{k} \otimes \ver_k \circ \attack \circ \auth_k(\rho^{\msgspace \advspace})$, the true final state of the protocol. Applying $\auth_k^{-1}$ on L.H.S. doesn't increase the distance.

Finally, we have to argue that $P \tau_{ideal} P^\dagger$ is actually equal to $\Ex_k \ketbra{k}{k} \otimes \Pi_{\mathcal{V}_k} \circ \id \circ \auth_k(\rho^{\msgspace \advspace})$ for some computational basis-respecting adversary $\id$. The ideal adversary behaves as follows when given the $\authspace \advspace$ registers of $\sigma^{\keyspace \authspace \advspace}$:
\begin{enumerate}
	\item The adversary prepares auxiliary registers $\msgspace' \tagspace' \advspace_2$ in the $\ket{0 \cdots 0}$ state. The $\authspace' = \msgspace' \tagspace'$ registers are isomorphic to $\msgspace \tagspace$, and $\advspace_2$ is a qubit register.
	\item First the ideal adversary makes a copy of the $\msgspace \tagspace$ registers in the computational basis and coherently stores the copy in auxiliary registers $\msgspace' \tagspace'$.
	\item The ideal adversary then applies the original adversary unitary $V$ to registers $\msgspace' \tagspace' \advspace$.
	\item The adversary checks whether the values of the $\msgspace \tagspace$ and $\msgspace' \tagspace'$ registers are the same in the computational basis; if so, the $\advspace_2$ qubit is set to $\ket{0}$, and the $\msgspace' \tagspace'$ registers are set to $\ket{0 \cdots 0}$. Otherwise, it is kept at $\ket{1}$. In other words, the basis vector $\ket{m,t,m',t',0}^{\msgspace \tagspace \msgspace' \tagspace' \advspace_2}$ is mapped to 
	$\ket{m,t,0 \cdots 0}^{\msgspace \tagspace \msgspace' \tagspace' \advspace_2}$ iff $m = m'$ and $t = t'$. 	
%	coherently measures using the projectors $\{ P, \Id - P \}$, where $$P = \sum_{m,t} \ketbra{m,t,m,t}{m,t,m,t}^{\msgspace \tagspace \msgspace' \tagspace'}$$ acts on $\msgspace \tagspace \msgspace' \tagspace'$, and stores the measurement outcome in the $\advspace_2$ qubit register. If the outcome corresponds to $P$, then $\advspace_2$ is set to $\ket{1}$. In other words, the adversary checks that the values of the $\msgspace \tagspace$ and $\msgspace' \tagspace'$ registers are the same in the computational basis. 
	
%	\item The adversary, conditioned on the $\advspace_2$ qubit register being $\ket{0}$, resets the $\msgspace' \tagspace'$ registers to $\ket{0\cdots 0}$.
	
	\item The adversary measures the $\advspace_2$ qubit register, and the $\advspace$ register using the POVM element $\{P, \Id - P\}$, and accepts only on outcome $\ket{0}$ for $\advspace_2$ and $P$ for $\advspace$.
\end{enumerate}
Observe that this ideal attack $\id$ can be implemented as
$$
	\id : \sigma^{\authspace \auxspace} \mapsto \Tr_{\authspace' \advspace_2} \left ( (P \otimes \ketbra{0}{0}^{\advspace_2}) V_{ideal} \sigma^{\authspace \advspace} V_{ideal}^\dagger (P \otimes \ketbra{0}{0}^{\advspace_2}) \right )
$$
where $V_{ideal}$ is an isometry mapping the space $\authspace\advspace$ to the space $\authspace \authspace'\advspace  \advspace_2$, $P \otimes \ketbra{0}{0}^{\advspace_2}$ is a projector acting on $\advspace\advspace_2$, and $\Tr_{\authspace' \advspace_2} (\cdot)$ is the partial trace over system $\authspace' \advspace_2$. Furthermore, $V_{ideal}$ is an isometry that leaves the $\msgspace \tagspace$ registers unchanged, and hence is a computational basis-respecting adversary. Observe that $P \tau_{ideal}^{\keyspace \authspace \advspace} P^\dagger = \Pi_{\mathcal{V}_k}\left(\id(\sigma^{\keyspace \authspace \advspace}) \right)$.

Thus we have shown how the final state of the protocol can be simulated by an ideal adversary, when the input state is of the form $\ket{\rho^{\msgspace \advspace}}$. What about the case when the state is of the form $\ket{\rho^{\msgspace \safespace \advspace}}$? The previous analysis works exactly the same, where we bundle together $\safespace$ and $\advspace$ as the adversary space, but we make the simple observation that here the ideal adversary constructed above would act as the identity on $\safespace$. 

Thus we have established the simulation of every adversary by an ideal adversary for \emph{every} input state $\ket{\rho^{\msgspace \safespace \advspace}}$, so this implies that $(\auth,\ver)$ is $O(\sqrt{M/T})$-authenticating relative to the computational basis.

\end{proof}

%Finally, we note that~\cite{hayden2011universal} claim that the Wegman-Carter MAC is quantum universally composable. However, it appears to be lacking some formal details, and it isn't clear that it correctly handles the case when the messages are authenticated in superposition, or when the adversary has quantum side information. 

\section{Total authentication (with key leakage) from complementary classical authentication}
\label{sec:qft}

In the previous section, we saw how the classical Wegman-Carter message authentication scheme is still secure even when used on a superposition of messages, and even if the adversary has access to quantum side information about the messages. Here, we will show that using the Wegman-Carter scheme as a primitive, we obtain \emph{total quantum state authentication}, which implies encryption of the quantum state. 

The quantum state authentication scheme is simple: the sender authenticates the message state using the Wegman-Carter MAC in the computational basis, and then authenticates again in the Fourier basis (using a new key). The verification procedure is the reverse of this: the receiver first checks the outer authentication, performs the inverse Fourier transform, and then checks the inner authentication. We call this the ``Auth-QFT-Auth'' scheme. This is pleasingly analogous to the quantum one-time pad (QOTP), which encrypts quantum data using the classical one-time pad in complementary bases. However, the QOTP does not have authentication properties. Our analysis requires the 3-wise independence property of the Wegman-Carter MAC.

There is one slight caveat: we show that Auth-QFT-Auth achieves total authentication \emph{with key leakage}. That is, we argue that conditioned on the receiver verification succeeding, the effect of an arbitrary adversary is to have ignored the authenticated state, and only act on the adversary's side information, in a manner that may depend on the key used for the second authentication (what we call the ``outer key''). In other words, we sacrifice the secrecy of the outer key, but in exchange we get complete quantum state encryption. 

%Here we analyze the authentication scheme that authenticates in complementary bases, and show that it is actually an authentication scheme that's secure in a stronger sense. In particular, this scheme will \emph{encrypt} the state. 
\subsection{The Auth-QFT-Auth scheme}

Let $\ket{\rho}^{\msgspace \advspace} = \sum_m \alpha_m \ket{m}^{\msgspace} \otimes \ket{\varphi_m}^{\advspace}$ be the initial message state, where $\advspace$ is held by the adversary. Just like in the proof of Theorem~\ref{thm:threewise}, we will omit mention of the sender/receiver's private space $\safespace$, and discuss how our proof generalizes to the case of non-empty $\safespace$ later.

Again, it will be advantageous to rewrite this state in terms of the Schmidt decomposition:  
$$
	\ket{\rho}^{\msgspace \advspace} = \sum_z \sqrt{\lambda_z} \left ( \sum_m \alpha_{zm} \ket{m}^{\msgspace} \right ) \otimes \ket{\varphi_z}^{\advspace}
$$
where for $z \neq z'$, we have $\ip{\varphi_z}{\varphi_{z'}} = 0$, and the $\lambda_z$'s are nonnegative numbers summing to $1$. Furthermore, the dimension of the span of $\{ \ket{\varphi_z} \}_z$ is at most $|\msgspace|$. 

The authentication scheme is the composed operation $\auth_2 ( H^{\otimes N} (\auth_1 (\rho)))$, where $\auth_1$ is the \emph{inner} authentication scheme that uses key $k$, $H^{\otimes N}$ is the quantum Fourier transform over $\Z_2$, and $\auth_2$ is the \emph{outer} authentication that uses key $h$. The keys $k$ and $h$ are independent.

The inner authentication scheme $\auth_1$ maps $\msgspace$ to $\authspace_1 = \msgspace \tagspace_1$. We define $N = |\authspace_1|$. $H$ is the single-qubit Hadamard unitary, and the Fourier transform $H^{\otimes N}$ acts on $\authspace_1$. The outer authentication scheme $\auth_2$ maps $\authspace_1$ to $\authspace_2 = \msgspace \tagspace_1 \tagspace_2$. The keys $k$ and $h$ live in the registers $\keyspace$ and $\hilb$, respectively. The evolution of the initial message state is as follows:
\begin{enumerate}
	\item \textbf{Inner authentication}. When the inner authentication key (henceforth called the \emph{inner key}) is $k$, the state becomes $$\sum_z \sqrt{\lambda_z} \left( \sum_m \alpha_{zm} \ket{m,k(m)}^{\authspace_1} \right) \otimes \ket{\varphi_z}^{\advspace}$$
	
	\item \textbf{Fourier transform over $\Z_2$}: Let $\{ \ket{x} \}$ be a basis for $\authspace_1$. Then: 
	$$\frac{1}{\sqrt{N}} \sum_z \sqrt{\lambda_z} \left( \sum_{m,x} \alpha_{zm} (-1)^{(m,k(m)) \cdot x} \ket{x}^{\authspace_1} \right) \otimes \ket{\varphi_z}^{\advspace}.$$
	
	\item \textbf{Outer authentication}. The outer key is denoted by $h$. The final authenticated state is then 
	$$\ket{\sigma_{kh}}^{\authspace \tagspace_2 \advspace} = \frac{1}{\sqrt{N}} \sum_z \sqrt{\lambda_z} \left( \sum_{m,x} \alpha_{zm} (-1)^{(m,k(m)) \cdot x} \ket{x,h(x)}^{\authspace_1 \tagspace_2} \right) \otimes \ket{\varphi_z}^{\advspace}$$
	where $\tagspace_2$ is the space of the tag $h(x)$.
\end{enumerate}

Let 
$$
	\sigma^{\keyspace \hilb \authspace_1 \tagspace_2 \advspace} = \Ex_{kh} \ketbra{kh}{kh}^{\keyspace \hilb} \otimes \ketbra{\sigma_{kh}}{\sigma_{kh}}^{\authspace_1 \tagspace_2 \advspace}.
$$
The adversary is then given the $\authspace_1 \tagspace_2$ registers of $\sigma$, and performs a general unitary attack $V$ that acts on $\authspace_1 \tagspace_2 \advspace$:
$$
	\wt{\sigma}^{\keyspace \hilb \authspace_1 \tagspace_2 \advspace} = V \sigma V^\dagger.
$$
Let $\wt{\tau}^{\keyspace \hilb \msgspace \advspace} =  \ver_1 \circ \QFT^{-1}   \circ \ver_2 (\wt{\sigma})$.
Let the inner authentication scheme be the 3-wise independent hashing QMAC with tag length $\log T$, and message length $\log M$. Let the outer authentication scheme be a QMAC that $\eps$-authenticates with respect to the computational basis. 

The Auth-QFT-Auth scheme can potentially leak some bits of the outer key $h$, but we will show that this is the \emph{only} thing that is leaked; otherwise, it is performs total authentication (and hence encryption).
	
\begin{theorem}[Security of the Auth-QFT-Auth scheme]
The Auth-QFT-Auth scheme is $\delta$-totally authenticating with outer key leakage, where $\delta = \eps + O(\sqrt{|\msgspace|^{3/2}/|\tagspace_1|})$.	
%	For all outer keys $h$ there exists an ideal adversary $\id_h$ acting only on $\advspace$ such that $\wt{\tau}^{\keyspace \hilb \msgspace \advspace}$ is $\eps + O(\sqrt{M^{3/2}/T})$-close to
%	$$
%	\Ex_{kh} \ketbra{kh}{kh}^{\keyspace \hilb} \otimes \id_h(\ketbra{\rho}{\rho}^{\msgspace \advspace})
\end{theorem}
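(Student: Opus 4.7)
The plan is to combine the security of the outer authentication (which $\eps$-authenticates relative to the computational basis, by Theorem~\ref{thm:threewise}) with a direct computation that exploits the $3$-wise independence of the inner key $k$. I first use outer-auth security to reduce an arbitrary adversary $\superop$ to a basis-respecting ideal attack. Regarding the inner key $k$ as part of the sender's safe register $\safespace$, one can purify over $k$ so that outer-auth security yields a single basis-respecting ideal $\id$ independent of $k$, at the cost of $\eps$ in trace distance. Writing $\id(\sigma)=\Tr_{\advspace'}(\Pi V\sigma V^\dagger\Pi)$ with $V=\sum_{x,y}\ketbra{x,y}{x,y}^{\authspace_1\tagspace_2}\otimes V_{x,y}$ and pushing $\ver_{2,h}$ through, the effective action on $\authspace_1\advspace$ reduces to $\tilde V^{(h)}=\sum_x \ketbra{x}{x}\otimes V_{x,h(x)}$, followed by the projector $\Pi$ (whose trace-norm contractivity makes it harmless). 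Because $V^\dagger V=I$, each $V_{x,y}$ is itself an isometry on $\advspace$.

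I then expand $\ver_{1,k}\circ\QFT^{-1}\circ\tilde V^{(h)}\ket{\psi_k}$ in the Schmidt basis of $\ket{\rho}$ and split the resulting sub-normalized vector on $\msgspace\advspace\advspace'$ into $(I^\msgspace\otimes W_h)\ket{\rho}+\ket{E_{k,h}}$, where $W_h=\frac{1}{N}\sum_x V_{x,h(x)}$ is an average of isometries (so $\|W_h\|\leq 1$) depending only on $h$, and $\ket{E_{k,h}}$ collects the off-diagonal ($m\neq m'$) contributions in the expansion of $\ket{\phi_{x,k}}$. The first term is precisely the output of the oblivious ideal adversary $\id_h(\sigma)=\Tr_{\advspace'}(W_h\sigma W_h^\dagger)$, which acts only on $\advspace$ and depends only on $h$ -- exactly what is needed for total authentication with outer key leakage.

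The technical heart is bounding $\Ex_{k,h}\|\ket{E_{k,h}}\|^2$. Expanding the squared norm yields a sum over tuples $(m,m',\tilde m',z,\tilde z,x,\tilde x)$ weighted by a phase depending linearly on $k(m),k(m'),k(\tilde m')$. By $3$-wise independence of $k$, $\Ex_k$ of each phase vanishes unless the $k$-linear combination is identically zero; this leaves only two surviving cases -- Case B, $m'=\tilde m'$ (with $m\neq m'$), and Case A, $m,m',\tilde m'$ all distinct. In Case B the surviving constraint forces the $\tagspace_1$-component of $x+\tilde x$ to vanish; Plancherel on the remaining Hadamard phases together with the isometry property $V_{x,h(x)}^\dagger V_{x,h(x)}=I$ yields an $O(1/|\tagspace_1|)$ contribution. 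In Case A the constraint forces both $\tagspace_1$-components to zero, and the further sum over $m$ produces $\sum_m(-1)^{(u+\tilde u)\cdot m}$, which vanishes unless $u=\tilde u$; combining the resulting cancellation with a careful Cauchy--Schwarz application yields an $O(|\msgspace|^{3/2}/|\tagspace_1|)$ contribution.

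Summing, $\Ex_{k,h}\|\ket{E_{k,h}}\|^2=O(|\msgspace|^{3/2}/|\tagspace_1|)$. Since both $\ket{\tau_{k,h}}$ and $(I\otimes W_h)\ket{\rho}$ have norm at most one, $\|\ketbra{\tau_{k,h}}{\tau_{k,h}}-(I\otimes W_h)\ketbra{\rho}{\rho}(I\otimes W_h^\dagger)\|_1=O(\|\ket{E_{k,h}}\|)$. Averaging over $k,h$, applying Jensen's inequality, and combining with the $\eps$ error from the outer-auth reduction yields the claimed bound $\delta=\eps+O(\sqrt{|\msgspace|^{3/2}/|\tagspace_1|})$. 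The main obstacle is Case A: naive triangle-inequality bounds on the $u,\tilde u$ sums lose factors of $|\msgspace|$, and obtaining the right exponent requires simultaneously exploiting the Hadamard phase cancellations (over $m$) and the isometry property of the $V_{x,h(x)}$.
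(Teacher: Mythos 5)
Your proposal is correct and follows essentially the same route as the paper: reduce to a basis-respecting outer-ideal adversary, identify the key-dependent oblivious adversary with the averaged operator $W_h=\frac{1}{N}\sum_x V_{x,h(x)}$ (the paper's $\ket{\eta_{hz}}=\frac{1}{N}\sum_x\ket{\phi_{xh_xz}}$), and bound the error via the same two-case split coming from $3$-wise independence of the inner key, finishing with Cauchy--Schwarz, Fact~\ref{fact:pure_to_dens}, and Jensen. The only cosmetic differences are that you certify $\id_h$ as a valid oblivious adversary directly from $\|W_h\|_\infty\leq 1$ rather than via the paper's explicit entangle-apply-project circuit, and your Case~B bound of $O(1/|\tagspace_1|)$ is tighter than the paper's $O(|\msgspace|^{3/2}/|\tagspace_1|)$ for that case, neither of which changes the final bound.
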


Again before starting the proof we consider the key requirements. The outer authentication scheme need not be a Wegman-Carter MAC, but let's assume that it is. In order to achieve $\delta$-total authentication, the inner MAC must be such that $|\msgspace|^{3/2}/|\tagspace_1| \leq O(\delta^2)$, or in other words, $\log |\tagspace_1| \geq \frac{3}{2} \log |\msgspace| + 2\log \frac{1}{\delta} + O(1)$. The key needed for the inner MAC must be at least $\frac{9}{2} \log |\msgspace| + 6\log \frac{1}{\delta} + O(1)$. The ``message length'' that is given to the outer MAC is $\log |\msgspace| + \log |\tagspace_1| \geq \frac{5}{2} \log |\msgspace| + 2\log \frac{1}{\delta} + O(1)$, and thus $\log |\tagspace_2| \geq \frac{5}{2} \log |\msgspace| + 4\log \frac{1}{\delta} + O(1)$. The key length for the outer MAC needs to be at least $\frac{15}{2} \log |\msgspace| + 12\log \frac{1}{\delta} + O(1)$, so the total key needed is $12\log |\msgspace| + 18\log \frac{1}{\delta} + O(1)$. 

While the inner key can be recycled (upon successful verification), the outer key unfortunately cannot be.

\begin{proof}
We will let $M = |\msgspace|$, $T = |\tagspace_1|$, and $N = MT = |\authspace_1|$. We will assume that $M^{3/2} \leq T$; otherwise the theorem statement is vacuous. 

Suppose the outer authentication scheme was $\eps$-secure. By definition, there exists an ideal computational basis adversary $\id$ such that $\| \ver_2(\wt{\sigma}) - \ver_2(\id(\sigma)) \|_1 \leq \eps$, where $\ver_2$ denotes the verification procedure for the outer authentication scheme. There exists a computational basis-respecting linear map $\Lambda \in \linmap(\authspace_2 \advspace)$ such that
$$
	\id: \sigma \mapsto \Lambda \sigma \Lambda^\dagger.
$$
Since $\Lambda$ is computational basis-respecting, we have for all $(x,s,z)$:
$$
	\Lambda \ket{x,s}^{\authspace_1 \tagspace_2} \otimes \ket{\varphi_z}^{\advspace} = \ket{x,s}^{\authspace_1 \tagspace_2} \otimes \ket{\phi_{xsz}}^{\advspace}.
$$
for some collection of (not necessarily normalized) states $\{ \ket{\phi_{xsz}} \}$.

Therefore the effect of the adversary on the authenticated state (after verification) is to be close to $\id(\sigma) = \Ex_{k,h} \ketbra{kh}{kh} \otimes \ketbra{\tau_{kh}}{\tau_{kh}}$ where for fixed inner/outer keys $k,h$
$$
	\ket{\tau_{kh}} = \frac{1}{\sqrt(N)} \sum_z \sqrt{\lambda_z} \sum_{m,x} \alpha_{zm} (-1)^{(m,k(m)) \cdot x} \ket{x} \otimes \ket{\phi_{xh_xz}}.
$$
Thus, the final state that Bob has, after performing full (i.e. inner and outer) verification, is $\eps$-close to 
$$
	\Ex_{k,h} \ketbra{kh}{kh} \otimes \ketbra{\mu_{kh}}{\mu_{kh}}
$$
where 
$$
	\ket{\mu_{kh}} = \sum_z \sqrt{\lambda_z} \sum_{m} \left( \frac{1}{N} \sum_{x,m'} \alpha_{zm'} (-1)^{(m + m',k(m) + k(m')) \cdot x} \right) \ket{m} \otimes \ket{\phi_{xh_xz}}.
$$
Then security of Auth-QFT-Auth is established if we show that for every $h$,
$$
	\Ex_k \left \| \ket{\mu_{kh}} - \ket{\nu_h} \right \|^2
$$
is small, where 
$$
	\ket{\nu_h}^{\msgspace \advspace} = \sum_z \sqrt{\lambda_z} \sum_m \alpha_{zm} \ket{m}^{\msgspace} \otimes \ket{\eta_{hz}}^{\advspace}
$$
with $\ket{\eta_{hz}}^\advspace = \frac{1}{N} \sum_x \ket{\phi_{xh_xz}}^{\advspace}$. Assuming this, the next Lemma will show that there is an ideal oblivious, but outer key-dependent, adversary whose actions lead to the global state $\Ex_{kh} \ketbra{kh}{kh} \otimes \ketbra{\nu_h}{\nu_h}$.

\begin{lemma}[Constructing the ideal oblivious adversary]
\label{lem:ideal_adv}
	For all $h$ there exists an ideal oblivious adversary $\id_h$ acting on $\advspace$ only such that
	$$
		\ketbra{\nu_h}{\nu_h}^{\msgspace \advspace} = \id_h(\ketbra{\rho}{\rho}^{\msgspace \advspace}).
	$$
\end{lemma}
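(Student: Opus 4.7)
The plan is to construct $\id_h$ explicitly as a single-Kraus-operator oblivious superoperator, using the Schmidt decomposition of $\ket{\rho}$ to read off the natural candidate. Since the $\advspace$-side Schmidt vectors $\{\ket{\varphi_z}\}$ are orthonormal, I would define the linear map
\[
K_h \;:=\; \sum_z \ketbra{\eta_{hz}}{\varphi_z} \;\in\; \linmap(\advspace),
\]
and set $\superop_h(\tau) := K_h \tau K_h^\dagger$ and $\id_h := \Id^\msgspace \otimes \superop_h$. The adversary is manifestly oblivious (it touches only $\advspace$), and a direct computation using orthonormality of $\{\ket{\varphi_z}\}$ yields
\[
(\Id^\msgspace \otimes K_h)\ket{\rho} \;=\; \sum_z \sqrt{\lambda_z}\,\Bigl(\sum_m \alpha_{zm}\ket{m}\Bigr)^{\msgspace}\otimes \ket{\eta_{hz}}^\advspace \;=\; \ket{\nu_h},
\]
so $\id_h(\ketbra{\rho}{\rho}) = \ketbra{\nu_h}{\nu_h}$ as required.

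The only nontrivial step is verifying that $\superop_h$ is CP and trace non-increasing, equivalently that $K_h^\dagger K_h \preceq \Id$. Since $K_h$ vanishes on the orthogonal complement of $\mathrm{span}\{\ket{\varphi_z}\}$, this reduces to showing that the Gram matrix $G(z,z') := \ip{\eta_{hz}}{\eta_{hz'}}$ satisfies $G \preceq \Id$. For arbitrary scalars $(c_z)$ I would apply Cauchy--Schwarz across the $x$-sum,
\[
\Bigl\|\sum_z c_z \ket{\eta_{hz}}\Bigr\|^2 \;=\; \frac{1}{N^2}\Bigl\|\sum_x \sum_z c_z \ket{\phi_{x h_x z}}\Bigr\|^2 \;\leq\; \frac{1}{N}\sum_x \Bigl\|\sum_z c_z \ket{\phi_{x h_x z}}\Bigr\|^2,
\]
and then invoke the fact that the outer ideal attack $\id:\sigma\mapsto\Lambda\sigma\Lambda^\dagger$ is trace non-increasing and computational-basis respecting. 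This forces each conditional slice $\Lambda_{xs}\ket{\varphi_z} = \ket{\phi_{xsz}}$ to be a contraction on $\advspace$, giving $\|\sum_z c_z\ket{\phi_{xsz}}\|^2 \leq \sum_z |c_z|^2$ for every fixed $(x,s)$. Specializing to $s = h_x$ and plugging into the display yields $\|\sum_z c_z \ket{\eta_{hz}}\|^2 \leq \sum_z |c_z|^2$, which is exactly $G \preceq \Id$.

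The main conceptual obstacle is just recognizing why contractivity of $\Lambda$ transfers to contractivity of $K_h$; after that, everything is bookkeeping. If one wants $\id_h$ in the Stinespring form required by the paper's formal definition of ideal adversaries, the inequality $\Id - G \succeq 0$ lets us dilate $K_h$ to an isometry $W_h\ket{\varphi_z} = \ket{\eta_{hz}}\otimes\ket{0} + \ket{\xi_z}\otimes\ket{1}$ on an ancilla $\advspace' = \C^2$, where $\{\ket{\xi_z}\}$ is any collection of vectors whose Gram matrix equals $\Id - G$ (which exists by positive semidefiniteness), followed by the projector $\Id_\advspace\otimes\ketbra{0}{0}^{\advspace'}$. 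Either representation certifies $\id_h$ as a bona fide oblivious ideal adversary witnessing the claim, and the dependence of $K_h$ on $h$ (via $h_x$ inside $\ket{\eta_{hz}}$) is precisely the outer-key leakage permitted by the definition of total authentication with key leakage.
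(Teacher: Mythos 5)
Your construction is correct, and it reaches the same map as the paper by a genuinely different route. The paper builds $\id_h$ as an explicit circuit: it prepares the entangled state $\ket{\Phi_h} = \frac{1}{\sqrt{N}}\sum_x\ket{x,h(x)}\ket{x,h(x)}$ in fresh registers, applies the Stinespring isometry $V$ of the outer basis-respecting adversary to one half of it together with $\advspace$, and then projects back onto $\ketbra{\Phi_h}{\Phi_h}\otimes\Pi$ before discarding the ancillas. Because that description is manifestly of the form (isometry, then projector, then partial trace), complete positivity and trace non-increase come for free, and the construction also exhibits an \emph{efficient} simulator, which the paper cares about elsewhere. You instead write down the single Kraus operator $K_h = \sum_z\ketbra{\eta_{hz}}{\varphi_z}$ directly and must then prove $K_h^\dagger K_h\preceq\Id$ by hand; your Cauchy--Schwarz step over the $x$-sum, combined with the contractivity of each basis slice $\ket{\varphi_z}\mapsto\ket{\phi_{xh_xz}}$ of $\Pi V$, is in effect a re-derivation of the fact that the paper's projection onto $\ket{\Phi_h}$ is norm non-increasing, so the two arguments are carrying the same load in different clothing. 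Your closing dilation to $W_h\ket{\varphi_z} = \ket{\eta_{hz}}\otimes\ket{0}+\ket{\xi_z}\otimes\ket{1}$ with the Gram matrix of $\{\ket{\xi_z}\}$ equal to $\Id - G$ is the standard way to put the map into the paper's Stinespring format, and it is sound. The trade-off: your version is shorter and makes the exact Kraus form of $\id_h$ (and its $h$-dependence) transparent, while the paper's version avoids any positivity verification and doubles as an efficiency-preserving simulator.
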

\begin{proof}
We now construct an ideal adversary $\id_h$, derived from the computational basis adversary $\id$. By definition of $\id$, there exists a computational basis-respecting isometry $V \in \isometry(\authspace_2 \advspace,\authspace_2 \advspace \authspace_2' \advspace_2)$ where $\authspace_2'$ is an auxiliary register isomorphic to $\authspace_2$, and $\advspace_2$ is an auxiliary qubit register, such that
$$
	\id : \sigma^{\authspace \auxspace} \mapsto \Tr_{\authspace' \advspace_2} \left ( \Pi V \sigma^{\authspace \advspace} V^\dagger \Pi \right ).
$$
Here $\Pi = P \otimes \ketbra{0}{0}^{\advspace_2}$ for some projector $P$ acting on $\advspace$. Furthermore, $V$ is computational basis respecting:
$$
	\Pi V\ket{x,s}^{\authspace_2} \otimes \ket{\varphi_z}^{\advspace} = \ket{x,s}^{\authspace_2} \otimes \ket{\phi_{xsz}}^{\advspace} \otimes \ket{0 \cdots 0}^{\authspace_2' \advspace_2}
$$
where the $\ket{\phi_{xsz}}^{\advspace}$ were defined above.

\newcommand{\aspace}{{\mathcal{A}}}

Now we construct the ideal general adversary $\id_h$ as follows:
\begin{enumerate}
	\item First, the adversary creates the entangled state $\ket{\Phi_h}^{\aspace \aspace'} = \frac{1}{\sqrt{N}} \sum_x \ket{x,h(x)}^{\aspace} \ket{x,h(x)}^{\aspace'}$ in new registers $\aspace \otimes \aspace'$, which are isomorphic to $\authspace_2 \otimes \authspace_2$, and $\{ \ket{x} \}$ is a basis for $\authspace_1$.
	
	\item It then applies the unitary $V$ to half of $\ket{\Phi_h}^{\aspace \aspace'}$ that resides in $\aspace$, and the $\advspace$ part of the input state $\ket{\rho}$. %The state currently looks like:
%	\begin{align}
%		&\frac{1}{\sqrt{N}} \sum_{x} V^{\advspace \aspace \to \advspace \aspace \authspace_2' \advspace_2'}\ket{\rho}^{\msgspace \advspace} \ket{x,h(x),x,h(x)}^{\aspace \aspace'} \\
%		&= \frac{1}{\sqrt{N}} \sum_{y} (\Id^{\aspace} \otimes V^{\advspace \aspace \to \advspace \aspace \authspace_2' \advspace_2'})\ket{\rho}^{\msgspace \advspace} \ket{x,h(x),x,h(x)}^{\aspace \aspace'} \\
%		&=\frac{1}{\sqrt{N}} \sum_{x,x'} \bra{x',h(x')}^{\aspace} V^{\advspace \aspace \to \advspace \aspace \authspace_2' \advspace_2'}\ket{x,h(x)}^{\aspace} \, \ket{\rho}^{\msgspace \advspace} \ket{x',h(x'),x,h(x)}^{\aspace \aspace'}
%	\end{align}
	\item The adversary measures $\aspace \aspace' \advspace \advspace_2$ using the projective measurement $\{Q, \Id - Q\}$, where $Q = \ketbra{\Phi_h}{\Phi_h}^{\aspace \aspace'} \otimes \Pi$. The adversary discards the outcome corresponding to $\Id - Q$, and leaves the state unnormalized:
	$$
	\frac{1}{N} \sum_{z,x,m} \sqrt{\lambda_z} \alpha_{zm} \ket{m}^{\msgspace} \, \ket{\phi_{xsz}}^{\advspace} \ket{\Phi}^{\aspace \aspace'} \ket{0 \cdots 0}^{\authspace_2' \advspace_2}
	$$
	\item The adversary discards the $\aspace \aspace' \authspace_2' \advspace_2$ registers:
$$
	\frac{1}{N} \sum_{z,x,m} \sqrt{\lambda_z} \alpha_{zm} \ket{m}^{\msgspace} \otimes \ket{\phi_{xsz}}^{\advspace}
	$$
\end{enumerate}
This is precisely the state $\ket{\nu_h}$, and the $\id_h$ only interacts with $\advspace$ and auxiliary registers in the adversary's control, so it is an ideal general adversary.
\end{proof}

We now turn to bounding $\Ex_k \left \|  \ket{\mu_{kh}} - \ket{\nu_h} \right \|^2$:
\begin{align*}
&\Ex_k \left \|  \ket{\mu_{kh}} - \ket{\nu_h} \right \|^2 \\
%&= \frac{1}{N^2} \Ex_k \sum_{m,z,z'} \sqrt{\lambda_z \lambda_{z'}} \left [ \sum_{x',m'} \conj{\alpha}_{z'm'} (-1)^{(m + m',k(m) + k(m')) \cdot x'} \bra{\phi_{x'z'}^h} \right ] \left [ \sum_{x'',m''} \alpha_{zm''} (-1)^{(m + m'',k(m) + k(m'')) \cdot x''} \ket{\phi_{zx''}^h} \right ] \\
&= \frac{1}{N^2} \Ex_k \sum_{m,z,z'} \sqrt{\lambda_z \lambda_{z'}} \sum_{\substack{x',x'',m',m'' \\ m \notin \{ m',m''\}}} \conj{\alpha}_{z'm'}\alpha_{zm''} (-1)^{(m + m',k(m) + k(m')) \cdot x'}  (-1)^{(m + m'',k(m) + k(m'')) \cdot x''} \ip{\phi_{x'z'}^h}{\phi_{x''z}^h} \\
&= \frac{1}{N^2} \sum_{\substack{m,z,z' \\ x',x'',m',m''\\ m \notin \{ m',m''\}}} \sqrt{\lambda_z \lambda_{z'}} \conj{\alpha}_{z'm'}\alpha_{zm''} (-1)^{(m + m') \cdot x_1' + (m + m'') \cdot x_1''} \ip{\phi_{x'z'}^h}{\phi_{x''z}^h} \Ex_k (-1)^{(k(m) + k(m')) \cdot x'_2} (-1)^{(k(m) + k(m'')) \cdot x_2''}.
\end{align*}
We use the abbreviation $\ket{\phi^h_{xz}} = \ket{\phi_{xh_xz}}$. In the second line, we divided $x$ into two parts $(x_1,x_2)$, where $x_1$ corresponds to $\msgspace$, and $x_2$ corresponds to $\tagspace_1$. We focus on the expectation $\chi_{m,m',m'',x_2',x_2''} = \Ex_k (-1)^{(k(m) + k(m')) \cdot x'_2} (-1)^{(k(m) + k(m'')) \cdot x_2''}$. We consider two cases:

\paragraph{Case 1: $m' = m'', m' \neq m$.} Then $\chi_{m,m',m'',x_2',x_2''} = 0$ if $x_2' \neq x_2''$, otherwise $\chi_{m,m',m'',x_2',x_2''} = 1$.
\begin{align*}
	&\frac{1}{N^2} \left | \sum_{\substack{z,z', x',x'' \\ m,m': m\neq m'}}\sqrt{\lambda_z \lambda_{z'}} \conj{\alpha}_{z'm'}\alpha_{zm'} (-1)^{(m + m') \cdot (x_1' + x_1'')} \ip{\phi_{x'z'}^h}{\phi_{x''z}^h} \chi_{m,m',m',x',x''} \right | \\
	&=\frac{1}{N^2} \left | \sum_{\substack{z,z', x',x_1''\\ m,m': m\neq m'}}\sqrt{\lambda_z \lambda_{z'}} \conj{\alpha}_{z'm'}\alpha_{zm'} (-1)^{(m + m') \cdot (x_1' + x_1'')} \ip{\phi_{x'z'}^h}{\phi_{x_1''x_2'z}^h} \right | \\
&\leq \frac{1}{N^2} \sum_{z,z'} \sqrt{\lambda_z \lambda_{z'}} \sqrt{\sum_{m \neq m'} \left | \sum_{x',x_1''} (-1)^{(m+ m') \cdot (x_1' + x_1'')}  \ip{\phi_{x'z'}^h}{\phi_{x_1''x_2'z}^h} \right |^2} \qquad \qquad \text{(Cauchy-Schwarz)}\\
&\leq \frac{1}{N^2} \sum_{z,z'} \sqrt{\lambda_z \lambda_{z'}} \sqrt{\sum_{m, m'} \sum_{x',x_1'',\wt{x}',\wt{x}_1''} (-1)^{(m+ m') \cdot (x_1' + x_1'' + \wt{x}_1' + \wt{x}_1'')}  \ip{\phi_{\wt{x}_1''\wt{x}_2' z}^h}{\phi_{\wt{x}'z'}^h} \ip{\phi_{x'z'}^h}{\phi_{x_1''x_2'z}^h} } \\
&= \frac{1}{N^2} \sum_{z,z'} \sqrt{\lambda_z \lambda_{z'}} \sqrt{M^2 \sum_{\substack{x',x_1'',\wt{x}',\wt{x}_1'' \\ x_1' + x_1'' + \wt{x}_1' + \wt{x}_1'' = 0}} \ip{\phi_{\wt{x}_1''\wt{x}_2'z}^h}{\phi_{\wt{x}'z'}^h} \ip{\phi_{x'z'}^h}{\phi_{x_1''x_2'z}^h} } \\
&\leq \frac{1}{N^2} \sum_{z,z'} \sqrt{\lambda_z \lambda_{z'}} \sqrt{M^3 N^2} \\
&\leq \frac{M^{5/2}}{N} \qquad \qquad \text{(at most $M$ $z$'s)} \\
&= \frac{M^{3/2}}{T}.
\end{align*}

\paragraph{Case 2: $m,m',m''$ are all distinct.} Then $\chi_{m,m',m'',x_2',x_2''} = 0$ unless $x_2' = x_2'' = 0$, in which case $\chi_{m,m',m'',x_2',x_2''} = 1$. This uses the three-independence of $k(\cdot)$.
\begin{align*}
	&\frac{1}{N^2} \left | \sum_{\substack{z,z', x',x'' \\ m,m',m''\text{ distinct}}}\sqrt{\lambda_z \lambda_{z'}} \conj{\alpha}_{z'm'}\alpha_{zm'} (-1)^{(m + m') \cdot x_1' + (m + m'') \cdot x_1''} \ip{\phi_{x'z'}^h}{\phi_{x''z}^h} \chi_{m,m',m'',x',x''} \right | \\
	&=\frac{1}{N^2} \left | \sum_{\substack{z,z', x_1',x_1'' \\ m,m',m''\text{ distinct}}}\sqrt{\lambda_z \lambda_{z'}} \conj{\alpha}_{z'm'}\alpha_{zm'} (-1)^{(m + m') \cdot x_1' + (m + m'') \cdot x_1''} \ip{\phi_{x_1'0z'}^h}{\phi_{x_1''0z}^h} \right | \\
	&\leq \frac{1}{N^2} \sum_{z,z'} \sqrt{\lambda_z \lambda_{z'}} \sqrt{\sum_{m,m',m''\text{ distinct}} \left | \sum_{x_1',x_1''} (-1)^{(m + m') \cdot x_1' + (m + m'') \cdot x_1''} \ip{\phi_{x_1'0z'}^h}{\phi_{x_1''0z}^h} \right |^2} \qquad \qquad \text{(Cauchy-Schwarz)}\\
	&\leq \frac{M^{9/2}}{N^2} \\
	&\leq \frac{M^{3/2}}{T}
\end{align*}
where we used the fact that $M^{3/2} \leq T$. Therefore, for every $h$ we have
$$
\Ex_k \left \|  \ket{\mu_{kh}} - \ket{\nu_h} \right \|^2 = O(M^{3/2}/T)
$$
as desired. Using Fact~\ref{fact:pure_to_dens} and Jensen's inequality, $ \Ex_{kh} \left \|  \ketbra{\mu_{kh}}{\mu_{kh}} - \ketbra{\nu_h}{\nu_h} \right \| \leq O(\sqrt{M^{3/2}/T})$.

Thus, the final state of Bob is $\eps + O(\sqrt{M^{3/2}/T})$-close to 
\begin{align*}
	\Ex_{kh} \ketbra{kh}{kh} \otimes \ketbra{\nu_h}{\nu_h} = \Ex_{kh} \ketbra{kh}{kh} \otimes \id_h(\ketbra{\rho}{\rho})
\end{align*}
where $\id_h$ are the ideal adversaries given by Lemma~\ref{lem:ideal_adv}.

To conclude the theorem, we now observe that when $\safespace$ is non-empty, we can use the same analysis as above where we bundle together $\safespace$ and $\advspace$ as a joint adversary register, and the ideal adversary given by Lemma~\ref{lem:ideal_adv} will act as the identity on the $\safespace$ register. This establishes that $(\auth,\ver)$ is a total authentication scheme with outer key leakage.
\end{proof}

\section{Total authentication from approximate unitary designs}
\label{sec:design}

We now present a scheme that satisfies the strongest security definition, that of total authentication (without \emph{any} key leakage). In particular, this implies complete reuse of the entire key. This property of complete reuse of the key was not known before; it is not known whether the entire key can be reused in the authentication scheme of Barnum, et al~\cite{barnum2002authentication}.

%We will present a scheme that satisfies this stronger security definition. 

This scheme is based on \emph{unitary designs}, which are in some sense the quantum analogue of $t$-wise independent hash functions: a $t$-unitary design (also simply called a \emph{$t$-design}) is a distribution $\mathscr{D}$ over unitary matrices such that degree $t$ polynomials cannot distinguish between a unitary drawn from $\mathscr{D}$ and a fully random unitary. Furthermore, there are constructions of efficient unitary designs~\cite{brandao2012local}. 

\subsection{The unitary design scheme}

We call this scheme the \emph{unitary design scheme}.  Let $s$ be a security parameter. The input state is  $\ket{\rho}^{\msgspace \advspace}$, where the $\advspace$ register is held by the adversary.

\begin{enumerate}
	\item The sender Alice first appends $s$ $\ket{0}$ qubits in an auxiliary $\tagspace$ register.
	\item Using her secret key $k$, Alice samples a random unitary $U_k$ drawn from an (approximate) unitary $t$-design that acts jointly on $\msgspace \otimes \tagspace$. We will set the parameter $t = 8$.
	
	\item Alice applies $U_k$ to the $\msgspace \otimes \tagspace$ register, and sends $\msgspace \otimes \tagspace$ across the quantum channel to Bob.
	
	\item Bob receives some state, and applies the inverse unitary $U_k^\dagger$ to it. He measures the last $s$ qubits and accepts if they all measure to be $0$. Otherwise he rejects.
\end{enumerate}

\begin{theorem}
\label{thm:design}
	The unitary design scheme is efficiently computable, and is $2^{-s/2}$-totally authenticating.
\end{theorem}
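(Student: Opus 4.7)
The plan is to reduce the analysis to Haar-random unitaries via the $8$-design property, construct the ideal oblivious adversary as the first-moment ``twirl'' of the real adversary on $\authspace$, and bound the expected real-ideal deviation via a Weingarten moment calculation.

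First, by Stinespring's dilation theorem, I may assume the real adversary is a unitary $V$ on $\authspace\otimes\advspace$ followed by an adversary-side projection (the projection commutes with $\ver_k$ and can only decrease trace-norm distance, so I defer it to the end). The register $\safespace$ is untouched during the protocol and can be folded into $\advspace$; after purifying, the input is a pure state $\ket{\rho}^{\msgspace\advspace}$. Setting $N=|\authspace|=|\msgspace|\cdot 2^s$, the subnormalized accepted pure state on key $k$ is
$$
\ket{\sigma_k}\;=\;(\Id^{\msgspace}\otimes \bra{0^s}^{\tagspace}\otimes\Id^\advspace)\,(U_k^\dagger\otimes\Id^\advspace)\,V\,(U_k\otimes\Id^\advspace)\bigl(\ket{\rho}^{\msgspace\advspace}\otimes\ket{0^s}^\tagspace\bigr),
$$
a polynomial of degree $(1,1)$ in $(U_k,U_k^\dagger)$.

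Next, I would construct the ideal oblivious adversary. Expanding $V=\sum_{\alpha,\beta}\ketbra{\alpha}{\beta}^{\authspace}\otimes V_{\alpha\beta}^\advspace$ and applying the Haar first-moment identity $\Ex_U U(\beta,\gamma)\,\overline{U(\alpha,\delta)}=\delta_{\alpha,\beta}\delta_{\gamma,\delta}/N$, one finds
$$
\Ex_{U\sim\mathrm{Haar}}\ket{\sigma_k}\;=\;(\Id^{\msgspace}\otimes \tilde V^\advspace)\ket{\rho}^{\msgspace\advspace},\qquad \tilde V \;:=\;\tfrac{1}{N}\,\Tr_{\authspace} V.
$$
Unitarity of $V$ gives $\tilde V^\dagger \tilde V\preceq \Id^\advspace$, so $\id(\xi):=\tilde V\,\xi\,\tilde V^\dagger$ is a valid completely positive trace-non-increasing oblivious adversary on $\advspace$, and manifestly depends only on $V$, not on $k$. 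Set $\ket{\tilde\sigma}:=(\Id\otimes\tilde V)\ket{\rho}$, so the ideal accepted density matrix is $\ketbra{\tilde\sigma}{\tilde\sigma}=(\Id^\msgspace\otimes\id)(\ketbra{\rho}{\rho})$.

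The core of the argument is a second-moment Weingarten bound. Since $\ket{\sigma_k}$ and $\ket{\tilde\sigma}$ are subnormalized pure states, $\sigma_k-\tilde\sigma$ has rank at most $2$, so by $\|X\|_1\le\sqrt{\mathrm{rank}(X)}\,\|X\|_2$ and Jensen,
$$
\Ex_k\bigl\|\ketbra{\sigma_k}{\sigma_k}-\ketbra{\tilde\sigma}{\tilde\sigma}\bigr\|_1\;\le\;\sqrt 2\,\sqrt{\Ex_k\bigl\|\ketbra{\sigma_k}{\sigma_k}-\ketbra{\tilde\sigma}{\tilde\sigma}\bigr\|_2^2}.
$$
The inner norm-squared expands as $\Tr(\sigma_k^2)-2\Tr(\sigma_k\tilde\sigma)+\Tr(\tilde\sigma^2)$, a polynomial of combined degree at most $8$ in the entries of $U_k$ and $U_k^\dagger$, so the $8$-design property lets me replace $\Ex_k$ by the Haar expectation up to negligible error. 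A direct Weingarten expansion over $S_2$ and $S_4$ shows that the identity-pairing terms of $\Ex_{U}\Tr(\sigma_k^2)$ cancel exactly against $\Tr(\tilde\sigma^2)$ and $-2\Tr(\sigma_k\tilde\sigma)$, while the remaining non-identity pairings are bounded by $O(1/N)$ using the unitarity identity $\sum_{\alpha,\beta}V_{\alpha\beta}^\dagger V_{\alpha\beta}=N\,\Id^\advspace$. This gives $\Ex_k\|\sigma_k-\tilde\sigma\|_1=O(1/\sqrt{N})=O(2^{-s/2})$, as claimed.

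The main obstacle is the Weingarten bookkeeping of the last step: keeping track of which $S_2$/$S_4$ pairings cancel between the real second moment and the ideal state, and bounding the residual non-identity pairings uniformly in $V$ using only the unitarity constraint. A secondary subtlety is to verify that the construction of $\id$ genuinely uses no knowledge of $k$ (it does not, since $\tilde V$ is a function of $V$ alone), and that the deferred adversary-side projection, acting only on $\advspace$, commutes with the verification and with the ideal map, so the final bound extends from unitary $V$ to arbitrary completely positive trace-non-increasing attacks $\superop$.
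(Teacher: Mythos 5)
Your construction of the ideal adversary is exactly the paper's: the first Haar moment of $\Lambda_U=\bra{0^s}U^\dagger V U\ket{0^s}$ is $\Id^{\msgspace}\otimes\Gamma_V$ with $\Gamma_V=\Tr_{\authspace}(V)/N$, and the paper likewise defers the adversary-side projection and realizes $\Gamma_V$ as an oblivious map (it does so operationally, via post-selected teleportation on a maximally entangled ancilla, rather than by your cleaner observation that $\|\Gamma_V\|_\infty\le 1$ makes $\xi\mapsto\Gamma_V\xi\Gamma_V^\dagger$ a valid trace-non-increasing channel). Where you genuinely diverge is the tail of the argument: the paper bounds $\Ex_U\|\Gamma_V\ket{\rho}-\Lambda_U\ket{\rho}\|_2^2\le\frac{NM-1}{N^2-1}$, then invokes Levy's lemma (after bounding the Lipschitz constant of $f(U)=\|\Gamma_V\ket\rho-\Lambda_U\ket\rho\|_2^2$ by $8$) and derandomizes the resulting concentration inequality with Low's theorem, which is what forces $t=8$. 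You instead bound the key-averaged trace distance directly by $\sqrt2\,(\Ex_k\|\cdot\|_2^2)^{1/2}$ and evaluate the second moment exactly on the design. This is a legitimate and arguably cleaner route: it needs only moment equalities (a $4$-design already reproduces your degree-$(4,4)$ polynomial), avoids Levy and Low entirely, and yields the averaged bound of Definition~\ref{def:general_sec} in one step, whereas the concentration route naturally gives a for-most-keys statement and costs an extra union-bound term when averaged.

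Two cautions. First, your quantitative claim that the residual Weingarten terms are $O(1/N)$ is too strong: the dominant residual comes from the $M-1$ off-diagonal terms $x\neq x'$, each contributing $\approx N/(N^2-1)$, so the second moment is $\Theta(M/N)=\Theta(2^{-s})$, not $O(1/N)$ (this is precisely the paper's $\frac{N(M-1)}{N^2-1}$ computation). Your final answer $O(2^{-s/2})$ survives because $\sqrt{M/N}=2^{-s/2}$, but the intermediate $O(1/\sqrt N)$ does not. Relatedly, you can sidestep most of the $S_4$ bookkeeping you flag as the main obstacle: writing $\ket{\sigma_k}=\ket{\tilde\sigma}+\ket{e_k}$ with $\Ex\ket{e_k}=0$ and using the pointwise bound $\|\ket{e_k}\|_2\le 2$, one gets $\|\ketbra{\sigma_k}{\sigma_k}-\ketbra{\tilde\sigma}{\tilde\sigma}\|_2\le 2\|\ket{e_k}\|_2+\|\ket{e_k}\|_2^2\le 4\|\ket{e_k}\|_2$, so everything reduces to the degree-$(2,2)$ quantity $\Ex_k\|\ket{e_k}\|_2^2$, which is exactly the paper's ``average of $f$'' computation. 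Second, ``negligible error'' from the approximate design needs the same bookkeeping the paper does via $\alpha(f)=O(N^7)$ and $\eps=N^{-17}$: for an $\eps$-approximate design the moment-replacement error scales with the coefficient norm of the polynomial, so you must exhibit that bound (or use an exact design) before declaring the error negligible.
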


%Given an initial message state $\ket{\rho}^{\msgspace \advspace}$, the authentication procedure first appends $s$ $\ket{0}$ qubits in an auxiliary $\tagspace$ register, and then applied a random unitary $U_k$ drawn from an (approximate) unitary $t$-design jointly on $\msgspace \otimes \tagspace$. To decrypt, first the inverse $U_k^\dagger$ is applied, followed by measuring whether the $s$ qubits in $\tagspace$ are in the $\ket{0}$ state. If so, then the decryption procedure accepts by setting the $\flag$ register to $\ket{\acc}$; otherwise it rejects by setting the $\flag$ register to $\ket{\rej}$. We call this scheme the \emph{unitary design scheme}. 

This scheme is inspired by the \emph{Clifford code authentication scheme}, first proposed by Aharonov, et al.~\cite{aharonov2008interactive}, and further analyzed in~\cite{dupuis2012actively,broadbent2016efficient}. Our protocol is exactly the same, except the ensemble of unitaries, instead of being an approximate $8$-design, is the \emph{Clifford group}, which is a well-studied set of unitaries that are central to quantum error-correction, simulation, and more. It was also recently shown that the Clifford group is a $3$-unitary design~\cite{webb2015clifford,zhu2015multiqubit}\footnote{However, it is not an $8$-design}.~\cite{dupuis2012actively,broadbent2016efficient} show that the Clifford authentication scheme is secure even against entangled adversaries; however, as mentioned before, their security guarantee does not take into account the key.

Our unitary design scheme is also very similar to the \emph{non-malleable quantum encryption scheme} proposed by Ambainis, Bouda, and Winter~\cite{ambainis2009nonmalleable}, wherein a unitary $2$-design is used to encrypt a quantum state. However, non-malleable quantum encryption does not imply authentication. 

% A quantum encryption scheme is non-malleable if, in addition to revealing no information about the state to an eavesdropper, the eavesdropper cannot effect any controlled modifications to the encrypted state. Ambainis, Bouda and Winter show that applying a random unitary drawn from a $2$-design to a state will encrypt it, and reduces the adversary to one that either forwards the state, or replaces it with the maximally mixed state. Clearly, such a scheme does not provide any authentication, but our scheme, where one additionally appends some dummy zeroes before authenticating, provides \emph{both} encryption and authentication. Furthermore, their analysis does not handle the case of quantum side information, and it only gives a security guarantee \emph{on average} over the key. Here, we will show that we obtain authentication and encryption \emph{with high probability} over the key. 

We now remark upon the key requirements of the unitary design scheme. Constructions of approximate unitary $8$-designs acting on $n$ qubits involve choosing a random quantum circuit of size $\Theta(n^2)$, and thus the randomness required is $\Theta(n^2)$~\cite{brandao2012local}. This asymptotically matches the randomness requirements required of the Clifford scheme described above, but is much larger than the randomness requirements of the purity-testing-based protocol of~\cite{hayden2011universal}, which uses $\Theta(n)$ bits of key to authenticate an $n$-qubit quantum state.

%Furthermore, this scheme requires a full-fledged quantum computer running for at least $\Omega(n)$ sequential time steps. However we feel that this scheme is conceptually simple (``To encrypt and authenticate quantum data, apply a random quantum circuit for a while''), and it also confers the benefit that the \emph{entire key} can be reused (upon successful verification), something that was not known before. We also believe that our analysis of this scheme may be of independent interest.

\paragraph{Notation and useful lemmas.} We set up some notation. We let $\msgspace$ denote the message space, $\tagspace$ to denote the space of the dummy zero qubits. We let $\authspace = \msgspace \otimes \tagspace$. We let $M = |\msgspace|$, $|\tagspace| = 2^s$, and $N = M2^s = |\authspace|$. 

Let $\attack$ be an adversary acting on $\authspace \otimes \advspace$. By the Stinespring representation theorem, there exists a unitary $V$ acting on a possibly larger space $\authspace \otimes \advspace \otimes \advspace'$, followed by a projection $P$ that acts on $\advspace \advspace'$, followed by a partial trace over $\advspace'$. However without loss of generality we shall simply treat this additional space $\advspace'$ as part of $\advspace$, and ignore the partial trace operation. Thus, the adversary's action is to perform some unitary $V$ on $\authspace \otimes \advspace$, followed by a projection on $P$ on $\advspace$.

To analyze the behavior of this scheme, we will first analyze the case when the randomizing unitary $U$ is drawn from the Haar measure over the unitary group $U(\authspace)$, rather from a $t$-design. We will show that this scheme is totally authenticating. Then, we will show that actually using a $O(1)$-unitary design will suffice.

The crucial hammer we will need is a version of \emph{Levy's Lemma}: 

\begin{definition}
A function $f: U(d) \to \R$ is $\eta$-Lipschitz if
$$
	\sup_{U_1,U_2 \in U(d)} \frac{ | f(U_1) - f(U_2) |}{\| U_1 - U_2 \|_2} \leq \eta.
$$
\end{definition}

\begin{lemma}[Levy's Lemma~\cite{milman2009asymptotic}]
	Let $f : U(d) \to \R$ be an $\eta$-Lipschitz function on the unitary group of dimension $d$ with mean $\Ex f$. Then
	$$
		\Pr \big ( \big | f - \Ex f \big | \geq \delta \big ) \leq 4 \exp \left ( -\frac{C d \delta^2}{\eta^2} \right)
	$$
	where $C = 2/9\pi^3$ and the probability is over $U$ drawn from the Haar measure on $U(d)$.
\end{lemma}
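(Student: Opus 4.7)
The plan is to deduce Levy's Lemma from the general concentration-of-measure phenomenon on compact Riemannian manifolds with strictly positive Ricci curvature, applied to the unitary group $U(d)$. The Haar measure on $U(d)$ is the Riemannian volume (suitably normalized) for the bi-invariant metric induced by the Hilbert--Schmidt inner product $\langle X, Y\rangle = \Tr(X^\dagger Y)$ on the Lie algebra $\mathfrak{u}(d)$. The first step is to compute (or cite) a Ricci curvature lower bound: for this bi-invariant metric one has $\mathrm{Ric} \succeq \kappa \cdot g$ where $\kappa$ is a positive constant multiple of $d$ (concretely $\kappa = (d+1)/2$ up to normalization conventions for $U(d)$, and $d/4$ after accounting for the $U(1)$ center). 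This computation reduces to evaluating a trace over structure constants of $\mathfrak{u}(d)$ and is standard.

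Step two is to invoke the Bakry--\'Emery criterion: a compact Riemannian manifold with $\mathrm{Ric} \succeq \kappa g$, $\kappa > 0$, satisfies a log-Sobolev inequality with constant $2/\kappa$ for the natural (heat semigroup) Dirichlet form. Equivalently, by Gromov--Milman, the uniform probability measure concentrates with Gaussian tails: for any function $g$ that is $1$-Lipschitz with respect to the geodesic distance $d_g$,
\begin{equation*}
\Pr\bigl(|g - \Ex g| \geq t\bigr) \;\leq\; 2\exp\!\left(-\frac{\kappa\, t^2}{2}\right).
\end{equation*}
Step three is to translate between the geodesic distance on $U(d)$ and the ambient Hilbert--Schmidt (Frobenius) distance used in the statement's Lipschitz condition. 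For $U_1, U_2 \in U(d)$ one has $\|U_1 - U_2\|_2 \leq d_g(U_1, U_2) \leq \tfrac{\pi}{2}\|U_1 - U_2\|_2$, so an $\eta$-Lipschitz function in the Frobenius metric is at most $\tfrac{\pi}{2}\eta$-Lipschitz in the geodesic metric. Plugging this into the concentration inequality above with $g = f/\bigl(\tfrac{\pi}{2}\eta\bigr)$ and rescaling $t = \delta/(\tfrac{\pi}{2}\eta)$ gives a bound of the form $4\exp(-C d \delta^2/\eta^2)$.

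The last step is bookkeeping to recover the precise constant $C = 2/9\pi^3$. The factor of $4$ (rather than $2$) in the statement absorbs a slight loss when replacing the median by the mean and combining the upper and lower tails. The main obstacle is this constant-tracking exercise: one must combine the exact Ricci lower bound (with the correct normalization of the bi-invariant metric, so that the Haar measure is a probability measure), the sharp constant in Bakry--\'Emery, and the geodesic-to-Frobenius comparison factor, all of which conspire to produce the $\pi^3$ in the denominator. The conceptual content, however, is entirely contained in the positivity and $\Theta(d)$-scaling of the Ricci curvature of $U(d)$, which is what makes concentration tighten as the dimension grows. As this lemma is a standard result in asymptotic geometric analysis, one may simply appeal to the reference and treat it as a black box.
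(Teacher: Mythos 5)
The paper itself does not prove this lemma: it is imported as a black box from the cited reference, so your closing sentence (``appeal to the reference and treat it as a black box'') is exactly what the paper does, and in that sense your proposal matches the paper's approach. The proof sketch you give on top of that is the standard textbook route (Ricci lower bound $\Rightarrow$ Bakry--\'Emery/Gromov--Milman Gaussian concentration for geodesically Lipschitz functions $\Rightarrow$ geodesic-to-Frobenius comparison $d_g \leq \tfrac{\pi}{2}\|\cdot\|_2$ $\Rightarrow$ constant bookkeeping), and the individual ingredients you list are correct as stated for $SU(d)$.

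The one genuine gap is the Ricci curvature claim for $U(d)$ itself. With the bi-invariant metric, the Ricci tensor of $U(d)$ is \emph{degenerate}: it vanishes on the central direction $i\Id$, since $\mathrm{Ric}(X,X) = -\tfrac14 \Tr(\mathrm{ad}_X^2)$ and $\mathrm{ad}_{i\Id} = 0$. So there is no $\kappa>0$ with $\mathrm{Ric}\succeq \kappa g$ on $U(d)$, and the Bakry--\'Emery criterion does not apply directly; your parenthetical ``$d/4$ after accounting for the $U(1)$ center'' is doing real work that the sketch never supplies. The standard fix is to pass through the covering $SU(d)\times U(1)\to U(d)$ (or the coset decomposition modulo the center), apply Bakry--\'Emery on the $SU(d)$ factor, use the one-dimensional log-Sobolev inequality on the circle for the $U(1)$ factor, and tensorize; this yields a log-Sobolev constant of order $1/d$ for $U(d)$ and hence the stated Gaussian concentration, at the cost of a worse absolute constant. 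Since the constant $C = 2/9\pi^3$ in the statement is precisely the kind of artifact this extra step produces, the omission matters if one actually wants to verify the constant rather than just the $\exp(-\Omega(d\delta^2/\eta^2))$ shape. None of this affects the paper, which only uses the lemma as cited.
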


Another useful lemma we will need is the following, giving two formulas for averaging over the (Haar measure of the) unitary group. We use $\delta_{ij}$ to denote the Dirac delta function that is $1$ if $i = j$ and $0$ otherwise. 

\begin{lemma}[Appendix B.5 of~\cite{beenakker1997random}]
\label{lem:haar_moments}
For a function $f : \unitary(d) \to \R$, we let $\langle f \rangle$ to denote $\int f(U) \,\, dU$, where $\int \cdot dU$ is integration over the Haar measure on $\unitary(d)$. Then 
	\begin{align*}
		\langle U_{ab} U_{ij} U^*_{a'b'} U^*_{i'j'} \rangle &= \frac{1}{d^2 - 1} ( \delta_{aa'} \delta_{bb'} \delta_{ii'} \delta_{jj'} + \delta_{ai'} \delta_{bj'} \delta_{ia'} \delta_{jb'}) \\ &\qquad \qquad -\frac{1}{d(d^2-1)} ( \delta_{aa'} \delta_{bj'} \delta_{ii'} \delta_{jb'} + \delta_{ai'} \delta_{bb'} \delta_{ia'} \delta_{jj'})
	\end{align*}

\end{lemma}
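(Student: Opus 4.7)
The plan is to combine two standard ingredients: (i) the bi-invariance of the Haar measure under $U \mapsto VUW$ for arbitrary $V, W \in \unitary(d)$, which forces the moment tensor to lie in a small-dimensional space of invariants; and (ii) unitarity relations such as $\sum_c U_{cb} U^*_{cb'} = \delta_{bb'}$, which pin down the numerical coefficients by contraction.

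First, I would set $T_{ab,ij,a'b',i'j'} := \langle U_{ab} U_{ij} U^*_{a'b'} U^*_{i'j'} \rangle$ and extract its functional form. Substituting $U \mapsto VU$ shows that on the ``row'' indices $(a,i,a',i')$ the tensor $T$ is invariant under $V \otimes V \otimes \bar V \otimes \bar V$. By Schur--Weyl duality (or a direct character count), the invariants of this action in $\C^d \otimes \C^d \otimes (\C^d)^* \otimes (\C^d)^*$ form a two-dimensional space, spanned by the two index-pairings $\delta_{aa'}\delta_{ii'}$ and $\delta_{ai'}\delta_{ia'}$ (which come from the identity and the swap in the commutant of $V^{\otimes 2}$). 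Analogously, right-invariance $U \mapsto UW$ forces the dependence on the column indices $(b,j,b',j')$ to lie in the span of $\delta_{bb'}\delta_{jj'}$ and $\delta_{bj'}\delta_{jb'}$. Since the two invariances act on disjoint sets of indices, $T$ must take the Ansatz form
\begin{equation*}
T = A\,\delta_{aa'}\delta_{ii'}\delta_{bb'}\delta_{jj'} + B\,\delta_{aa'}\delta_{ii'}\delta_{bj'}\delta_{jb'} + C\,\delta_{ai'}\delta_{ia'}\delta_{bb'}\delta_{jj'} + D\,\delta_{ai'}\delta_{ia'}\delta_{bj'}\delta_{jb'},
\end{equation*}
with four unknown scalars $A, B, C, D$ depending only on $d$.

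To fix these constants, I would perform three contractions and use unitarity to evaluate the left-hand side. Setting $a' = a$ and summing on $a$ collapses $T$ via $\sum_a U_{ab} U^*_{ab'} = \delta_{bb'}$ to $\delta_{bb'}\langle U_{ij}U^*_{i'j'}\rangle = \delta_{bb'}\delta_{ii'}\delta_{jj'}/d$ (the one-entry moment $\langle U_{ij}U^*_{i'j'}\rangle = \delta_{ii'}\delta_{jj'}/d$ is itself obtained from the same invariance argument plus $\sum_{ij}|U_{ij}|^2 = d$). Matching to the Ansatz yields $Ad+C = 1/d$ and $Bd+D = 0$. Setting $b'=b$ and summing over $b$ analogously yields $Ad+B = 1/d$ and $Cd+D = 0$. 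For the third contraction, summing $\sum_{a,i} T_{ab,ij,ib',aj'}$ (i.e.\ identifying $a' \leftrightarrow i$ and $i' \leftrightarrow a$) factors the LHS as $(\sum_a U_{ab}U^*_{aj'})(\sum_i U_{ij}U^*_{ib'}) = \delta_{bj'}\delta_{jb'}$, and substituting into the Ansatz gives $A+Cd = 0$ and $B+Dd = 1/d$. Solving these equations is immediate: $A = -Cd$ and $Ad+C = 1/d$ give $C(1-d^2)=1/d$, hence $C = -1/(d(d^2-1))$ and $A = 1/(d^2-1)$; the parallel pair in $(B,D)$ gives $B = -1/(d(d^2-1))$ and $D = 1/(d^2-1)$. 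Substituting back reproduces the stated formula.

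The main conceptual step is identifying the invariant subspace in the first paragraph: one has to confirm that the $V^{\otimes 2}\otimes\bar V^{\otimes 2}$-invariants in $(\C^d)^{\otimes 2}\otimes (\C^d)^{\ast\otimes 2}$ are exhausted by the two index-pairings and nothing more. This is handled cleanly by Schur--Weyl duality: such invariants are in bijection with endomorphisms of $(\C^d)^{\otimes 2}$ that commute with $V^{\otimes 2}$, and these are spanned by the identity and the swap, which correspond precisely to the two pairings above. After that, everything is a linear algebra exercise using standard unitarity contractions.
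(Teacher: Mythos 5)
Your derivation is correct: the invariance argument pins the moment tensor to the four-parameter Ansatz, the three unitarity contractions you list do yield the six stated linear equations, and they solve uniquely to $A=D=1/(d^2-1)$, $B=C=-1/(d(d^2-1))$, matching the lemma. Note that the paper does not prove this statement at all --- it is quoted from Appendix B.5 of the cited reference --- so there is no in-paper proof to compare against; your argument is essentially the standard derivation one finds there (invariance of the Haar measure plus unitarity sum rules). The only point I would tighten is the step ``since the two invariances act on disjoint sets of indices'': one should say explicitly that because the two row-pairings $\delta_{aa'}\delta_{ii'}$ and $\delta_{ai'}\delta_{ia'}$ are linearly independent for $d\ge 2$, the coefficient functions of the column indices are uniquely determined and hence are themselves invariants of the column action, which is what licenses the product Ansatz.
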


\subsection{Total authentication with Haar-random unitaries}
We now prove that the unitary design scheme yields total authentication. Let $\Lambda_U = \bra{0}^{\otimes s} U^\dagger V U \ket{0}^{\otimes s}$ is be a map from $\msgspace \otimes \advspace$ to $\msgspace \otimes \advspace$. 

%The major difference is that the unitary $V$ acts on the larger space $\authspace \otimes \advspace$, whereas $U$ still only acts on $\authspace$, and the operator $\Lambda_U = \bra{0}^{\otimes s} U^\dagger V U \ket{0}^{\otimes s}$ is now a map from $\msgspace \otimes \advspace$ to $\msgspace \otimes \advspace$. 

\begin{lemma}
\label{lem:side_inf_concentration}
	Let $N = \dim(\authspace)$. For all $\delta > 0$, for all initial message states $\ket{\rho}^{\msgspace \advspace}$ have that
	$$
		\Pr_U \left ( \| \Gamma_V \ket{\rho} - \Lambda_U \ket{\rho} \|_2^2 \geq 2^{-s} + \delta \right) \leq \exp(-C'N \delta^2)
	$$
	where $\Gamma_V = \Tr_{\authspace}(V)/\dim(\authspace)$, $C'$ is a universal constant, and $U$ is a Haar-random unitary.
\end{lemma}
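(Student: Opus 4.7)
The plan is to apply Levy's Lemma to the function $f(U) := \|\Gamma_V \ket{\rho} - \Lambda_U \ket{\rho}\|_2^2$, viewed as a real-valued function on $\unitary(N)$. This requires two ingredients: (i) an upper bound $\Ex_U f(U) \leq 2^{-s}$, and (ii) a Lipschitz estimate for $f$ with respect to the Frobenius norm.

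For (i), I would first identify the center of the distribution. Since $\Lambda_U = \bra{0^s} U^\dagger V U \ket{0^s}$ is bilinear in one entry of $U$ and one entry of $U^\dagger$, the first-moment identity $\Ex_U U_{ya} U^*_{xb} = \delta_{xy}\delta_{ab}/N$ immediately gives $\Ex_U \Lambda_U = \Id^\msgspace \otimes \Gamma_V$. Hence $\Gamma_V \ket{\rho}$ is precisely the mean of the random vector $\Lambda_U \ket{\rho}$, and $\Ex_U f(U) = \Ex_U \bra{\rho} \Lambda_U^\dagger \Lambda_U \ket{\rho} - \bra{\rho}(\Id^\msgspace \otimes \Gamma_V^\dagger \Gamma_V) \ket{\rho}$. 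To evaluate $\Ex_U \Lambda_U^\dagger \Lambda_U$ I would expand into matrix entries (now involving four factors of $U$) and apply the Haar second-moment formula proved above. After collecting the delta-function contractions the result takes the form $\Ex_U \Lambda_U^\dagger \Lambda_U = \Id^\msgspace \otimes (\alpha\, \Id^\advspace + \beta\, B)$ with $\alpha = (MN - 1)/(N^2 - 1) \leq M/N$, $\beta = (N - M)/(N(N^2 - 1)) = O(1/N^2)$, and $B$ the positive operator on $\advspace$ with entries $B_{\alpha \alpha''} = \sum_{x, \alpha'} V^*_{x\alpha', x\alpha} V_{x\alpha', x\alpha''}$. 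Unitarity of $V$ gives $B \preceq N\, \Id$ (since $B$ is dominated by $\sum_{x,y} C_{x,y}^\dagger C_{x,y} = N\, \Id$, where $C_{x,y}$ denotes the $\advspace$-block of $V$ at position $(x,y)$). Dropping the nonnegative term $\bra{\rho}(\Id \otimes \Gamma_V^\dagger \Gamma_V) \ket{\rho}$ for an upper bound yields $\Ex_U f(U) \leq M/N = 2^{-s}$ up to a $\beta\, \|B\|_\infty = O(1/N)$ correction that can be folded into the $\delta$ slack.

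For (ii), the key observation is that $\Lambda_U - \Lambda_{U'} = \bra{0^s}(U^\dagger V U - (U')^\dagger V U')\ket{0^s}$, so a standard telescoping plus triangle-inequality argument combined with $\|V\|_\infty = 1$ yields $\|\Lambda_U - \Lambda_{U'}\|_\infty \leq 2 \|U - U'\|_\infty \leq 2 \|U - U'\|_2$. Using this together with $\|\Lambda_U \ket{\rho}\|_2, \|\Gamma_V \ket{\rho}\|_2 \leq 1$ and the identity $|a^2 - b^2| = |a - b|(a + b)$, one obtains $|f(U) - f(U')| \leq 8 \|U - U'\|_2$, so $f$ is $8$-Lipschitz in the Frobenius norm.

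Applying Levy's Lemma with $d = N$ and $\eta = 8$ then gives $\Pr_U(f(U) \geq 2^{-s} + \delta) \leq \Pr_U(|f(U) - \Ex f| \geq \delta) \leq 4 \exp(-C N \delta^2 / 64)$, which has the desired form after absorbing constants into $C'$. The main obstacle is the second-moment calculation in (i): one must carefully track the lower-order $O(1/N)$ corrections coming from the $1/(N^2 - 1)$ denominators in the Haar-moment formula and from the $\beta B$ piece, and verify that they can be folded cleanly into either the $2^{-s}$ mean bound or the $\delta$ tail slack, without relying on any non-trivial cancellation against the subtracted $\Gamma_V^\dagger \Gamma_V$ term.
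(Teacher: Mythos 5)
Your overall strategy is the same as the paper's (compute the Haar mean of $f(U)=\|\Gamma_V\ket{\rho}-\Lambda_U\ket{\rho}\|_2^2$ via first and second moments, bound the Lipschitz constant, apply Levy's Lemma with $d=N$), and your Lipschitz argument is correct and in fact cleaner than the paper's: the telescoping bound $\|\Lambda_U-\Lambda_{U'}\|_\infty\le 2\|U-U'\|_2$ plus $|a^2-b^2|=|a-b|(a+b)$ gives the constant $8$ directly, whereas the paper splits $f$ into the cross term and the quadratic term and runs several column-wise Cauchy--Schwarz estimates to reach the same constant.

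The gap is in step (i). The second-moment contraction does not produce the operator $B$ you wrote down. The cross term in the Weingarten formula (the $\delta_{ai'}\delta_{bb'}\delta_{ia'}\delta_{jj'}$ contraction) yields the \emph{coherent} double sum $\sum_{i,i'}V_{(i,z),(i,z')}V^*_{(i',z),(i',z'')}$, i.e.\ the operator $(\Tr_{\authspace}V)^\dagger(\Tr_{\authspace}V)=N^2\,\Gamma_V^\dagger\Gamma_V$, not the incoherent sum $\sum_x C_{xx}^\dagger C_{xx}$ that you call $B$. Your bound $B\preceq N\,\Id$ via unitarity is valid for the incoherent sum but fails for the coherent one, whose operator norm is only bounded by $N^2$ (and attains it, e.g.\ for $V=\Id^{\authspace}\otimes W$). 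Consequently the correction term is $\frac{N(N-M)}{N^2-1}\bra{\rho}(\Id\otimes\Gamma_V^\dagger\Gamma_V)\ket{\rho}$, which is $\Theta(1)$ in general, not $O(1/N)$, so it cannot be folded into the $\delta$ slack. The cancellation you explicitly hope to avoid is essential: combining this term with the subtracted $\bra{\rho}(\Id\otimes\Gamma_V^\dagger\Gamma_V)\ket{\rho}$ gives $\left(\frac{N(N-M)}{N^2-1}-1\right)\bra{\rho}(\Id\otimes\Gamma_V^\dagger\Gamma_V)\ket{\rho}=-\frac{NM-1}{N^2-1}\bra{\rho}(\Id\otimes\Gamma_V^\dagger\Gamma_V)\ket{\rho}\le 0$, which is exactly how the paper gets $\Ex_U f\le\frac{NM-1}{N^2-1}\le 2^{-s}$. (A sanity check that the cancellation is not optional: for $V=\Id^{\authspace}\otimes W$ one has $\Lambda_U=\Id\otimes W=\Id\otimes\Gamma_V$ for every $U$, so $f\equiv 0$, yet your dropped-term bound would only give $\Ex f\lesssim 2^{-s}+1$.) Once you restore the correct operator and perform this cancellation, the rest of your argument goes through.
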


\begin{remark}
Again, as before, we will omit mention of the sender/receiver's private space $\safespace$, and treat it as part of the adversary's private space $\advspace$. The ideal adversary we construct will act as the identity on $\safespace$.
\end{remark}

\begin{proof}
	First, we write $\ket{\rho}^{\msgspace \authspace} = \sum_x \rho_x \ket{x}^{\msgspace} \otimes \ket{\varphi_x}^{\advspace}$ where $\{ \ket{x} \}$ is a basis for $\msgspace$, and $\{ \ket{\varphi_x} \}$ are arbitrary unit vectors in $\advspace$. 
	
	Write $U$ as the following:
	$$
		U = \sum_{u,x} \ketbra{\psi_{u,x}}{u,x}
	$$
	where $\ket{u} \in \tagspace, \ket{x} \in \msgspace$ are standard basis vectors, and $\{\ket{\psi_{u,x}}\} \subset \tagspace \otimes \msgspace$ is a set of orthonormal unit vectors. Then $U \ket{0}^{\otimes s}$ becomes a linear operator that accepts vectors in $\msgspace$ and outputs vectors in $\authspace = \tagspace \otimes \msgspace$: 
		$$
		U\ket{0}^{\otimes s} = \sum_{x} \ketbra{\psi_{0^s,x}}{x}
	$$
	We will simply write $\ket{\psi_x}$ to denote $\ket{\psi_{0^s,x}}$. We can write $\Lambda_U$ as
	$$
\Lambda_U = \sum_{x,x'} \ketbra{x'}{x} \,\, \bra{\psi_{x'}} V \ket{\psi_x}.	
$$	
%	Once again we write the operator $U\ket{0}^{\otimes s} = \sum_x \ketbra{\psi_x}{x}$.
	
Let's compute the average operator
 \begin{align}
 		\int \Lambda_U \,\, dU &= 		\sum_{x,x'}  \ketbra{x}{x'}\,\, \int  \bra{\psi_{x}} V \ket{\psi_{x'}} \,\, dU \\
		&= \sum_{x}  \ketbra{x}{x}\,\, \int  \bra{\psi_{x}} V \ket{\psi_x} \,\, dU  +\sum_{x \neq x'}  \ketbra{x}{x'}\,\, \int  \bra{\psi_{x}} V \ket{\psi_{x'}} \,\, dU \label{eq:design1} \\
		&= \sum_x \ketbra{x}{x} \otimes \frac{1}{\dim(\authspace)} \Tr_{\authspace}(V) \\
		&= \Id^{\msgspace} \otimes \Gamma_V
 \end{align}
 The second term in~\eqref{eq:design1} (the sum over off-diagonal elements) averages to $0$, because for $x \neq x'$, the vectors $\ket{\psi_{x'}}$ and $\ket{\psi_x}$ are random orthogonal unit vectors. Conditioned on a fixing of $\ket{\psi_x}$, for any vector $\ket{\varphi}$ that is orthogonal to $\ket{\psi_x}$, $\ket{\psi_{x'}}$ is equally likely to be $\ket{\varphi}$ or $-\ket{\varphi}$, so $\int  \bra{\psi_{x'}} V \ket{\psi_x} \,\, dU = 0$.

	In the last step we used the fact that given an operator $X$ mapping $\authspace \otimes \advspace$ to $\authspace \otimes \advspace$, if we average over the unit sphere, $\int (\bra{\psi}^{\authspace} \otimes \Id^{\advspace}) X (\ket{\psi}^{\authspace} \otimes \Id^{\advspace}) \,\, d\psi$ is equal to the partial trace $\Tr_{\authspace}(X)/\dim(\authspace)$. We'll let $N$ denote $\dim(\authspace)$.
 
 Thus, this tells us that on average, this operator should act as the identity on $\msgspace$ and some linear map (not necessarily unitary) $\Gamma_V$ on $\advspace$. We now prove that $\Lambda_U$ behaves this way on $\ket{\rho}$ with high probability. Define
 $$
 	f(U) =\| \Gamma_V \ket{\rho} - \Lambda_U \ket{\rho} \|_2^2. 
 $$
 
 \paragraph{Bounding the average of $f$.}
 Expanding $f$ and averaging, we get
 \begin{align}
 	\int f(U) \,\, dU &= \int \big( \bra{\rho}\Gamma_V^\dagger - \bra{\rho}\Lambda_U^\dagger \big)\big( \Gamma_V \ket{\rho} - \Lambda_U \ket{\rho} \big)\,\, dU \\
	&= \int \bra{\rho} \Gamma_V^\dagger \Gamma_V \ket{\rho} - \bra{\rho}\Lambda_U^\dagger \Gamma_V \ket{\rho} - \bra{\rho}\Gamma_V^\dagger \Lambda_U \ket{\rho} + \bra{\rho} \Lambda_U^\dagger \Lambda_U \ket{\rho} \,\, dU \\
	&= -\bra{\rho} \Gamma_V^\dagger \Gamma_V \ket{\rho} + \int \bra{\rho} \Lambda_U^\dagger \Lambda_U \ket{\rho} \,\, dU
 \end{align}
 where in the last line we used our calculation of $\int \Lambda_U \,\, dU$ above. We bound this last term. We have that
 $$
 	\Lambda_U \ket{\rho} = \sum_{x,x'} \rho_{x'} \ket{x} \,\, \bra{\psi_{x}} V (\ket{\psi_{x'}} \otimes \ket{\varphi_{x'}} )
 $$
 Thus
 \begin{align}
\int \bra{\rho} \Lambda_U^\dagger \Lambda_U \ket{\rho} \,\, dU &= \int \sum_{x,x',x''} \rho_{x'} \rho_{x''}^* (\bra{\psi_{x''}} \otimes \bra{\varphi_{x''}}) V^\dagger \ket{\psi_{x}}  \bra{\psi_{x}} V (\ket{\psi_{x'}} \otimes \ket{\varphi_{x'}}) \,\, dU \\
&= \sum_{x'} |\rho_{x'}|^2 \sum_x \int (\bra{\psi_{x'}} \otimes \bra{\varphi_{x'}}) V^\dagger \ket{\psi_{x}}  \bra{\psi_{x}} V (\ket{\psi_{x'}} \otimes \ket{\varphi_{x'}}) \,\, dU \\
&= \sum_{x'} |\rho_{x'}|^2 \sum_x \int \big \| \bra{\psi_{x}} V (\ket{\psi_{x'}} \otimes \ket{\varphi_{x'}}) \big \|^2_2 \,\, dU \\
&= \sum_{x'} |\rho_{x'}|^2 \sum_{x \neq x'} \int \big \| \bra{\psi_{x}} V (\ket{\psi_{x'}} \otimes \ket{\varphi_{x'}}) \big \|^2_2 \,\, dU + \\
&\qquad \qquad \qquad \sum_{x} |\rho_{x}|^2 \int \big \| \bra{\psi_{x}} V (\ket{\psi_{x}} \otimes \ket{\varphi_{x}}) \big \|^2_2 \,\, dU.
\label{eq:lambda_avg}
 \end{align}
 Let $\{ \ket{z} \}$ be a basis for $\advspace$. Now notice that 
\begin{align}
\big \| \bra{\psi_{x}} V (\ket{\psi_{x'}} \otimes \ket{\varphi_{x'}}) \big \|^2_2 &= 
\big \| \sum_z \ketbra{z}{z}^{\advspace} \,\, \bra{\psi_{x}} V (\ket{\psi_{x'}} \otimes \ket{\varphi_{x'}}) \big \|^2_2 \\
&= \sum_z \big | (\bra{\psi_{x}} \otimes \bra{z}) V (\ket{\psi_{x'}} \otimes \ket{\varphi_{x'}}) \big |^2.
\end{align}
Write $\ket{\varphi_x'} = \sum_z \beta_{x' z} \ket{z}$. Then we have
\begin{align}
	\big | (\bra{\psi_{x}} \otimes \bra{z}) V (\ket{\psi_{x'}} \otimes \ket{\varphi_{x'}}) \big |^2 &= \left | \sum_{z'} \beta_{x' z'} (\bra{\psi_{x}} \otimes \bra{z}) V (\ket{\psi_{x'}} \otimes \ket{z'}) \right |^2 \\
	&= \left | \sum_{z'} \beta_{x' z'} \sum_{ij} V_{(i,z),(j,z')} U_{ix}^* U_{jx'} \right |^2 \\
	&= \sum_{z',z''} \beta_{x' z'} \beta_{x' z''}^* \sum_{iji'j'} V_{(i,z),(j,z')} V_{(i',z),(j',z'')}^* U_{i'x} U_{jx'}  U_{ix}^*  U_{j'x'}^* 
	\label{eq:expansion}
\end{align}
 where the rows and columns of $V$ are indexed by $(i,z)$ and $(j,z')$, respectively. Again, we identify $\ket{\psi_x}$ as the $x$'th column of $U$, and $U_{ix}$ denotes the $i$'th entry of $\ket{\psi_x}$.
 
We now go back to bound the sum over $x \neq x'$ in~\eqref{eq:lambda_avg}. Fix $x,x'$ such that $x \neq x'$. Substituting~\eqref{eq:expansion} in and using Lemma~\ref{lem:haar_moments}, we get:
\begin{align}
&\int \big \| \bra{\psi_{x'}} V (\ket{\psi_x} \otimes \ket{\varphi_{x'}}) \big \|^2_2 \,\, dU \\
&= \int \sum_{z,z',z''} \beta_{x' z'} \beta_{x' z''}^* \sum_{iji'j'} V_{(i,z),(j,z')} V_{(i',z),(j',z'')}^* U_{i'x} U_{jx'}  U_{ix}^*  U_{j'x'}^* 
  \,\, dU \\
  &= \sum_{z',z''} \beta_{x' z'} \beta_{x' z''}^* \left [ \frac{1}{N^2 - 1} \sum_{ijz} V_{(i,z),(j,z')} V_{(i,z),(j,z'')}^*  - \frac{1}{N(N^2  - 1)} \sum_{ii'z} V_{(i,z),(i,z')} V_{(i',z),(i',z'')}^* \right ] \\
  &= \frac{N}{N^2 - 1}  - \frac{1}{N(N^2  - 1)} \sum_{z',z''} \beta_{x' z'} \beta_{x' z''}^* \sum_{ii'z} V_{(i,z),(i,z')} V_{(i',z),(i',z'')}^* \\
  &= \frac{N}{N^2 - 1}  - \frac{1}{N(N^2  - 1)} \sum_z \left | \sum_{z',i} \beta_{x',z'} V_{(i,z),(i,z')} \right|^2 \\
  &\leq \frac{N}{N^2 - 1}
  \label{eq:expansion2}
\end{align}
where we used the fact that $V$ is unitary and that $\sum_{z'} |\beta_{x' z'}|^2 = 1$. Summing~\eqref{eq:expansion2} over all $x \neq x'$, we get
$$
	 \sum_{x'} |\rho_{x'}|^2 \sum_{x \neq x'} \int \big \| \bra{\psi_{x}} V (\ket{\psi_{x'}} \otimes \ket{\varphi_{x'}}) \big \|^2_2 \,\, dU \leq \sum_{x'} |\rho_{x'}|^2 \sum_{x \neq x'} \frac{N}{N^2 - 1} = \frac{N(M-1)}{N^2 - 1}.
$$
Now fix an $x$; we bound the second term of~\eqref{eq:lambda_avg}. Using Lemma~\ref{lem:haar_moments} again, we have
\begin{align}
&\int \big \| \bra{\psi_{x}} V (\ket{\psi_{x}} \otimes \ket{\varphi_{x}}) \big \|^2_2 \,\, dU \\
&=\sum_{z,z',z''} \beta_{x z'} \beta_{x z''}^* \int \sum_{iji'j'} V_{(i,z),(j,z')} V_{(i',z),(j',z'')}^*  U_{i'x} U_{jx'}  U_{ix}^*  U_{j'x'}^*  \,\, dU \\
&= \left [\frac{1}{N^2 - 1}  - \frac{1}{N(N^2  - 1)} \right] \cdot \left [ \sum_z \left | \sum_{z',i} \beta_{x',z'} V_{(i,z),(i,z')} \right|^2 + N \right ] \\
&= \frac{1}{N(N+1)} \cdot \left [ \sum_z \left | \sum_{z',i} \beta_{x',z'} V_{(i,z),(i,z')} \right|^2 + N \right ] 
\end{align}
Putting everything together, we can bound~\eqref{eq:lambda_avg} by
\begin{align}
	\int \bra{\rho} \Lambda_U^\dagger \Lambda_U \ket{\rho} \,\, dU &\leq \frac{N(M-1)}{N^2 - 1} + \frac{1}{N(N+1)} \cdot \left [ \sum_z \left | \sum_{z',i} \beta_{x',z'} V_{(i,z),(i,z')} \right|^2 + N \right ] \\
	&= \frac{NM - 1}{N^2 - 1} + \frac{1}{N(N+1)} \sum_z \left | \sum_{z',i} \beta_{x',z'} V_{(i,z),(i,z')} \right|^2.
\end{align}
We have to compare this to $\bra{\rho} \Gamma_V^\dagger \Gamma_V \ket{\rho} = \| \Gamma_V \ket{\rho} \|^2_2$. We expand $\Gamma_V \ket{\rho}$:
\begin{align}
\Gamma_V \ket{\rho}  &= \frac{1}{N}\Tr_\authspace(V) \ket{\rho} \\
&= \frac{1}{N} \sum_{i,z} \ket{z}^{\advspace} \bra{i,z}V\ket{i} \left ( \sum_{x,z'} \rho_x \beta_{xz'} \ket{x}^\msgspace \otimes \ket{z'}^{\advspace} \right) \\
&= \frac{1}{N} \sum_{x,z} \rho_x \ket{x,z}^{\msgspace \advspace} \left ( \sum_{i,z'} \beta_{xz'}  \bra{i,z} V \ket{i,z'} \right) \\
&= \frac{1}{N} \sum_{x,z} \rho_x \ket{x,z}^{\msgspace \advspace} \left ( \sum_{i,z'} \beta_{xz'} V_{(i,z),(i,z')} \right)
\end{align}
So therefore 
$$
\bra{\rho} \Gamma_V^\dagger \Gamma_V \ket{\rho} = \frac{1}{N^2} \sum_z \left | \sum_{z',i} \beta_{x',z'} V_{(i,z),(i,z')} \right|^2.
$$
This shows that our desired average of $f$ is small:
$$
	\int f(U) \,\, dU \leq \frac{NM - 1}{N^2 - 1}.
$$

\paragraph{Bounding the Lipschitz constant of $f$.} We compute the Lipschitz continuity of $f$ in parts. Let $g(U) = \bra{\rho}\Gamma_V^\dagger \Lambda_U \ket{\rho}$, where $\ket{\rho} = \sum_x \rho_x \ket{x} \otimes \ket{\varphi_x}$. Expanding, we get
\begin{align}
	g(U) &= \bra{\rho} (\Id^{\authspace} \otimes \Gamma_V^\dagger) \sum_{x,x'} \ketbra{x}{x'} \otimes \bra{\psi_x}V \ket{\psi_{x'}} \ket{\rho} \\
	&= \sum_{x,x'} \rho_x^* \rho_{x'} (\bra{\psi_x} \otimes \bra{\varphi_x}) \Gamma_V^\dagger V (\ket{\psi_{x'}}\otimes \ket{\varphi_{x'}}) \\
	&= \left ( \sum_x \rho_x^* \bra{\psi_x} \otimes \bra{\varphi_x} \right) \Gamma_V^\dagger V \left (\sum_x \rho_x \ket{\psi_{x}}\otimes \ket{\varphi_{x}} \right) \\
	&= \bra{\theta} \Gamma_V^\dagger V \ket{\theta}
\end{align}
where we used that $\Gamma_V^\dagger$ is an operator that acts on $\advspace$ only, and we define $\ket{\theta} = \sum_x \rho_x \ket{\psi_{x}}\otimes \ket{\varphi_{x}}$. Thus for two unitaries $U, \what{U}$, we have
\begin{align}
	\big | g(U) - g(\what{U}) \big | &= \big | \bra{\theta} \Gamma_V^\dagger V \ket{\theta} - \bra{\what{\theta}} \Gamma_V^\dagger V \ket{\what{\theta}} \big | \\
	&= \left | \Tr \left ( \Gamma_V^\dagger V  ( \ketbra{\theta}{\theta} - \ketbra{\what{\theta}}{\what{\theta}} ) \right ) \right | \\
	&\leq \left \| \Gamma_V^\dagger V \right\|_\infty \cdot \big \| \, \ketbra{\theta}{\theta} - \ketbra{\what{\theta}}{\what{\theta}}  \, \big \|_1
\end{align}
where in the inequality we used H\"{o}lder's inequality for matrices: $\Tr(AB) \leq \|A \|_\infty \|B\|_1$. Now, the operator norm is submultiplicative, so $\| \Gamma_V^\dagger V \|_\infty \leq \| \Gamma_V^\dagger \|_\infty \cdot \| V \|_\infty \leq \| \Gamma_V^\dagger \|_\infty$, because $V$ is a unitary and hence its operator norm is $1$. But then $\|\Gamma_V^\dagger \|_\infty = \frac{1}{N} \| \sum_y \bra{y} V \ket{y} \|_\infty \leq \frac{1}{N} \sum_y \| \bra{y} V \ket{y} \|_\infty$, because the operator norm satisfies the triangle inequality. Here, $\ket{y}$ is a basis element of $\authspace$, and $\bra{y}V \ket{y}$ is an operator that maps $\advspace$ to $\advspace$. For each $y$, we can bound $\| \bra{y} V \ket{y} \|_\infty \leq 1$. This implies that $| g(U) - g(\what{U}) | \leq \| \ketbra{\theta}{\theta} - \ketbra{\what{\theta}}{\what{\theta}}  \|_1$.

Thus the Lipschitz constant of $g$ can be bounded by
	$$
		\eta_g \leq \sup_{U,\what{U}} \frac{ 2 \| \ket{\theta} - \ket{\what{\theta}} \|_2}{ \| U - \what{U} \|_2}.
	$$
	Since the columns of $U, \what{U}$ are $\ket{\psi_{u,x}}$ and $\ket{\what{\psi}_{u,x}}$, the denominator $\| U - \what{U} \|_2$ can be written as $\sqrt{\sum_{u,x} \| \ket{\psi_{u,x}} - \ket{\what{\psi}_{u,x}} \|_2^2 }$. Notice that the numerator only depends on the column vectors $\ket{\psi_{0^s,x}} = \ket{\psi_x}$ and $\ket{\what{\psi}_{0^s,x}} = \ket{\what{\psi}_x}$, so the denominator can be minimized to be $\sqrt{\sum_{x} \| \ket{\psi_{x}} - \ket{\what{\psi}_{x}} \|_2^2 }$ without affecting the numerator. The numerator can be bounded as
\begin{align}
	 \| \sum_x \rho_x (\ket{\psi_x} \otimes \ket{\varphi_x} - \ket{\what{\psi}_x} \otimes \ket{\varphi_x}) \|_2 &\leq \sum_x |\rho_x| \cdot \|\ket{\psi_x} \otimes \ket{\varphi_x} - \ket{\what{\psi}_x} \otimes \ket{\varphi_x}\|_2 \\
		&\leq \sqrt{ \sum_x |\rho_x|^2 \sum_x \|\ket{\psi_x} - \ket{\what{\psi}_x} \|_2^2} \\
		&= \sqrt{ \sum_x \|\ket{\psi_x} - \ket{\what{\psi}_x} \|_2^2}
\end{align}
where in the first line we used the triangle inequality, and in the second line we used Cauchy-Schwarz. Thus the Lipschitz constant of $g$ is at most $2$.

%Using the same reasoning as in the warmup, this implies that the Lipschitz constant $\eta_g$ for $g$ is at most $2$.

Now we bound the Lipschitz continuity of $h(U) = \bra{\rho}\Lambda_U^\dagger \Lambda_U \ket{\rho}$. We have that
	\begin{align}
		h(U) &= \sum_{x,x',x''} \rho_{x'} \rho_{x''}^*  (\bra{\psi_{x''}} \otimes \bra{\varphi_{x''}}) V^\dagger \ket{\psi_{x}}  \bra{\psi_{x}} V (\ket{\psi_{x'}} \otimes \ket{\varphi_{x'}} )  \\
		&=  \sum_x \bra{\theta} V^\dagger \ketbra{\psi_x}{\psi_x} V \ket{\theta} \\
		&= \Tr \left ( \sum_x  \ketbra{\psi_x}{\psi_x} V \ketbra{\theta}{\theta} V^\dagger \right)
	\end{align}
	where $\ket{\theta}$ is the same as above. Let $\Pi_U = \sum_x  \ketbra{\psi_x}{\psi_x}$. Therefore 
	\begin{align}
		| h(U) - h(\what{U}) | &= \left | \Tr \left ( \Pi_U V \ketbra{\theta}{\theta} V^\dagger - \Pi_{\what{U}} V \ketbra{\what{\theta}}{\what{\theta}} V^\dagger \right) \right | \\
		&= \left | \Tr \left ( \Pi_U V \left ( \ketbra{\theta}{\theta} - \ketbra{\what{\theta}}{\what{\theta}} \right) V^\dagger + (\Pi_U - \Pi_{\what{U}}) V \ketbra{\what{\theta}}{\what{\theta}} V^\dagger \right) \right | \\
		&\leq \left \| \ketbra{\theta}{\theta} - \ketbra{\what{\theta}}{\what{\theta}}  \right \|_1 + \left | \bra{\what{\theta}} V^\dagger ( \Pi_U - \Pi_{\what{U}} ) V \ket{\what{\theta}} \right | \\
		&\leq \left \| \ketbra{\theta}{\theta} - \ketbra{\what{\theta}}{\what{\theta}} \right \|_1 + \left \| \Pi_U - \Pi_{\what{U}}  \right \|_\infty
	\end{align}
	where in the first inequality we use that $\Pi_U = \Pi_U \cdot \Pi_U$ is a projector, and that $\Tr(\Pi X) \leq \| X \|_1$ for all operators $X$ and $-\Id \leq \Pi \leq \Id$. The second term can be bounded by
	\begin{align}
	\left \| \Pi_U - \Pi_{\what{U}}  \right \|_\infty &=\left \| \sum_x \ketbra{\what{\psi}_x}{\psi_x} + (\ket{\psi_x} - \ket{\what{\psi}_x})\bra{\psi_x} - \ketbra{\what{\psi}_x}{\what{\psi}_x} \right \|_\infty  \\
	&=  \left \| \sum_x (\ket{\psi_x} - \ket{\what{\psi}_x})\bra{\psi_x} + \ket{\what{\psi}_x}(\bra{\psi_x} - \bra{\what{\psi}_x} ) \right \|_\infty \\
	&\leq \left \| \sum_x (\ket{\psi_x} - \ket{\what{\psi}_x})\bra{\psi_x}  \right\|_\infty + \left \| \sum_x (\ket{\psi_x} - \ket{\what{\psi}_x})\bra{\what{\psi}_x}  \right\|_\infty\\ &=  \sup_{\ket{v}} \left \| \sum_x (\ket{\psi_x} - \ket{\what{\psi}_x})\ip{\psi_x}{v}  \right\|_2 + \sup_{\ket{v}} \left \| \sum_x (\ket{\psi_x} - \ket{\what{\psi}_x})\ip{\what{\psi}_x}{v}  \right\|_2\\
	&\leq \sup_{\ket{v}}  \sum_x | \ip{\psi_x}{v}| \cdot \left \| \ket{\psi_x} - \ket{\what{\psi}_x} \right \|_2  +\sup_{\ket{v}}  \sum_x | \ip{\what{\psi}_x}{v}| \cdot \left \| \ket{\psi_x} - \ket{\what{\psi}_x} \right \|_2 \\
	&\leq \sup_{\ket{v}}  \sqrt{ \sum_x |\ip{\psi_x}{v}|^2 \sum_x \left \| \ket{\psi_x} - \ket{\what{\psi}_x} \right \|_2^2 } + \sup_{\ket{v}} \sqrt{ \sum_x |\ip{\what{\psi}_x}{v}|^2 \sum_x \left \| \ket{\psi_x} - \ket{\what{\psi}_x} \right \|_2^2 }  \\
	&\leq 2\sqrt{\sum_x \left \| \ket{\psi_x} - \ket{\what{\psi}_x} \right \|_2^2 }.
	\end{align}
	Therefore the Lipschitz constant $\eta_h$ of $h$ is at most $4$, so the Lipschitz constant $\eta$ of $f$ is at most $8$.

Now we invoke Levy's Lemma once more, and we obtain 
	\begin{align}
		\Pr \big ( |\| \Gamma_V \ket{\rho} - \Lambda_U \ket{\rho} \|_2^2-\int f(U) \,\, dU| \geq \delta \big ) &\leq 4 \exp \left ( -\frac{C N \delta^2}{\eta^2} \right) \\
		&\leq 4 \exp \left ( -C' M^2/N \right)
	\end{align}
where $\delta = 2M/N$ and $C'$ is some universal constant.

\end{proof}

\subsection{Constructing the ideal oblivious adversary}

Now we demonstrate that the map $\ket{\rho}^{\msgspace \advspace} \mapsto \Gamma_V \ket{\rho}^{\msgspace \advspace}$ can be implemented by an ideal oblivious adversary. 

Consider the following ideal adversary, which given a state $\ket{\sigma}^{\authspace \advspace}$ performs the following:
\begin{enumerate}
	\item First, the adversary creates a maximally entangled state $\ket{\Phi}^{\authspace' \authspace''} = \frac{1}{\sqrt{N}} \sum_y \ket{yy}^{\authspace' \authspace''}$ in new registers $\authspace' \otimes \authspace''$. 
	\item It then applies the unitary $V$ to half of $\ket{\Phi}^{\authspace' \authspace''}$ that resides in $\authspace'$, and the $\advspace$ part of $\ket{\sigma}^{\authspace \advspace}$. The state currently looks like:
	\begin{align}
		&\frac{1}{\sqrt{N}} \sum_{y} (\Id^{\authspace} \otimes V^{ \advspace \authspace'})\ket{\sigma}^{\authspace \advspace} \otimes \ket{yy}^{\authspace' \authspace''} \\
		&= \frac{1}{\sqrt{N}} \sum_{y} (\Id^{\authspace} \otimes \Id^{\authspace' \authspace''} \otimes V^{ \advspace \authspace'})\ket{\sigma}^{\authspace \advspace} \otimes \ket{yy}^{\authspace' \authspace''} \\
		&=\frac{1}{\sqrt{N}} \sum_{y,y'} (\Id^{\authspace} \otimes \bra{y'}^{\authspace'} V^{\advspace \authspace'}\ket{y}^{\authspace'})\ket{\sigma}^{\authspace \advspace} \otimes \ket{y'y}^{\authspace' \authspace''}
	\end{align}
	\item The adversary projects $\authspace' \authspace''$ using the projector $\ketbra{\Phi}{\Phi}^{\authspace' \authspace''}$ (and leaves the state unnormalized):
	$$
	\frac{1}{N} \sum_{y} (\Id^{\authspace} \otimes \bra{y}^{\authspace'} V^{\advspace \authspace'}\ket{y}^{\authspace'})\ket{\sigma}^{\authspace \advspace} \otimes \ket{\Phi}^{\authspace' \authspace''}
	$$
	\item The adversary discards the $\authspace' \authspace''$ register:
	$$
	\frac{1}{N} \sum_{y} (\Id^{\authspace} \otimes \bra{y}^{\authspace'} V^{\advspace \authspace'}\ket{y}^{\authspace'})\ket{\sigma}^{\authspace \advspace}
	$$
\end{enumerate}
This is precisely the state $\Gamma_V \ket{\sigma}^{\authspace \advspace}$, and the adversary described above never touches the $\authspace$ register, so it is ideal.

\subsection{Derandomizing the analysis using approximate unitary designs}

The analysis of this scheme is nearly complete; however, the main missing component is that the analysis above assumes that the authentication scheme uses a truly random unitary $U$ to scramble the message state and the tag. Unfortunately, sampling a truly random unitary on $n$ qubits and applying it is infeasible: only a vanishing fraction of unitaries are succinctly describable or are efficiently computable. 

The authentication scheme instead samples a unitary from a \emph{unitary design}, discussed earlier. These are efficiently sampleable, efficiently computable ensembles of unitaries that are \emph{pseudorandom}: they fool polynomials of low degree. 

It won't be necessary to present formal definitions of a unitary design; we will use them in a black box manner. We will appeal to a general derandomization result of Low who proved that, if one establishes a measure of concentration result for a low degree polynomial $f$ that's evaluated on a Haar-random unitary, then it still satisfies (nearly) the same measure of concentration when $f$ is evaluated on a unitary drawn from an approximate $t$-design. More formally:

\begin{theorem}[\cite{low2009large}]
\label{thm:derandomization}
Let $f: U(N) \to \R$ be a polynomial of degree $K$. Let $f(U) = \sum_i \alpha_i M_i(U)$ where $M_i(U)$ are monomials and let $\alpha(f) = \sum_i |\alpha_i|$. Suppose that $f$ has probability concentration
$$
	\Pr_{U \sim \nu_{Haar}} ( |f - \mu| \geq \delta) \leq C \exp(-a\delta^2)
$$
and let $\mu$ be an $\eps$-approximate unitary $t$-design. Then
$$
	\Pr_{U \sim \mu} ( | f - \mu| \geq \delta) \leq \frac{1}{\delta^{2m}} \left ( C \left( \frac{m}{a} \right)^m + \eps ( \alpha + |\mu|)^{2m} \right)
$$
for integer $m$ with $2mK \leq t$.
\end{theorem}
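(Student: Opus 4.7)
The plan is a standard moment-method derandomization: we replace the Haar-measure concentration bound with a high-moment bound (via Markov), evaluate the moment under the design measure by comparison with the Haar moment, and control the comparison error through the $t$-design property plus a crude $\ell_1$-coefficient bound.

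First I would apply Markov's inequality to the $2m$-th power of $|f-\mu|$:
\[
\Pr_{U\sim\mu}\bigl(|f(U)-\mu|\geq\delta\bigr)
\;=\;\Pr_{U\sim\mu}\bigl((f(U)-\mu)^{2m}\geq\delta^{2m}\bigr)
\;\leq\;\frac{1}{\delta^{2m}}\,\Ex_{U\sim\mu}\bigl[(f(U)-\mu)^{2m}\bigr].
\]
The point of squaring-to-the-$2m$ is that $(f-\mu)^{2m}$ is a polynomial in the entries of $U$ and $\overline{U}$ of degree exactly $2mK$, so whenever $2mK\leq t$ we may invoke the $\eps$-approximate $t$-design property to write
\[
\Bigl|\,\Ex_{U\sim\mu}\bigl[(f-\mu)^{2m}\bigr]-\Ex_{U\sim\nu_{\text{Haar}}}\bigl[(f-\mu)^{2m}\bigr]\,\Bigr|
\;\leq\;\eps\cdot\alpha\bigl((f-\mu)^{2m}\bigr),
\]
where $\alpha(\cdot)$ denotes the sum of absolute values of the monomial coefficients, expressed in some fixed natural basis of monomials in the matrix entries.

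Next I would bound the two pieces on the right. For the Haar moment, I would use the layer-cake identity together with the given sub-Gaussian tail bound:
\[
\Ex_{U\sim\nu_{\text{Haar}}}\bigl[(f-\mu)^{2m}\bigr]
\;=\;\int_0^{\infty}2m\,s^{2m-1}\Pr_{U\sim\nu_{\text{Haar}}}\bigl(|f-\mu|\geq s\bigr)\,ds
\;\leq\;\int_0^{\infty}2m\,s^{2m-1}C e^{-a s^2}\,ds,
\]
and a direct substitution $u=as^2$ reduces this to $C\,\Gamma(m+1)/a^m$, which is $\leq C(m/a)^m$ up to absolute constants (using $m!\leq m^m$). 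For the coefficient norm, I would note that $\alpha(\cdot)$ is sub-multiplicative under multiplication of polynomials and sub-additive under addition, and that $\alpha(f-\mu)\leq \alpha(f)+|\mu|=\alpha+|\mu|$, so multinomial expansion yields
\[
\alpha\bigl((f-\mu)^{2m}\bigr)\;\leq\;\bigl(\alpha+|\mu|\bigr)^{2m}.
\]
Combining these three ingredients gives exactly the stated bound.

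The main obstacle is purely bookkeeping: one has to fix a concrete monomial basis (in the real and imaginary parts of the entries, or equivalently in $\{U_{ij},\overline{U}_{ij}\}$) so that both the degree count ``degree of $(f-\mu)^{2m}$ is $2mK$'' and the coefficient-norm sub-multiplicativity hold simultaneously, and so that the $\eps$-approximate $t$-design condition is the natural one. Once a consistent convention is chosen, the rest is Markov plus the two elementary estimates above.
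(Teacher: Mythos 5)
The paper does not prove this statement at all --- it is imported verbatim from \cite{low2009large} and used as a black box, so there is no in-paper argument to compare against. Your sketch is correct and is essentially Low's original proof: Markov applied to the $2m$-th centered moment, moment comparison between the design and Haar measures (valid because $(f-\mu)^{2m}$ has degree $2mK\leq t$), the layer-cake/Gaussian integral giving $\Ex_{\mathrm{Haar}}[(f-\mu)^{2m}]\leq C\,m!/a^m\leq C(m/a)^m$, and submultiplicativity of the coefficient norm giving the $\eps(\alpha+|\mu|)^{2m}$ term. The one genuine caveat is the one you already flag: the error term has exactly this form only under the monomial-based definition of an $\eps$-approximate $t$-design (where each monomial moment of degree at most $t$ is reproduced to additive error $\eps$), which is the convention Low uses; under other common definitions (e.g.\ diamond-norm) an extra dimension-dependent conversion factor would appear.
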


Furthermore, there exist efficient constructions of approximate $t$-unitary designs, for any $t$. 
\begin{theorem}[\cite{brandao2012local}]
\label{thm:designs}
	For every $\eps$, $t$, and $n$, there exists a finite set of unitaries $D_{\eps,t,n} \subset U(N)$ for $N = 2^n$, and a probability distribution $\mu_{\eps,t,n}$ over $D_{\eps,t,n}$ such that
	\begin{enumerate}
		\item $\mu_{\eps,t,n}$ is an $\eps$-approximate $t$-unitary design.
		\item $\mu_{\eps,t,n}$ can be sampled from in $\poly(n,t,\log 1/\eps)$ time
		\item Each unitary $U \in D_{\eps,t,n}$ can be implemented by a quantum circuit acting on $n$ qubits of size at most $O(n \log(4t)^2 t^9 (2nt + \log(1/\eps)))$.
	\end{enumerate}
\end{theorem}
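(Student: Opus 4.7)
The plan is to prove the theorem via the Brandão–Harrow–Horodecki random local circuits approach. The target ensemble $\mu_{\eps,t,n}$ will be obtained by composing $T$ elementary steps, where each step picks a random pair of neighboring qubits in a $1$D (or brickwork) arrangement and applies a Haar-random two-qubit unitary; I will then discretize these two-qubit unitaries via a universal gate set. The number of steps $T$ and the discretization precision will be tuned so that items (1)–(3) hold simultaneously.

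The central object is the $t$-th moment superoperator $M_\mu(X) = \Ex_{U\sim \mu}\, U^{\otimes t} X (U^\dagger)^{\otimes t}$. The distribution $\mu$ is an $\eps$-approximate $t$-design iff $\|M_\mu - M_{\mathrm{Haar}}\|_\diamond \le \eps$ (or an equivalent operator-norm condition after fixing a normalization). First I would observe that the moment operator of one step of the random local circuit can be written as a sum $\frac{1}{n-1}\sum_{i} P_i$, where each $P_i$ is the orthogonal projector onto the invariant subspace of $U_{i,i+1}^{\otimes t,t}$ averaged over Haar on $U(4)$. Thus analyzing convergence reduces to bounding the spectral gap of the frustration-free local Hamiltonian $H_t = \sum_i (\Id - P_i)$ acting on $(\C^{d^t})^{\otimes n}$ with local terms.

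The main obstacle, and the technical heart of the proof, is lower-bounding $\Delta(H_t) = \Omega(1/\poly(t))$ independently of $n$. My plan is to follow the BHH strategy: use the detectability lemma together with Nachtergaele's martingale method to reduce the $n$-site gap to a gap of a constant-size ``path'' Hamiltonian on $O(t)$ sites, then bound that gap by showing the local projectors $P_i$ have bounded overlap. This yields $\Delta(H_t) = \Omega(t^{-9})$ (the exponent reflects bookkeeping in the detectability-lemma amplification, not anything fundamental). Once the gap is in hand, standard spectral-gap-to-mixing arguments give that $T = O(n^2 t^{10}(nt + \log(1/\eps)))$ steps suffice to make $\|M_\mu^T - M_{\mathrm{Haar}}\| \le \eps$, using that the initial distance is at most $2^{2nt}$.

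Finally, to go from Haar-random two-qubit gates to the efficient discrete ensemble $D_{\eps,t,n}$, I would replace each two-qubit Haar unitary by a circuit of length $O(\log^c(1/\eps'))$ over a fixed universal gate set via Solovay–Kitaev, with $\eps'$ chosen so the accumulated error across $T$ steps contributes at most $\eps/2$ to the design error. Sampling such a discretized unitary takes $\poly(n, t, \log(1/\eps))$ time (item 2), and the total circuit size becomes $T$ times the Solovay–Kitaev overhead, matching the bound $O(n \log(4t)^2 t^9(2nt + \log(1/\eps)))$ in item (3) after a careful accounting of the constants. Combining the mixing bound and the discretization error via the triangle inequality for the $\|\cdot\|_\diamond$ norm on moment operators then establishes all three claims.
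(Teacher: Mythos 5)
This theorem is imported verbatim from \cite{brandao2012local}; the paper gives no proof of it, so there is no in-paper argument to compare against --- what you have written is a reconstruction of the Brand\~ao--Harrow--Horodecki proof itself. Your outline does capture the correct architecture: the moment superoperator, the reduction to the spectral gap of the frustration-free Hamiltonian $H_t=\sum_i(\Id-P_i)$, Nachtergaele's martingale method combined with the detectability lemma, a mixing bound of the form $T=O(\mathrm{poly}(t)\,n(nt+\log(1/\eps)))$ using that the initial distance is at most $2^{2nt}$, and a final discretization to make the ensemble finite and efficiently sampleable.

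However, there is a genuine gap exactly where the BHH proof is hardest: the base-case spectral gap. You propose to reduce to a path Hamiltonian ``on $O(t)$ sites'' and then bound its gap ``by showing the local projectors $P_i$ have bounded overlap.'' Two problems. First, the martingale method must be run down to blocks of size $\Theta(\log t)$ (BHH use $\lceil 2.5\log_2(4t)\rceil$ qubits), not $O(t)$: a block of $O(t)$ sites has dimension exponential in $t$, so any base-case bound that degrades with the block dimension would destroy the polynomial-in-$t$ scaling, and the approximate-orthogonality condition the reduction needs is precisely what fixes the block size at logarithmic order. Second, and more importantly, a bounded-overlap (Knabe-type) argument between neighboring projectors does not by itself yield a $1/\mathrm{poly}(t)$ gap: the images of the $P_i$ are spanned by products of permutation operators on $t$ copies, and their pairwise angles degrade as $t$ grows. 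BHH instead lower-bound the base-case gap via a separate convergence result for random walks on a fixed small number of qudits, which rests on the Bourgain--Gamburd/Varj\'u-type spectral gap for universal gate sets together with the approximate orthogonality of permutation operators when the local dimension exceeds $\mathrm{poly}(t)$. Without that ingredient your claimed $\Delta(H_t)=\Omega(t^{-9})$ is unsupported; with it, the rest of your outline (including the Solovay--Kitaev discretization, which is a legitimate alternative to BHH's treatment of universal gate sets provided you track the $O(t)$ Lipschitz dependence of the moment operator on each gate) goes through.
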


We combine these two theorems to prove our final result:
\begin{theorem}[Restatement of Theorem~\ref{thm:design}]
	The unitary design scheme is efficiently computable, and is $2^{-s/2}$-totally authenticating.
\end{theorem}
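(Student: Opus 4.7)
The plan is to lift the Haar-random analysis of the previous subsection (Lemma~\ref{lem:side_inf_concentration} together with the explicit ideal-adversary construction) to approximate unitary $8$-designs via Theorem~\ref{thm:derandomization}, and to read off efficient sampleability and implementation from Theorem~\ref{thm:designs}.

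First I would reduce the trace-distance security bound to a second-moment bound on a low-degree polynomial in $U$. Fix any adversary, written by Stinespring as a unitary $V$ on $\authspace\advspace$ followed by a projection $P$ on $\advspace$. For key $k$ the real post-verification state is the sub-normalized pure state $P\Lambda_{U_k}\ket{\rho}$, while the ideal oblivious adversary constructed above yields $P\Gamma_V\ket{\rho}$. Since $P$ is a contraction and $\|\Lambda_U\|_\infty, \|\Gamma_V\|_\infty \leq 1$, an extension of Fact~\ref{fact:pure_to_dens} to sub-normalized vectors, combined with Jensen's inequality, gives
\begin{align*}
	\Ex_k \bigl\| P\Lambda_{U_k}\ketbra{\rho}{\rho}\Lambda_{U_k}^\dagger P - P\Gamma_V\ketbra{\rho}{\rho}\Gamma_V^\dagger P\bigr\|_1 \;\leq\; 2 \sqrt{\Ex_k f(U_k)},
\end{align*}
where $f(U) := \|\Lambda_U\ket{\rho} - \Gamma_V\ket{\rho}\|_2^2$. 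Because the real and ideal global states are block diagonal in the key register, this also bounds the full trace distance demanded by Definition~\ref{def:general_sec}, so it suffices to prove $\Ex_{U\sim\mu} f(U) = O(2^{-s})$.

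Next I would note that $f$, expanded in the entries of $U$ and $\conj{U}$, is a polynomial of total degree at most $4$: since $\Lambda_U = \bra{0}^s U^\dagger V U \ket{0}^s$ is of degree $(1,1)$ in $(U,\conj{U})$, the term $\bra{\rho}\Lambda_U^\dagger\Lambda_U\ket{\rho}$ is of degree $(2,2)$, the cross terms $\bra{\rho}\Gamma_V^\dagger\Lambda_U\ket{\rho}$ are of degree $(1,1)$, and the remaining term is $U$-independent. Applying Theorem~\ref{thm:derandomization} with $K=4$, $m=1$, $t=8$ to the Haar concentration of Lemma~\ref{lem:side_inf_concentration} (whose mean is upper-bounded by $\tfrac{NM-1}{N^2-1} \leq M/N = 2^{-s}$) transfers the bound to any $\eps$-approximate $8$-design with an additive slack of order $\eps \cdot \poly(N)$. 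Choosing $\eps$ inverse-exponentially small in $n$, which by Theorem~\ref{thm:designs} still keeps key length and circuit size polynomial in $n$, absorbs this slack; integrating the resulting polynomial tail and combining with the Haar mean estimate yields $\Ex_{U\sim\mu} f(U) = O(2^{-s})$, and hence total-authentication error $O(2^{-s/2})$.

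Efficient sampling of $k$ and implementation of $U_k$ are then immediate from Theorem~\ref{thm:designs}. The main technical obstacle I anticipate is controlling the polynomial prefactor $\alpha(f)$ appearing in Theorem~\ref{thm:derandomization}: one must expand $\Gamma_V$, $\Lambda_U$, and $\ket{\rho}$ in the computational basis and verify that the resulting coefficient $\ell^1$-norm is at most $\poly(N)$, so that the chosen $\eps$ is in fact small enough. A smaller subtlety is to verify that the extended Fact~\ref{fact:pure_to_dens} indeed costs only a factor of $2$ here, which follows from $\Lambda_U$ and $\Gamma_V$ being contractions on the unit-norm $\ket{\rho}$.
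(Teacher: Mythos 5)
Your proposal follows essentially the same route as the paper: reduce to the Haar concentration of Lemma~\ref{lem:side_inf_concentration}, bound the coefficient norm $\alpha(f)$ of the degree-$4$ polynomial $f$, and transfer via Theorem~\ref{thm:derandomization} with $K=4$, $m=1$, $t=8$ and a sufficiently small $\eps$-approximate $8$-design from Theorem~\ref{thm:designs} (the paper computes $\alpha(f)=O(N^7)$ and takes $\eps=N^{-17}$). The only cosmetic difference is that you integrate the tail to bound $\Ex_{U\sim\mu} f(U)$ and then apply Jensen, whereas the paper stops at a single tail bound $\Pr_{U\sim\mu}(f\geq M/N+\sqrt{M/N})\leq O(1/M)$; both versions convert to the final trace-distance claim with the same (paper-level) looseness.
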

\begin{proof}
Note that $f(U)$ is a polynomial of degree $4$ in the entries of $U$. We compute $\alpha(f)$ by computing $\alpha(f_0)$, $\alpha(g)$, and $\alpha(h)$ where $f_0 = \bra{\rho}\Gamma_V^\dagger \Gamma_V \ket{\rho}$ is a constant, $g(U) = \bra{\rho}\Gamma_V^\dagger \Lambda_U \ket{\rho}$, and $h(U) = \bra{\rho}\Lambda_U^\dagger \Lambda_U \ket{\rho}$. Clearly, $\alpha(f) \leq \alpha(f_0) + 2\alpha(g) + \alpha(h)$. 

Since $f_0$ is a constant function, $\alpha(f_0)$ is at most $|f_0| \leq 1$. We turn to $g$. Let $\{ \ket{x} \}$ be a basis for $\msgspace$. Then for $x, x'$, define the operator $T^{xx'} = \bra{\varphi_x} \Gamma_V^\dagger V \ket{\varphi_{x'}}$ to be the linear operator that maps $\authspace$ to $\authspace$ (recall that $\ket{\rho} = \sum_x \rho_x \ket{x} \otimes \ket{\varphi_x}$). Then,
\begin{align}
	g(U) &= \sum_{x,x'} \rho_x^* \rho_{x'} \bra{\psi_x} T^{xx'} \ket{\psi_{x'}} \\
		&= \sum_{x,x',y,y'} \rho_x^* \rho_{x'} T^{xx'}_{yy'} U_{yx}^* U_{y'x'}
\end{align}
For every $x,x',y,y'$, we have a distinct monomial $U_{yx}^* U_{y'x'}$, and the corresponding coefficient is $\rho_x^* \rho_{x'} T^{xx'}_{yy'}$, which has absolute value at most $1$. Therefore $\alpha(g) \leq M^2 N^2$. 

Now we turn to $h(U)$. Recall that
\begin{align}
	h(U) &= \sum_{x',x''} \rho_{x'}^* \rho_{x''} \sum_x (\bra{\psi_{x'}} \otimes \bra{\varphi_{x'}})  V^\dagger \ket{\psi_{x}} \bra{\psi_{x}} V (\ket{\psi_{x''}} \otimes \ket{\varphi_{x''}}) \\
		&= \sum_{i,j,x',x''} \rho_{x'}^* \rho_{x''} U_{ix'}^* U_{jx''} \sum_x (\bra{i} \otimes \bra{\varphi_{x'}})  V^\dagger \ket{\psi_{x}} \bra{\psi_{x}} V (\ket{j} \otimes \ket{\varphi_{x''}})
\end{align}
where $\ket{\psi_x} = \sum_i U_{ix} \ket{i}$, $\ket{\psi_{x'}} = \sum_i U_{ix'} \ket{i}$ and $\ket{\psi_{x''}} = \sum_j U_{jx''} \ket{j}$. Define $\ket{\tau^{ix'}} = V\ket{i} \otimes \ket{\varphi_{x'}}$ and $\ket{\tau^{jx''}} = V\ket{j} \otimes \ket{\varphi_{x''}}$. Then we have
\begin{align}
	h(U) &= \sum_{i,j,i',j'} \sum_{x,x',x''} U_{ix'}^* U_{jx''} U_{i'x} U_{j'x}^* \rho_{x'}^* \rho_{x''} \ip{\tau^{ix'}}{i'} \ip{j'}{\tau^{jx''}} \\
		&= \sum_{i,j,i',j'} \sum_{x,x',x''} U_{ix'}^* U_{jx''} U_{i'x} U_{j'x}^* \rho_{x'}^* \rho_{x''} \sum_z (\tau^{ix'}_{i'z})^* \,\, \tau^{jx''}_{j'z}
\end{align}
where we alternatively write $\ket{\tau^{ix'}} = \sum_z \tau^{ix'}_{i'z} \ket{i',z}$ and $\ket{\tau^{jx''}} = \sum_z \tau^{jx''}_{j'z} \ket{j',z}$. For every choice of $i,j,i',j',x,x',x''$, we have a distinct monomial, and the associated coefficient has norm at most
$$
\big | \rho_{x'}^* \rho_{x''} \sum_z (\tau^{ix'}_{i'z})^* \,\, \tau^{jx''}_{j'z} \big |^2 \leq \left ( \sum_z |\tau^{ix'}_{i'z}|^2 \right) \cdot \left ( \sum_z |\tau^{jx''}_{j'z}|^2 \right) \leq 1.
$$
Thus $\alpha(h)$ is at most $M^3 N^4$. This implies that $\alpha(f) \leq O(N^7)$. 

Now we are ready to leverage Theorems~\ref{thm:derandomization} and~\ref{thm:designs}. in Lemma~\ref{lem:side_inf_concentration} we proved that function $f(U) = \| \Gamma_V \ket{\rho} - \Gamma_U \ket{\rho} \|_2^2$ has probability concentration
$$
	\Pr_{U \sim \nu_{Haar}} ( |f - \mu| \geq \delta) \leq 4 \exp(-C N\delta^2)
$$
where $C$ is a universal constant. Thus our parameters are:
\begin{enumerate}
	\item (Average of $f$) $\mu = M/N$
	\item (Error in probability concentration) $\delta = \sqrt{M/N}$
	\item (Degree of $f$) $K = 4$
	\item (Probability concentration exponent) $a = CN$
	\item (Norm of $f$) $\alpha(f) = O(N^7)$
\end{enumerate}
We will set $m = 1$, $\eps = N^{-17}$, and $t = 8$. 

By Theorem~\ref{thm:designs}, there exists a distribution $\mu_{\eps,t,n}$ over unitaries acting on $n$ qubits that forms an efficiently computable $\eps$-approximate $t$-unitary design. Then, plugging everything into Theorem~\ref{thm:derandomization}, we have that
\begin{align}
	\Pr_{U \sim \mu_{\eps,t,n}} ( f \geq M/N + \sqrt{M/N}) \leq   O(1/M)
\end{align}
Note that $M/N = 2^{-s}$. 
\end{proof}

%In total, this protocol takes $5$ rounds. 

\section{A lifting theorem for authentication}
\label{sec:lifting}

\newcommand{\A}{\mathcal{A}}
\newcommand{\B}{\mathcal{B}}
\newcommand{\Clifford}{\mathcal{C}}
\newcommand{\pauli}{\mathsf{Pauli}}

We will prove a \emph{lifting theorem} which shows that a weak form of authentication security that doesn't take into account quantum side information actually implies stronger security against quantum side information. The initial weak form of security is very weak indeed: as long as we have the authentication scheme can securely authenticate a \emph{single} state (namely, one half of the maximally entangled state), in a key-averaged manner, then we can actually obtain an authentication scheme that can authenticate all states --- even those that are entangled with the adversary. 

Specifically, we show that this weak authentication security implies the security definition of~\cite{dupuis2012actively}, which we reproduce here:

\begin{definition}[\cite{dupuis2012actively} security definition]
	An authentication scheme $(\auth,\ver)$ is $\eps$-secure according to the~\cite{dupuis2012actively} definition iff for all initial message states $\ket{\rho}^{\msgspace \safespace \advspace}$, for all adversaries $\superop \in \metalinmap(\authspace \advspace,\authspace \advspace)$, there exists an oblivious adversary $\id$ such that the following are $\eps$-close in trace distance:
\begin{itemize}
	\item (Real experiment) $\Ex_k \left[ \ver_k \circ \superop \circ \auth_k \right](\rho^{\msgspace \safespace \advspace})$
	\item (Ideal experiment) $\id(\rho^{\msgspace \safespace \advspace})$
\end{itemize}
where $\auth_k$ acts on $\msgspace$, $\ver_k$ acts on $\authspace$, and both act as the identity on $\safespace \advspace$.
\end{definition}

The difference between this definition and total authentication is that the key is averaged over in the~\cite{dupuis2012actively} definition.

There is a minor caveat: we do not prove this implication for all authentication schemes. Instead, we prove it for authentication schemes \emph{composed} with a Pauli randomization step. If $(\auth,\ver)$ is an authentication scheme, we call this composed scheme $\pauli + (\auth,\ver)$, and it behaves as follows:

The secret key for $\pauli + (\auth,\ver)$ consists of the key $k$ for $(\auth,\ver)$, as well as a new, independent key $k'$. The procedure to authenticate a message register $\msgspace$ behaves as follows: first, the key $k'$ is used to choose a random unitary from the Pauli group that acts on the space $\msgspace$.\footnote{For simplicitly let us think of $\msgspace$ as $(\C^2)^{\otimes n}$ (i.e., $n$ qubits). Then the Pauli group consists of all operators of the form $X^p Z^q$, where $p,q \in \{0,1\}^n$. Here, the operator $X^p$ is defined to be the tensor product of $X_j^{p_j}$, where $X_j$ is the $X$ Pauli operator acting on the $j$'th qubit. $Z^q$ is defined similarly.} We call this the \emph{Pauli randomization step}. Next, the key $k$ is used to apply $\auth_k$ to the register $\msgspace$ to produce a state in the $\authspace$ register. This is the authenticated state, which is then subject to attack by the adversary.

To un-authenticate, the $\ver_k$ procedure is applied. Note that this is not a unitary operation, but includes the projection on the receiver's acceptance (see the Preliminaries for a discussion of this). Finally, the Pauli randomization is undone using the key $k'$.

\begin{theorem}[Lifting weak authentication to total authentication]
\label{thm:lifting2}
	Let $(\auth,\ver)$ be an authentication scheme, and suppose the composed scheme $\pauli + (\auth,\ver)$ satisfies the following security guarantee:
 for all adversaries $\superop \in \metalinmap(\authspace \advspace,\authspace \advspace)$, for all adversary ancilla qubits $\ket{\theta}^{\advspace \advspace'}$, there exists an oblivious adversary $\id$ acting on $\advspace$ only such that the following are $\eps$-close in trace distance:
\begin{itemize}
	\item (Real experiment) $\Ex_{k,k'} \left[ \pauli_{k'}^\dagger \circ \ver_k \circ \superop \circ \auth_k \circ \pauli_{k'} \right](\ketbra{\Phi}{\Phi}^{\msgspace \B} \otimes \ketbra{\theta}{\theta}^{\advspace \advspace'})$
	\item (Ideal experiment) $\ketbra{\Phi}{\Phi}^{\msgspace \B} \otimes \id(\ketbra{\theta}{\theta}^{\advspace \advspace'})$
\end{itemize}
where $\B$ is a Hilbert space isomorphic to $\msgspace$, and $\ket{\Phi}^{\msgspace \B}$ is the maximally entangled state. 

Then, the composed scheme $\pauli + (\auth,\ver)$ is a $\eps$-secure according to the~\cite{dupuis2012actively} definition.
\end{theorem}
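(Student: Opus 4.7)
The plan is to exploit the Pauli randomization step of $\pauli+(\auth,\ver)$ to reduce the analysis of an arbitrary attack on an arbitrary input to the hypothesized guarantee on the maximally entangled input. The two workhorses are (i) a standard Pauli-twirl identity, which turns any attack into a convex sum of Pauli errors on $\msgspace$ accompanied by side-channels on $\advspace$, and (ii) a purification/isometry argument, which lets me feed exactly the ``right'' ancilla into the hypothesis so as to match the given input $\rho$.

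First I would fix an arbitrary pure input $\ket{\rho}^{\msgspace\safespace\advspace}$ and adversary $\superop$ on $\authspace\advspace$, and consider the per-auth-key CP map $E_k:=\ver_k\circ\superop\circ\auth_k$ on $\msgspace\advspace$. Expanding its Kraus operators in the Pauli basis on $\msgspace$, $K_i^{(k)}=\sum_P P\otimes A_{i,P}^{(k)}$ with $A_{i,P}^{(k)}$ acting on $\advspace$, the standard Pauli-twirl calculation uses $\Ex_{k'}(-1)^{\langle k',P\rangle+\langle k',Q\rangle}=\delta_{P,Q}$ to kill every $P\ne Q$ cross term, giving
\begin{equation*}
\mathcal{L}(\sigma) \;:=\; \Ex_{k,k'}\,\pauli_{k'}^\dagger\, E_k\, \pauli_{k'}(\sigma) \;=\; \sum_P (P\otimes\Id)\,(\Id^{\msgspace}\otimes \Phi_P^{\advspace})(\sigma)\,(P^\dagger\otimes\Id),
\end{equation*}
where $\Phi_P(\cdot):=\Ex_k\sum_i A_{i,P}^{(k)}(\cdot)(A_{i,P}^{(k)})^\dagger$ is a CP (but not necessarily TP) map on $\advspace$. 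The derivation goes through unchanged for the post-selected $\ver_k$, since the twirl only uses CP-ness.

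Next I would purify to meet the hypothesis. Let $\theta^{\advspace}:=\mathrm{tr}_{\msgspace\safespace}(\rho)$, and let $\ket{\tilde\theta}^{\advspace\advspace'}$ with $\advspace'\simeq \msgspace\otimes\safespace$ be a purification of $\theta^{\advspace}$; by uniqueness of purifications with matching ancilla dimensions, there is a unitary $V:\advspace'\to\msgspace\safespace$ with $\ket{\rho}=(V\otimes\Id^{\advspace})\ket{\tilde\theta}$. Applying the hypothesis to the input $\ket{\Phi}^{\msgspace\B}\otimes\ket{\tilde\theta}^{\advspace\advspace'}$, the twirled real experiment evaluates to
\begin{equation*}
\sum_P \ketbra{P}{P}^{\msgspace\B}\otimes(\Phi_P\otimes\Id^{\advspace'})(\tilde\theta),\qquad \ket{P}:=(P\otimes\Id)\ket{\Phi},
\end{equation*}
while the ideal experiment produces $\ketbra{\Phi}{\Phi}^{\msgspace\B}\otimes(\id\otimes\Id^{\advspace'})(\tilde\theta)$ for some oblivious $\id$ on $\advspace$. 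Because $\{\ket{P}\}$ is the orthonormal Bell basis, the trace distance between them equals the sum of per-block trace distances, yielding simultaneously
\begin{equation*}
\bigl\|(\Phi_I-\id)\otimes\Id^{\advspace'}(\tilde\theta)\bigr\|_1 \;+\; \sum_{P\ne I}\bigl\|(\Phi_P\otimes\Id^{\advspace'})(\tilde\theta)\bigr\|_1 \;\le\;\eps.
\end{equation*}

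Finally I would transfer this bound from $\tilde\theta$ to the actual input $\rho$. Since $V$ acts only on $\msgspace\safespace$ and commutes with any $\mathcal{F}$ on $\advspace$, the identity $(\Id^{\msgspace\safespace}\otimes\mathcal{F})(\rho)=(V\otimes\Id)(\Id^{\advspace'}\otimes\mathcal{F})(\tilde\theta)(V^\dagger\otimes\Id)$ and unitary invariance of the trace norm give $\|(\Id\otimes\mathcal{F})(\rho)\|_1=\|(\Id\otimes\mathcal{F})(\tilde\theta)\|_1$ for every CP $\mathcal{F}$ on $\advspace$. Specialising $\mathcal{F}$ to $\Phi_P$ and to $\Phi_I-\id$, and applying the triangle inequality over the Pauli-branch decomposition of $\mathcal{L}(\rho)$, bounds $\|\mathcal{L}(\rho)-(\Id^{\msgspace\safespace}\otimes\id^{\advspace})(\rho)\|_1$ by exactly the same quantity $\eps$. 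Since $(\Id\otimes\id)(\rho)$ is precisely the ideal experiment in the Dupuis et al.\ definition, this concludes the argument. I expect the only real subtlety to be the bookkeeping of the Pauli twirl with a non-unital $\ver_k$ and a general $\advspace$, and the careful matching of registers in the purification isometry; the block-diagonal trace-norm decomposition and the unitary invariance step are themselves routine.
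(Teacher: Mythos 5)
Your proof is correct, but it takes a genuinely different route from the paper's. The paper argues via a sequence of hybrids built on quantum teleportation: it introduces a fresh maximally entangled state $\ket{\Phi}^{\A\B}$, teleports $\rho^{\msgspace}$ into the protocol, absorbs the teleportation Pauli corrections into the key $k'$ (using that $k'$ and $k'\oplus(p,q)$ are identically distributed), and then commutes the Bell measurement past the protocol so that what remains inside the protocol is exactly the hypothesized experiment on $\ketbra{\Phi}{\Phi}^{\A\B}\otimes\rho^{\advspace\msgspace\safespace}$, with $\msgspace\safespace$ playing the role of the ancilla $\advspace'$. You instead compute the key-averaged channel explicitly: the Pauli twirl over $k'$ reduces $\Ex_{k,k'}\,\pauli_{k'}^\dagger\circ\ver_k\circ\superop\circ\auth_k\circ\pauli_{k'}$ to the form $\sum_P (P\otimes\Id)(\Id\otimes\Phi_P)(\cdot)(P\otimes\Id)^\dagger$, evaluate it on the maximally entangled input to get a Bell-basis block-diagonal state, read off from the hypothesis the branch-wise bound $\|(\Phi_I-\id)\otimes\Id(\tilde\theta)\|_1+\sum_{P\neq I}\|(\Phi_P\otimes\Id)(\tilde\theta)\|_1\leq\eps$, and transport it to an arbitrary input by matching purifications of the adversary's marginal. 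The two arguments are essentially dual views of the same mechanism (the twirl is the algebraic shadow of teleportation), and both yield the same $\eps$; the paper's version is lighter on Kraus-operator bookkeeping and makes transparent \emph{why} the Pauli randomization is the right preprocessing, while yours yields a sharper intermediate statement --- the entire Pauli-branch error is controlled, and it depends on the input only through the marginal $\rho^{\advspace}$, so one invocation of the hypothesis covers all inputs with the same adversarial marginal. Two cosmetic points: the transfer identity $(\Id\otimes\mathcal{F})(\rho)=(V\otimes\Id)(\Id\otimes\mathcal{F})(\tilde\theta)(V\otimes\Id)^\dagger$ holds for arbitrary linear $\mathcal{F}$ (not just CP maps), which is what you actually need since $\Phi_I-\id$ is only Hermitian-preserving; and one should note, as you do implicitly, that the twirl computation is unaffected by $\ver_k$ being trace non-increasing, since only complete positivity and linearity are used.
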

\begin{proof}

We wish to argue that $\pauli + (\auth,\ver)$ is secure according to the~\cite{dupuis2012actively} definition. To that end, let $\ket{\rho}^{\msgspace \safespace \advspace}$ be a state that is entangled with the adversary, and let $\superop \in \metalinmap(\authspace \advspace,\authspace \advspace)$ be an arbitrary adversary.

\medskip
\noindent \textbf{Real experiment.} Consider an execution of the $\pauli + (\auth,\ver)$ protocol on input $\rho$ with adversary $\superop$. We can represent this execution as a channel that takes in three registers ($\advspace$, $\msgspace$, and $\safespace$), and outputs the same three registers. Recall that the $\ver$ operation is not a unitary, because we project on the protocol accepting (see the Preliminaries for a discussion of this). This channel is represented diagrammatically in Figure~\ref{fig:real}, where time advances left to right. We call this execution the Real Experiment. Our goal is to conclude that the output of the Real Experiment is approximately the result of an ideal adversary. 

The final state of the Real Experiment is
\begin{equation}
\label{eq:real_final}
\Ex_{k,k'} (\pauli_{k'}^\dagger \circ \ver_k \circ \superop \circ \auth_k \circ \pauli_{k'}) \left(\rho^{\advspace \msgspace \safespace}\right).
\end{equation}

\begin{figure}[H]
\centering
\includegraphics[width=17cm]{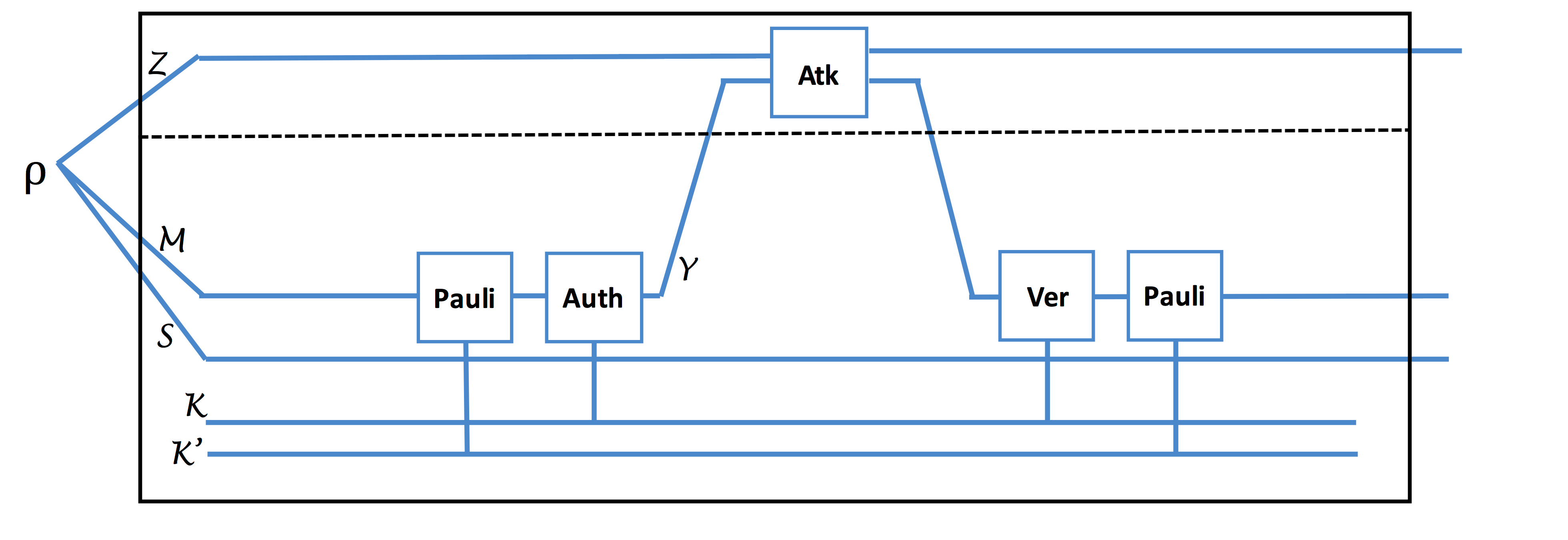}
\caption{Real Experiment}
\label{fig:real}
\end{figure}

\medskip
\noindent \textbf{Hybrid 1.} Next we argue that the intput and output behavior of the real protocol is indistinguishable from the next protocol, demonstrated in Figure~\ref{fig:hybrid1}. Here, inside the protocol, instead of first authenticating the $\msgspace$ part of $\rho$, the protocol introduces an ancillary maximally entangled state $\ket{\Phi}^{\A \B}$ two new registers $\A$, $\B$, where both $\A$ and $\B$ are isomorphic to $\msgspace$. The protocol then \emph{teleports} $\rho^{\msgspace}$ from register $\msgspace$ to register $\A$ by making a Bell measurement on the $\msgspace$ and $\B$ registers.

\begin{figure}[H]
\centering
\includegraphics[width=17cm]{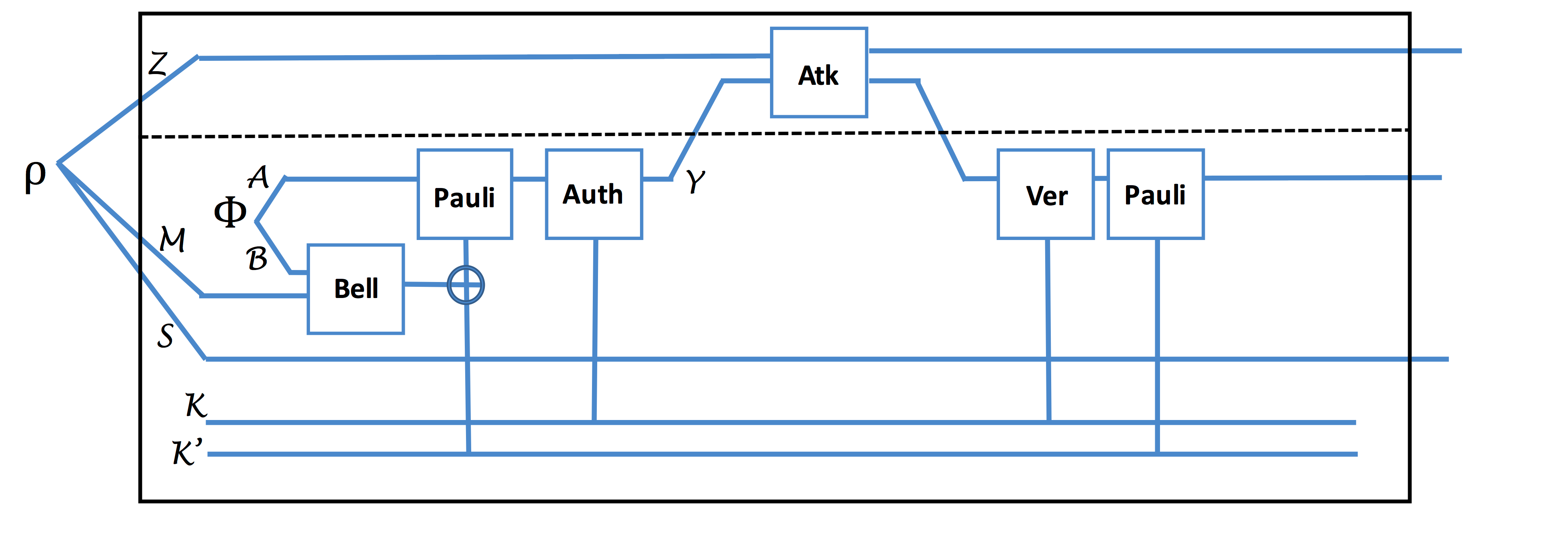}
\caption{Hybrid 1}
\label{fig:hybrid1}
\end{figure}

The teleportation sub-circuit is implemented by first making a Bell measurement on the $\B \msgspace$ registers. After the measurement, the $\A$ register now contains the density matrix $\rho^\msgspace$, conjugated by a Pauli error $X^p Z^q$ that depend on the outcomes $(p,q)$ of the Bell measurement. We can write the state of the $\advspace \A \safespace$ registers as follows:
$$
	\Ex_{k,k',p,q} \pauli_{p,q}^\A (\rho^{\advspace \A \safespace})
$$
where $\pauli_{p,q}(\rho) = X^p Z^q (\rho) (X^p Z^q)^\dagger$. Note that the outcomes $p$ and $q$ are independent of $k$ and $k'$. 

Subsequently, the outcomes $(p,q)$ are used to apply the correction $\pauli_{p,q}^\dagger$.\footnote{In Figure~\ref{fig:hybrid1}, we combine the Pauli correction with the Pauli operation that is part of the authentication scheme into one operator. Note that this merged Pauli operator is $\pauli_{k'} \pauli_{p,q}^\dagger = -\pauli_{k' \oplus (p,q)}$. The overall phase is immaterial.} After this Pauli correction, the effect of the teleportation process is to have swapped the registers $\msgspace$ and $\A$. The residual entangled state on registers $\B \msgspace$ is now some Bell basis state, which we ignore for the rest of the protocol. The state $\rho^{\advspace \A \safespace}$ undergoes the Real Experiment protocol, so therefore the final state of Hybrid 1 is going to be identical to the output state of the Real Experiment in~\eqref{eq:real_final} (up to relabeling of registers):
\begin{equation}
\label{eq:hybrid1_final}
\Ex_{k,k',p,q} (\pauli_{k'}^\dagger \circ \ver_k \circ \superop \circ \auth_k \circ \pauli_{k'}) \left(\rho^{\advspace \A \safespace}\right).
\end{equation}
Although the measurement outcomes $p$ and $q$ are not used inside the expectation, we still keep track of them (as will be useful in the next Hybrid).

\medskip
\noindent \textbf{Hybrid 2.} We can rewrite the final state of Hybrid 1 in the following way:
\begin{align*}
\Ex_{k,k',p,q} (\pauli_{k' \oplus (p,q)}^\dagger \circ \ver_k \circ \superop \circ \auth_k \circ \pauli_{k' \oplus (p,q)}) \left(\rho^{\advspace \A \safespace}\right).
\end{align*}
This is because the distribution $k'$ and $k' \oplus (p,q)$ are identical, and these random strings are independent of $k$. But observe that this final state corresponds to the protocol diagrammed in Figure~\ref{fig:hybrid2}, where the Pauli corrections are not done immediately after the Bell measurement, but rather deferred to \emph{after} the $\pauli_{k'}$ operation.
%
%
%Notice that the scaled projector $\Pi^{\A \msgspace}$ commutes with the operations $\auth$ (which acts on $\keyspace$, $\B$), $\superop$ (which acts on $\authspace$, $\advspace$), and $\ver$ (which acts on $\keyspace$, $\authspace$). Therefore we can consider a new protocol, which we call Hybrid 2, wherein we delay the application of $\Pi$ until \emph{after} the verification predicate (as indicated in Figure~\ref{fig:hybrid2}). The output state of the Hybrid 2 experiment is identical to that of Hybrid 1 (and thus the Real Experiment). 
%
%However notice that in the state of the system right after the $\ver$ operation in the Hybrid 2 protocol (this is indicated by the vertical dashed line). This is before the $\mathsf{Bell}^+$ measurement, and thus up to this point the protocol has not interacted at all with the $\msgspace \safespace$ registers. We can write the state of the protocol up to the vertical dashed line as
%\begin{equation}
%\label{eq:hybrid2_intermediate}
%\sigma_2 := \Ex_k \left [ \ver_k \circ \superop \circ \auth_k \right] (\ketbra{\Phi}{\Phi}^{\A \B} \otimes \rho^{\advspace \msgspace \safespace}) \otimes \ketbra{k}{k}^\keyspace.
%\end{equation}

\begin{figure}[H]
\centering
\includegraphics[width=17cm]{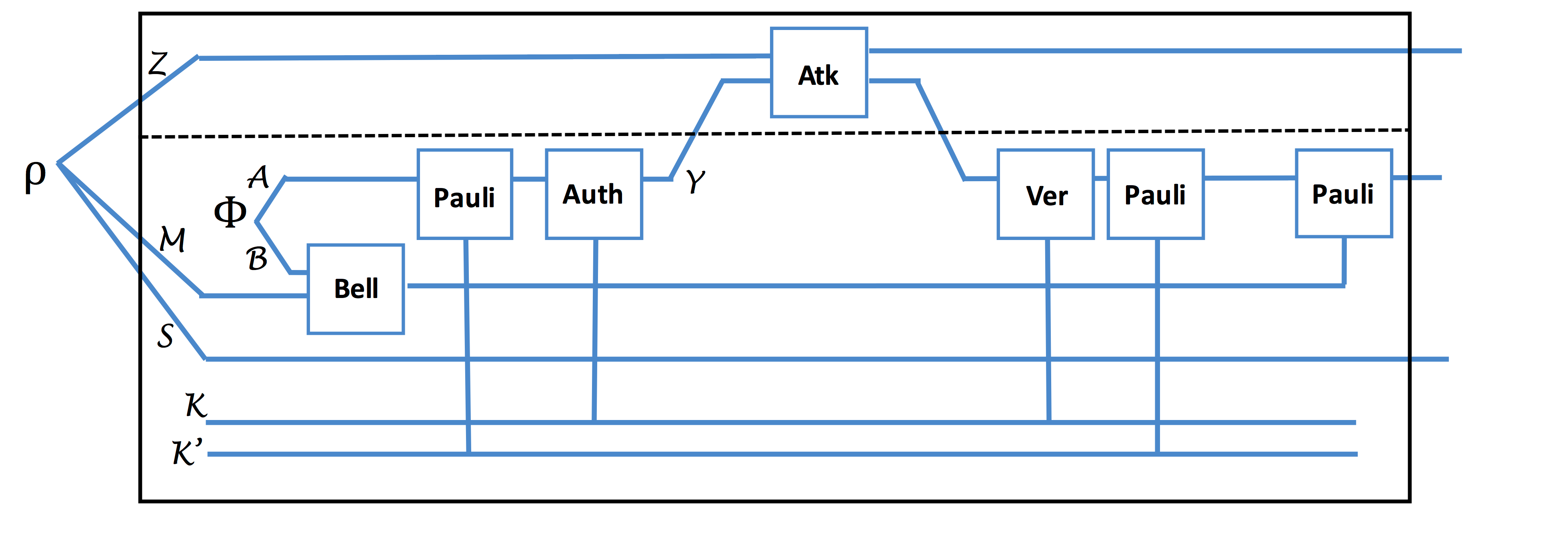}
\caption{Hybrid 2}
\label{fig:hybrid2}
\end{figure}

\medskip
\noindent \textbf{Hybrid 3.} But since the Pauli corrections happen after the verification procedure, the Bell measurement commutes with $\auth_k$, the attack $\superop$,the verification procedure $\ver_k$, and $\pauli_{k'}$. Therefore we can push the Bell measurement to just right before the Pauli corrections, as in Figure~\ref{fig:hybrid3}. But now observe that the state of the protocol, right up to the vertical dashed line, can be written as follows:
$$
	\Ex_{k,k'} (\pauli_{k'}^\dagger \circ \ver_k \circ \superop \circ \auth_k \circ \pauli_{k'}) (\ketbra{\Phi}{\Phi}^{\A \B} \otimes \rho^{\advspace \msgspace \safespace})
$$
where the sequence of operations $\pauli_{k'}^\dagger \circ \ver_k \circ \superop \circ \auth_k \circ \pauli_{k'}$ only interact with the $\A$ and $\advspace$ registers. 

\begin{figure}[H]
\centering
\includegraphics[width=17cm]{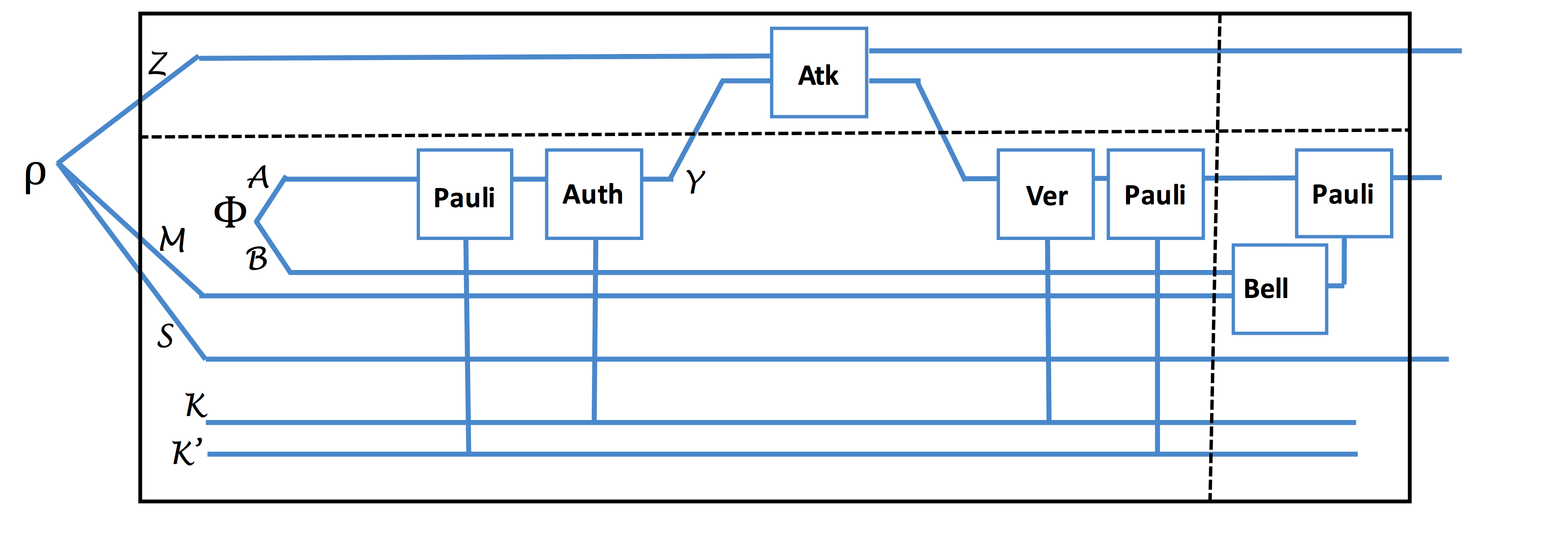}
\caption{Hybrid 3}
\label{fig:hybrid3}
\end{figure}

The state up to that point looks like the sender authenticated the $\A$ half of maximally entangled state $\ket{\Phi}^{\A \B}$ using the secret keys $k,k'$, and kept the other half. The adversary then applies an attack $\superop$ on the authentication of $\Phi^A$, as well the $\advspace$ part of $\rho$ (which is unentangled with $\Phi$). After the adversary's attack, the message register undergoes the verification procedure.

But this is precisely the situation we are assuming we have a security guarantee about. Thus there exists an ideal oblivious adversary $\id$ acting on $\advspace$ only such that the state of the protocol (averaged over $k, k'$), up to the vertical dashed line, is $\eps$-close to
$$
	\ketbra{\Phi}{\Phi}^{\A \B} \otimes \id(\rho^{\advspace \msgspace \safespace}).
$$
The teleportation circuit is then applied to this state. The final state of Hybrid 4 is therefore $\eps$-close to 
$$
	\id(\rho^{\advspace \A \safespace})
$$
where we have traced out the $\B \msgspace$ registers. However this implies that the final state of the Real Experiment must be $\eps$-close to $\id(\rho^{\advspace \A \safespace})$. 

We have thus proved that $\pauli + (\auth,\ver)$ is a $\eps$-secure according to the~\cite{dupuis2012actively} definition of security. This finishes the proof.

%-averaged total authentication scheme. By Lemma~\ref{lem:lifting}, this implies that $\pauli + (\auth,\ver)$ is a $O(\sqrt{\eps})$-total authentication scheme. This finishes the proof.
\end{proof}

We remark that this proof strategy is heavily inspired by the reduction given in~\cite{hayden2011universal}.

\section{Open problems}

We close with some open problems:
\begin{enumerate}
	\item Is three-wise independence necessary for the Wegman-Carter scheme to be quantumly secure?
	
	\item We showed that the Auth-QFT-Auth scheme achieves total authentication (with outer key leakage) when the inner authentication scheme is instantiated with the Wegman-Carter scheme using threewise-independent hashing. Can one show that Auth-QFT-Auth achieves total authentication when both inner and outer authentication schemes are \emph{arbitrary} authentication schemes secure relative to the computational basis?
	
	\item We showed that the scheme based on unitary 8-design achieves total authentication. Can one show the same for unitary 2-designs? Does the Clifford scheme achieve total authentication?
	
	\item Under what circumstances can the key be reused in any of the protocols presented in this paper, when the receiver rejects the state? For example, we conjecture that in the unitary design protocol, much of the key can be reused.
	
	\item Our security definitions are specific to ``one-time'' authentication schemes (although the key reuse properties allow multiple uses). Are there natural ``many-time'' versions of our security definitions?
	
	\item Does total authentication satisfy \emph{Universally Composable} security (as defined in~\cite{ben2005universal,unruh2010universally})?
	
\end{enumerate}

\paragraph{Acknowledgments.} We thank Debbie Leung for kindly sharing a manuscript of~\cite{hayden2011universal}, and thank Anne Broadbent, Debbie, Patrick Hayden, Fang Song, and Thomas Vidick for useful feedback.

\bibliographystyle{alpha}
\bibliography{qmacs}

\newcommand{\etalchar}[1]{$^{#1}$}
\begin{thebibliography}{KLLNP16}

\bibitem[ABE10]{aharonov2008interactive}
Dorit Aharonov, Michael {Ben-Or}, and Elad Eban.
\newblock Interactive proofs for quantum computations.
\newblock In {\em Proceedings of Innovations in Computer Science}. Tsinghua
  University Press, 2010.

\bibitem[ABW09]{ambainis2009nonmalleable}
Andris Ambainis, Jan Bouda, and Andreas Winter.
\newblock Nonmalleable encryption of quantum information.
\newblock {\em Journal of Mathematical Physics}, 50(4):042106, 2009.

\bibitem[BCG{\etalchar{+}}02]{barnum2002authentication}
Howard Barnum, Claude Cr{\'e}peau, Daniel Gottesman, Adam Smith, and Alain
  Tapp.
\newblock Authentication of quantum messages.
\newblock In {\em The Proceedings of the 43rd Annual IEEE Foundations of
  Computer Science, 2002.}, pages 449--458. IEEE, 2002.

\bibitem[BCG{\etalchar{+}}06]{ben2006secure}
Michael {Ben-Or}, Claude Cr{\'e}peau, Daniel Gottesman, Avinatan Hassidim, and
  Adam Smith.
\newblock Secure multiparty quantum computation with (only) a strict honest
  majority.
\newblock In {\em 2006 47th Annual IEEE Symposium on Foundations of Computer
  Science (FOCS'06)}, pages 249--260. IEEE, 2006.

\bibitem[BDF{\etalchar{+}}11]{Boneh2011qrom}
Dan Boneh, \"{O}zg\"{u}r Dagdelen, Marc Fischlin, Anja Lehmann, Christian
  Schaffner, and Mark Zhandry.
\newblock {Random Oracles in a Quantum World}.
\newblock In {\em Proceedings of ASIACRYPT}, 2011.

\bibitem[Bee97]{beenakker1997random}
Carlo~WJ Beenakker.
\newblock Random-matrix theory of quantum transport.
\newblock {\em Reviews of modern physics}, 69(3):731, 1997.

\bibitem[BGS13]{broadbent2013quantum}
Anne Broadbent, Gus Gutoski, and Douglas Stebila.
\newblock Quantum one-time programs.
\newblock In {\em Advances in Cryptology--CRYPTO 2013}, pages 344--360.
  Springer, 2013.

\bibitem[BHH12]{brandao2012local}
Fernando~GSL Brandao, Aram~W Harrow, and Michal Horodecki.
\newblock Local random quantum circuits are approximate polynomial-designs.
\newblock {\em arXiv preprint arXiv:1208.0692}, 2012.

\bibitem[BHL{\etalchar{+}}05]{ben2005universal}
Michael {Ben-Or}, Micha{\l} Horodecki, Debbie~W Leung, Dominic Mayers, and
  Jonathan Oppenheim.
\newblock The universal composable security of quantum key distribution.
\newblock In {\em Theory of Cryptography Conference}, pages 386--406. Springer,
  2005.

\bibitem[BJ15]{broadbent2015quantum}
Anne Broadbent and Stacey Jeffery.
\newblock Quantum homomorphic encryption for circuits of low t-gate complexity.
\newblock In {\em Annual Cryptology Conference}, pages 609--629. Springer,
  2015.

\bibitem[BW16]{broadbent2016efficient}
Anne Broadbent and Evelyn Wainewright.
\newblock Efficient simulation for quantum message authentication.
\newblock {\em arXiv preprint arXiv:1607.03075}, 2016.

\bibitem[BZ13a]{boneh2013quantum}
Dan Boneh and Mark Zhandry.
\newblock Quantum-secure message authentication codes.
\newblock In {\em Advances in Cryptology--EUROCRYPT 2013}, pages 592--608.
  Springer, 2013.

\bibitem[BZ13b]{boneh2013secure}
Dan Boneh and Mark Zhandry.
\newblock Secure signatures and chosen ciphertext security in a quantum
  computing world.
\newblock In {\em Advances in Cryptology--CRYPTO 2013}, pages 361--379.
  Springer, 2013.

\bibitem[DFNS13]{damgaard2013superposition}
Ivan Damg{\aa}rd, Jakob Funder, Jesper~Buus Nielsen, and Louis Salvail.
\newblock Superposition attacks on cryptographic protocols.
\newblock In {\em Information Theoretic Security}, pages 142--161. Springer,
  2013.

\bibitem[DNS12]{dupuis2012actively}
Fr{\'e}d{\'e}ric Dupuis, Jesper~Buus Nielsen, and Louis Salvail.
\newblock Actively secure two-party evaluation of any quantum operation.
\newblock In {\em Advances in Cryptology--CRYPTO 2012}, pages 794--811.
  Springer, 2012.

\bibitem[GHS15]{gagliardoni2015semantic}
Tommaso Gagliardoni, Andreas H{\"u}lsing, and Christian Schaffner.
\newblock Semantic security and indistinguishability in the quantum world.
\newblock {\em arXiv preprint arXiv:1504.05255}, 2015.

\bibitem[HLM11]{hayden2011universal}
Patrick~M. Hayden, Debbie~W. Leung, and Dominic Mayers.
\newblock The universal composable security of quantum message authentication
  with key recycling.
\newblock {\em In preparation}, 2011.

\bibitem[KLLNP16]{kaplan2016breaking}
Marc Kaplan, Ga{\"e}tan Leurent, Anthony Leverrier, and Mar{\'\i}a
  Naya-Plasencia.
\newblock Breaking symmetric cryptosystems using quantum period finding.
\newblock {\em arXiv preprint arXiv:1602.05973}, 2016.

\bibitem[Low09]{low2009large}
Richard~A Low.
\newblock Large deviation bounds for k-designs.
\newblock In {\em Proceedings of the Royal Society of London A: Mathematical,
  Physical and Engineering Sciences}, volume 465, pages 3289--3308. The Royal
  Society, 2009.

\bibitem[MS09]{milman2009asymptotic}
Vitali~D Milman and Gideon Schechtman.
\newblock {\em Asymptotic Theory of Finite Dimensional Normed Spaces:
  Isoperimetric Inequalities in Riemannian Manifolds}, volume 1200.
\newblock Springer, 2009.

\bibitem[Unr10]{unruh2010universally}
Dominique Unruh.
\newblock Universally composable quantum multi-party computation.
\newblock In {\em Annual International Conference on the Theory and
  Applications of Cryptographic Techniques}, pages 486--505. Springer, 2010.

\bibitem[WC81]{wegman1981new}
Mark~N Wegman and J~Lawrence Carter.
\newblock New hash functions and their use in authentication and set equality.
\newblock {\em Journal of computer and system sciences}, 22(3):265--279, 1981.

\bibitem[Web15]{webb2015clifford}
Zak Webb.
\newblock The clifford group forms a unitary 3-design.
\newblock {\em arXiv preprint arXiv:1510.02769}, 2015.

\bibitem[Zha12]{zhandry2012qprfs}
Mark Zhandry.
\newblock {How to Construct Quantum Random Functions}.
\newblock In {\em Proceedings of the 53rd IEEE Symposium on Foundations of
  Computer Science (FOCS)}, 2012.

\bibitem[Zhu15]{zhu2015multiqubit}
Huangjun Zhu.
\newblock Multiqubit clifford groups are unitary 3-designs.
\newblock {\em arXiv preprint arXiv:1510.02619}, 2015.

\end{thebibliography}

\end{document}